\newtheorem{theorem}{Theorem}
\newtheorem{theorem*}{Theorem}
\newtheorem{corollary}[theorem]{Corollary}
\newtheorem{lemma}[theorem]{Lemma}
\newtheorem{observation}[theorem]{Observation}
\newtheorem{proposition}[theorem]{Proposition}
\newtheorem{definition}[theorem]{Definition}
\newtheorem{claim}[theorem]{Claim}
\newtheorem{fact}[theorem]{Fact}
\theoremstyle{definition}
\newtheorem{problem}{Problem}
\newtheorem{remark}[theorem]{Remark}
\renewcommand{\vec}[1]{\mathbf{#1}}
\newcommand{\GL}{\mathrm{GL}}
\newcommand{\F}{\mathbb{F}}
\newcommand{\E}{\mathbb{E}}
\newcommand{\K}{\mathbb{K}}
\newcommand{\Z}{\mathbb{Z}}
\newcommand{\Q}{\mathbb{Q}}
\newcommand{\C}{\mathbb{C}}
\newcommand{\R}{\mathbb{R}}
\newcommand{\N}{\mathbb{N}}
\newcommand{\M}{\mathrm{M}}
\newcommand{\D}{\mathrm{D}}
\newcommand{\tr}{\mathrm{tr}}
\newcommand{\rk}{\mathrm{rk}}
\newcommand{\ncrk}{\mathrm{ncrk}}
\newcommand{\im}{\mathrm{im}}
\newcommand{\poly}{\mathrm{poly}}
\newcommand{\fA}{\mathfrak{A}}
\newcommand{\codeg}{\mathrm{codeg}}
\newcommand{\cA}{\mathcal{A}}
\newcommand{\cB}{\mathcal{B}}
\newcommand{\vB}{\mathbf{B}}
\newcommand{\cC}{\mathcal{C}}
\newcommand{\sB}{\mathsf{B}}
\newcommand{\sA}{\mathsf{A}}
\newcommand{\bA}{\mathbf{A}}
\newcommand{\QC}{\mathrm{QC}}
\newcommand{\IQC}{\mathrm{IQC}}
\newcommand{\Adj}{\mathrm{Adj}}
\newcommand{\MISet}{\mathrm{MI}}
\newcommand{\MINum}{\mathrm{NMI}}
\newcommand{\MMINum}{\mathrm{MaxNMI}}
\newcommand{\zerovec}{\mathbf{0}}
\newcommand{\rad}{\mathrm{rad}}
\title{
From independent sets and vertex colorings
to
isotropic spaces and isotropic decompositions \\
[0.2em]
\Large
Another bridge between graphs and alternating matrix spaces
%: linear algebraic
%analogues of independent sets and vertex colorings of graphs
%Isotropic spaces for matrix spaces as a linear algebraic analogue of independent
%sets for graphs
}
\author{
Xiaohui Bei\thanks{School of Physical and Mathematical Sciences, Nanyang
Technological University, Singapore. Email address: {\tt xhbei@ntu.edu.sg}.}
\and
Shiteng Chen\thanks{State Key Laboratory of Computer Science, Institute of
Software, Chinese Academy of Sciences, Beijing,
China. Email address: {\tt chenst@ios.ac.cn}. }
\and
Ji Guan \thanks{State Key Laboratory of Computer Science, Institute of Software,
Chinese Academy of Sciences, Beijing,
China. Email address: {\tt guanji1992@gmail.com}. }
\and
Youming Qiao \thanks{Centre for Quantum Software and Information,
 University of Technology Sydney, Australia. Email address: {\tt
 jimmyqiao86@gmail.com}.}
\and
Xiaoming Sun \thanks{Institute of Computing Technology, Chinese Academy of
Sciences and
University of Chinese Academy of Sciences, Beijing, China. Email address: {\tt
sunxiaoming@ict.ac.cn}.}
}
\date{\today}
\begin{document}

\pagenumbering{gobble}  % no page number on first page

\maketitle

\begin{abstract}
In the 1970's, Lov\'asz built a bridge between graphs and alternating
matrix
spaces, in the context of perfect matchings (FCT 1979).
A similar
connection between bipartite graphs and matrix spaces
%also observed by Lov\'asz
%in ibid.,
plays a key role in the recent resolutions of the
non-commutative rank problem (Garg-Gurvits-Oliveira-Wigderson, FOCS 2016; 
Ivanyos-Qiao-Subrahmanyam, ITCS 2017).
In this paper, we lay the foundation for another bridge between graphs and
alternating
matrix spaces, in
the context of
independent sets and vertex colorings. The corresponding structures in
alternating matrix
spaces are \emph{isotropic spaces} and \emph{isotropic decompositions}, both
useful structures in group theory and manifold theory.

We first show that the maximum independent set problem
and the vertex $c$-coloring problem reduce to the maximum isotropic space problem
and the isotropic $c$-decomposition problem, respectively.
Next, we show that several topics and results about independent sets and vertex
colorings have natural correspondences for isotropic spaces and decompositions.
These include algorithmic problems, such as the maximum independent set problem
for bipartite graphs, and exact exponential-time algorithms for the chromatic
number, as well as mathematical questions, such as the number of maximal
independent
sets, and the relation between the
maximum degree and the chromatic number.
These connections lead to
%a host of new problems and
new interactions between graph theory and algebra. Some results have
concrete applications to
group theory and manifold theory, and we initiate a variant of these
structures in the context of quantum information theory. Finally, we propose
several open questions
for further exploration.

\center
{\large \it Dedicated to the memory of Ker-I Ko}
\end{abstract}

\newpage

\tableofcontents

\newpage
\pagenumbering{arabic}     % reset page number to 1

\section{Introduction}

\subsection{Between graphs and matrix spaces: some known
bridges}\label{subsec:known_connection}

\paragraph{The bridge between perfect matchings and full-rank matrices.}
It is well-known that some
graph-theoretic problems reduce to certain
problems about linear spaces of matrices. A classical example, tracing back to
Tutte \cite{Tut47}, and then more systematically examined by Lov\'asz
\cite{Lov79,Lov89}, concerns perfect matchings. 

Let $\F$ be a field, and $[n]:=\{1, \dots, n\}$. Let
$G=([n],
E)$ be a simple and undirected graph, so $E$ can be viewed as a subset of $\{\{i,
j\}: i, j\in[n], i\neq j\}$. For $n\in\N$ and $\{i, j\}$ where $i, j\in[n]$, $i<
j$, the
elementary
alternating\footnote{An $n\times n$
matrix $A$ over $\F$ is \emph{alternating}, if for
any $v\in \F^n$, $v^tAv=0$. An alternating matrix is always skew-symmetric (i.e. 
$A^t=-A$), and a skew-symmetric matrix is also alternating over fields of 
characteristic not $2$.}
%Over fields of characteristic not $2$, alternating
%matrices are the same as skew-symmetric matrices. Otherwise, alternating matrices 
%form a proper subclass of skew-symmetric matrices. }
matrix $A_{i,j}$ of size $n\times n$ is the matrix with the $(i, j)$th entry being
$1$, the $(j, i)$th
entry being $-1$, and
the rest entries being $0$. Let $\cA_G$ be the linear space of alternating matrices
spanned by
$A_{i,j}$,
$\{i, j\}\in E$. Then when the field is large enough, $G$
has a perfect matching if and only if $\cA_G$ contains a full-rank matrix.
%
%We call linear spaces of matrices
%as \emph{matrix spaces}, and write $\cB\leq \M(n, \F)$ to denote that $\cB$ is a
%matrix space in $\M(n, \F)$, the linear space of
%$n\times n$ matrices over a field $\F$.

A similar construction for bipartite graphs is also classical.
%A classical example is the
%perfect matching problem for bipartite graphs and general (simple and undirected)
%graphs,
%as observed by Lov\'asz in the 1970's \cite{Lov79}.
Let $G=(L\cup R, E)$ be a bipartite graph where $L=R=[n]$, so $E$ can be viewed as 
a subset of $[n]\times [n]$. For $n\in\N$ and $i,
j\in [n]$, the
elementary matrix $E_{i,j}$ of size $n\times n$ is the matrix with
the $(i, j)$th entry being $1$, and the rest entries being $0$. Let $\cB_G$ be the
linear
space of matrices spanned by $E_{i,j}$, $(i, j)\in E$. Then when the field is
large enough, $G$ has a perfect
matching if and only if $\cB_G$ contains a full-rank matrix.

As noted by Lov\'asz \cite{Lov79}, these observations give efficient
randomized
algorithms for deciding the existence of perfect matchings on bipartite graphs and
graphs over a large enough $\F$ via the celebrated Schwartz-Zippel lemma 
\cite{Zip79,Sch80}.
%, the Schwartz-Zippel lemma gives a simple
%algorithm for computing the maximum rank over all matrices in a(n alternating)
%matrix space.
%The most interesting consequence of this observation is perhaps the following.
Furthermore, because the determinant polynomial can be
evaluated efficiently in parallel \cite{Ber84,Chi85}, these are actually 
randomized NC algorithms. 
%to derandomize this
%algorithm for linear spaces of matrices of the form $\cB_G$ or $\cA_G$ would
%imply NC algorithms for the perfect matching problems on bipartite graphs and
%graphs,

This work of Lov\'asz has inspired several prominent results, including randomized 
NC algorithms for constructing perfect matchings \cite{KUW86,MVV87}, and the 
recent breakthrough of quasi-NC algorithms for perfect matchings on bipartite 
graphs \cite{FGT16} and
on general graphs \cite{ST17}.
%\footnote{The works of 
%\cite{FGT16,ST17} actually
%derandomize the isolation lemma \cite{MVV87} in the perfect matching setting,
%which is stronger as it concerns the search version of the problem.}
Furthermore, derandomizing the corresponding algorithm for \emph{general} linear 
spaces of
matrices -- not necessarily those of the form $\cB_G$ or $\cA_G$ -- is now known 
as the symbolic determinant identity testing problem, and turns out to 
be of fundamental significance in complexity theory, as that would imply strong 
circuit lower bounds
which are considered to be beyond current techniques \cite{KI04,CIKK15}.

In the following, we shall call linear spaces of (alternating) matrices as
(alternating) matrix spaces. For a
field $\F$, we use $\M(s\times t, \F)$ to denote the linear space of all $s\times 
t$
matrices over $\F$, and write $\cB\leq \M(s\times t, 
\F)$ to denote that $\cB$ is a matrix
space in $\M(s\times t, \F)$. Let $\M(n, \F):=\M(n\times n, \F)$, and $\Lambda(n, 
\F)$ be
the linear space of all $n\times n$ alternating matrices over $\F$.

%In fact, such a linear algebraic formulation was already used by Tutte in
%the first characterization of graphs with no perfect matchings \cite{Tut47}.

\paragraph{The bridge between shrunk subsets and shrunk subspaces.} For bipartite
graphs, a structure closely related to perfect matchings is the following. Given a
bipartite graph $G=(L\cup R, E)$ where $L=R=[n]$, we say that a subset $S\subseteq
L$ is a
shrunk\footnote{We call $S$ to be ``shrunk'' instead of ``shrinking'', as we
think of the bipartite graph $G$ as shrinking the set $S$.} subset of $G$, if
$|S|>|N(S)|$ where $N(S)$ is the set of neighbours of $S$ in
$R$. The celebrated Hall's marriage theorem \cite{Hal35} says that $G$ has a 
perfect matching if and only if it does not have a shrunk subset.

On the matrix space side, it is then natural to define the so-called shrunk
subspaces. Specifically, given a matrix space $\cB\leq \M(n, \F)$, a subspace
$U\leq \F^n$ is a shrunk subspace of $\cB$, if $\dim(U)>\dim(\cB(U))$ where
$\cB(U):=\langle \cup B(U) : B\in\cB\rangle$, and $B(U)$ denotes the image of $U$
under $B$.

As in the perfect matching case, a bipartite graph $G$ has a shrunk subset if and
only if $\cB_G$ has a shrunk subspace \cite{Lov89}. However, for general matrix 
spaces, the natural analogue
of Hall's theorem, namely a matrix space contains full-rank matrices if and only
if it has no shrunk subspaces, does not hold, as evidenced by the space of all
$3\times 3$ alternating matrices. (The only if direction
trivially holds,
though.) 
Therefore, the bridge between shrunk subsets and shrunk subspaces is
different 
from the one between perfect matchings and full-rank matrices. 

The problem of testing
whether a matrix space has a shrunk subspace arises naturally from
several mathematical and computational displines, including algebraic complexity,
non-commutative algebra, invariant theory, and analysis
\cite{GGOW16,IQS17,GGOW17}. Not surprisingly then, this problem has had several 
names. We adopt the \emph{non-commutative rank problem} which seems widely 
used now, and refer an interested reader to \cite{GGOW16,IQS17} for the 
origin of this name.

With all these motivations, the non-commutative rank problem recently
received considerable attention, and substantial progress has been made.
First raised by Cohn \cite{Cohn75} four
decades ago in the
study of free fields, it was more recently reached at by Mulmuley
in
the context of
derandomizing the Noether's Normalization Lemma \cite{Mul12,Mul17}, and also by
Hrube\v{s}
and
Wigderson in the context of non-commutative arithmetic circuits with divisions
\cite{HW15}. Only known to be in PSPACE before 2015 \cite{CR99}, this problem was
shown
to be in P over the rational number field \cite{GGOW16} and over any
field \cite{IQS17}.

The techniques supporting the solutions to the non-commutative rank problem are
reminiscent of the corresponding techniques for the perfect
matching problem on bipartite graphs. In \cite{GGOW16}, it is the scaling
algorithm
\cite{Sin64,LSW00}, generalized to the quantum operator setting \cite{Gurvits}. In
\cite{IQS17}, it is the classical augmenting path algorithm, generalized to the
matrix space setting \cite{IKQS15,IQS16}. Ingredients from invariant theory
are also crucial. For \cite{GGOW16}, Garg et al. needed
% -- at least in their first
%algorithm
%analysis\footnote{It was later realized that this was not
%actually needed.} --
the exponential upper bound
on generating the ring of matrix semi-invariants \cite{derksen_bound}. For
\cite{IQS17}, Ivanyos et al. need the polynomial upper bound \cite{DM2}, which in
turn relies crucially on the regularity lemma developed in \cite{IQS16}.

\subsection{Between graphs and matrix spaces: a new bridge}

In this paper we lay the foundation for yet another bridge between graphs and 
matrix spaces. We focus
on undirected simple graphs, hence it is natural, as Tutte and Lov\'asz did with 
perfect
matchings, to work with alternating matrix spaces. 
We start from independent sets and vertex 
colorings, two central structures in graph theory with
numerous results from various motivations \cite{JT95,diestel}. By identifying
analogues of them in the alternating matrix space
setting, we arrive at isotropic spaces and isotropic decompositions, which we 
define now.
\begin{definition}\label{def:main}
Let $\cA\leq \Lambda(n, \F)$ be an alternating matrix space. A subspace $U\leq
\F^n$ is an \emph{isotropic space} of $\cA$, if for any $u, u'\in U$, and any $A\in
\cA$, we have $u^tAu'=0$. For $c\in \N$, an \emph{isotropic $c$-decomposition} of
$\cA$ is a
direct sum
decomposition of $\F^n$ into $c$ nonzero subspaces $U_1\oplus \dots\oplus U_c$,
where every $U_i$ is an isotropic space.
\end{definition}

Recall that for a graph $G=([n], E)$, an independent set of $G$ is a subset
$S\subseteq
[n]$ such that for any $i, j\in S$, there is no edge from $E$ connecting these two
vertices.
%the induced subgraph of $G$ on $S$ is the empty graph.
A vertex
$c$-coloring of $G$ is a partition of the vertex set into $c$ independent sets.
Therefore, the definitions of isotropic spaces and
isotropic decompositions do
mimic those of independent sets and vertex colorings.
%\footnote{For a definition of
%isotropic spaces that literally uses the induced subgraph type structure, see
%Section~\ref{sec:prel}.}
It is then natural to
introduce the following definitions and the corresponding algorithmic problems.
\begin{definition}
Let $\cA\leq \Lambda(n, \F)$. The \emph{isotropic number} of $\cA$, denoted as
$\alpha(\cA)$,
is the maximum $d\in\N$ such that $\cA$ has an isotropic space of dimension $d$. 
The \emph{isotropic
decomposition number}, denoted as $\chi(\cA)$, is the minimum $c\in \N$ such that
$\cA$ admits an isotropic $c$-decomposition.

Given $d\in \N$ and a linear basis of $\cA\leq \Lambda(n, \F)$, the \emph{maximum
isotropic space problem} asks to
decide whether $\alpha(\cA)\geq d$. Given $c\in \N$ and a linear basis of $\cA\leq
\Lambda(n, \F)$, the \emph{isotropic
$c$-decomposition problem}
asks to decide whether $\chi(\cA)\leq c$.
\end{definition}

Note that $\alpha(\cdot)$ and $\chi(\cdot)$ are used to denote the independent
number and the chromatic number of graphs \cite{diestel}, and these choices are
deliberate. Also note that for any $\cA\leq \Lambda(n, \F)$, we have
$\alpha(\cA)\geq 1$,
and $\chi(\cA)\leq n$. Indeed, due to the
alternating condition, any $\cA\leq \Lambda(n, \F)$ enjoys the property that any
$1$-dimensional subspace of $\F^n$ is an isotropic
space of $\cA$. It follows that any direct sum decomposition of $\F^n$ into $n$
dimension-$1$ subspaces is an isotropic $n$-decomposition of $\cA$. This property
corresponds
nicely to that for any undirected simple
graph, every
single vertex is an independent set. On the other hand, symmetric matrix
spaces do not satisfy this property in general. Therefore, this small but
pleasant
coincidence
suggests that working with alternating matrix spaces is a natural choice in this
setting.
%Such a pleasant coincidence was already witnessed in the perfect matching
%scenario, where we have the pfaffian

Our first result follows
what Lov\'asz did with perfect matchings, and provides a first indication on the
new connection. Recall that given a graph
$G=([n], E)$, we can associate an alternating matrix spaces $\cA_G\leq \Lambda(n,
\F)$, spanned by those
elementary alternating matrices
$A_{i, j}$ with $\{i, j\}\in E$.
\begin{theorem}\label{thm:reduction}
Let $G$ and $\cA_G$ be as above. Then we have
\begin{enumerate}
\item $G$ has a size-$s$ independent set if and only if $\cA_G$ has a
dimension-$s$ isotropic space. In particular, $\alpha(G)=\alpha(\cA_G)$.
\item $G$ has a vertex $c$-coloring if and only if $\cA_G$ has an isotropic
$c$-decomposition. In particular, $\chi(G)=\chi(\cA_G)$.
\end{enumerate}
\end{theorem}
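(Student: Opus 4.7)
My plan is as follows. For both parts, the ``easy'' direction is a direct consistency check, while the nontrivial direction extracts a combinatorial subset from a subspace via an \emph{information set} argument. Call $I \subseteq [n]$ an information set for a subspace $W \leq \F^n$ if $|I| = \dim W$ and the coordinate projection $\pi_I : W \to \F^I$ is bijective; such an $I$ always exists, since any basis of $W$, viewed as a $(\dim W) \times n$ matrix, has a nonzero maximal minor.

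For part (1), given an independent set $S$ of size $s$, I set $U := \linspan\{e_i : i \in S\}$. Since $u^t A_{k,l} u' = u_k u'_l - u_l u'_k$ and every $\{k,l\} \in E$ has at least one endpoint outside $S$, the space $U$ is isotropic of dimension $s$. Conversely, if $U \leq \F^n$ is isotropic with $\dim U = s$, I pick any information set $I$ for $U$ and claim $I$ is independent in $G$: if an edge $\{i,j\} \in E$ had both endpoints in $I$, then $A_{i,j} \in \cA_G$, and bijectivity of $\pi_I$ lets me choose $u, u' \in U$ with $\pi_I(u) = e_i$ and $\pi_I(u') = e_j$; then $u^t A_{i,j} u' = u_i u'_j - u_j u'_i = 1$, contradicting isotropy. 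Hence $\alpha(G) = \alpha(\cA_G)$.

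For part (2), the forward direction is immediate: a $c$-coloring $[n] = V_1 \sqcup \dots \sqcup V_c$ yields the isotropic decomposition $\F^n = \bigoplus_i \linspan\{e_j : j \in V_i\}$ by part (1). For the reverse direction, given an isotropic decomposition $\F^n = U_1 \oplus \dots \oplus U_c$ with $d_i := \dim U_i$, the crux is to produce information sets $I_i$ for the $U_i$'s that \emph{partition} $[n]$. To do this, I concatenate ordered bases of the $U_i$'s as rows of an $n \times n$ matrix $B$; the direct sum hypothesis forces $B$ to be invertible. Let $R_1, \dots, R_c$ denote the consecutive row blocks. By the generalized Laplace expansion along $R_1, \dots, R_c$,
\[
\det B \;=\; \sum_{(I_1, \dots, I_c)} \pm \prod_{i=1}^{c} \det B[R_i, I_i],
\]
where the sum ranges over ordered partitions of $[n]$ into blocks of sizes $d_1, \dots, d_c$. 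Since $\det B \neq 0$, some summand is nonzero, giving a partition $[n] = I_1 \sqcup \dots \sqcup I_c$ in which each $I_i$ is an information set for $U_i$. Applying the argument of part (1) to each $(U_i, I_i)$ shows every $I_i$ is independent in $G$, and together they form a vertex $c$-coloring, proving $\chi(G) = \chi(\cA_G)$.

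The main obstacle is the reverse direction of part (2): a term-by-term application of part (1) to each $U_i$ merely yields independent sets that a priori might overlap or fail to cover $[n]$. The role of the generalized Laplace expansion is precisely to convert the direct sum hypothesis $\F^n = \bigoplus_i U_i$ into the simultaneous existence of disjoint, covering information sets, which is the combinatorial bridge needed to recover the color classes of a proper vertex coloring.
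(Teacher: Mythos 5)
Your proof is correct and takes essentially the same approach as the paper: the forward directions are identical, and the backward directions both extract combinatorial index sets from subspaces via a nonzero-maximal-minor argument (part 1) and the generalized Laplace expansion (part 2, which the paper isolates as Lemma~\ref{lem:laplace}). Your ``information set'' framing and the explicit choice of $u, u'$ with $\pi_I(u)=e_i$, $\pi_I(u')=e_j$ is a mildly more local phrasing of the paper's observation that $w_k w_\ell^t = w_\ell w_k^t$ forces linear dependence of the corresponding rows, but the substance is the same.
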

The proof is in Section~\ref{sec:reduction}. Theorem~\ref{thm:reduction}
demonstrates that the maximum isotropic space problem
and the isotropic decomposition problem are genuine generalizations\footnote{Note 
that for the maximum isotropic space problem
and the isotropic decomposition problem, we consider all alternating matrix
spaces, not necessarily of the form $\cA_G$ coming from a graph $G$.} of the
maximum independent set problem and the vertex $c$-coloring problem, respectively.
It also implies the following.
\begin{corollary}\label{cor:reduction}
The maximum isotropic space problem and the isotropic $3$-decomposition problem
for alternating matrix spaces
are NP-hard.
\end{corollary}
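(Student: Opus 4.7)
The plan is to obtain both NP-hardness results as immediate consequences of Theorem~\ref{thm:reduction}, by exhibiting the map $G \mapsto \cA_G$ as a polynomial-time many-one reduction from the corresponding graph problems. The maximum independent set problem and the vertex $3$-colorability problem are both classical NP-hard problems (in fact, on the Karp list), so it suffices to verify that the reduction is polynomial-time computable and preserves the decision answer.

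First I would note that given a graph $G = ([n], E)$, the alternating matrix space $\cA_G$ can be presented by the explicit linear basis $\{A_{i,j} : \{i,j\} \in E\}$. Each $A_{i,j}$ is an $n \times n$ matrix with only two nonzero entries, so writing down this basis takes time polynomial in $n$ and $|E|$; in particular the reduction is logspace. This handles the computational side of the reduction for both problems simultaneously.

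Next, for the maximum isotropic space problem, I would reduce from maximum independent set: on input $(G, s)$, output $(\cA_G, s)$. By Theorem~\ref{thm:reduction}(1), $G$ has an independent set of size $s$ if and only if $\cA_G$ has an isotropic space of dimension $s$, so this is a correct many-one reduction from an NP-hard problem. For the isotropic $3$-decomposition problem, I would reduce from the vertex $3$-coloring problem: on input $G$, output $\cA_G$. By Theorem~\ref{thm:reduction}(2), $G$ is $3$-colorable if and only if $\cA_G$ admits an isotropic $3$-decomposition, again yielding a many-one reduction from an NP-hard problem.

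There is no real obstacle here; the content of the corollary is entirely absorbed by Theorem~\ref{thm:reduction}. The only small subtlety worth a remark is that the reduction needs to be well-defined over the field used to specify the matrix space problem, but since each $A_{i,j}$ has entries in $\{-1,0,1\}$, the basis is valid over any field $\F$, so the NP-hardness holds uniformly (for any field in which the problem is stated over $\Q$, or over any field of fixed characteristic that the computational model supports).
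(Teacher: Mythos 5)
Your proposal is correct and matches the paper's (implicit) argument exactly: the paper states Corollary~\ref{cor:reduction} as an immediate consequence of Theorem~\ref{thm:reduction}, with the map $G \mapsto \cA_G$ serving as the polynomial-time many-one reduction from maximum independent set and vertex $3$-coloring. Your remarks about the basis size and the $\{-1,0,1\}$ entries making the reduction field-independent are accurate and slightly more careful than what the paper spells out.
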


Emboldened by Theorem~\ref{thm:reduction}, we propose to view isotropic spaces 
and decompositions as linear algebraic
analogues of independent sets and vertex colorings, and study these two structures 
from the perspectives of graph theory and algorithms. This leads to natural and 
interesting mathematical and algorithmic problems, whose solutions 
bring together strategies, techniques, and results from several areas, including 
graph theory, algorithm design, computer algebra, and algebraic complexity. 
We regard these results as laying the foundation of a new bridge between graphs 
and alternating matrix spaces. 
%These 
%demonstrates that this viewpoint fosters new interactions among different 
%disciplines.

%investigate into the
%structures of
%isotropic spaces and isotropic decompositions, by viewing them as linear algebraic
%analogues of independent sets and vertex colorings. 
%It turns out that this leads
%to
%very interesting problems, and to solve them requires a combination of ideas and
%techniques from graph theory, computer algebra, and algebraic complexity. 

While our investigation started with an analogy, isotropic spaces and 
decompositions are actually classical notions, with natural 
interpretations in group theory and
manifold theory. Therefore, some of our results have concrete applications 
to these areas. We also initiate a variant of this theory 
in quantum information, and find an interesting information theoretic 
interpretation of isotropic spaces in quantum error correction. These demonstrate 
the usefulness of 
this new 
bridge. 

%These
%results then have concrete applications to group theory and
%manifold theory, and we
%initiate a variant of the theory in the context of quantum
%information.
%Therefore, the analogues between independent sets and vertex
%colouring, and isotropic spaces and isotropic decompositions, have been
%They can be also adapted naturally to the quantum information
%setting.

Before we go on to detailed descriptions of our results, we set up some notation.
We use $\F_q$, $\Q$, $\R$, and $\C$ to denote the
finite field with $q$ elements,
the rational number field, the real number field, and the complex number field,
respectively. Elements in $\F^n$ are \emph{column} vectors. In algorithms,
subspaces of $\F^n$ and $\Lambda(n, \F)$ are represented by linear bases. We may
write $\Lambda(n, \F_q)$ as $\Lambda(n, q)$ for convenience.  For more
details on the computation model, see Section~\ref{subsec:algo_model}.

We
also recall some
well-known graph-theoretic and/or algorithmic results \cite{diestel}, which will
be useful in
seeing the analogues.
%We remind the reader though, that the proofs for the corresponding results for
%alternating matrix spaces do not easily follow by carrying out the analogues.
\begin{enumerate}
\item Whether a graph is bipartite can be tested in deterministic polynomial time.
\item On bipartite graphs, the maximum independent set problem is in P.
\item Any $n$-vertex and $m$-edge graph has an independent set of size $\geq
\frac{n^2}{2m+n}$ \cite{Tur41}\footnote{This follows from Tur\'an's celebrated
result in extremal graph theory, which is usually
stated for cliques, and implies this by simply taking the complement
graph.}.
%\item This fact is trivial. An $n$-vertex graph has an independent set of size
%$\geq 2$ if and only if it is not the complete graph.
\item The number of maximal independent sets on an $n$-vertex graph is $\leq
3^{\frac{n}{3}}$ \cite{MM65}.
\item The chromatic number of an $n$-vertex graph can be computed in time
$(1+3^{\frac{1}{3}})^n\cdot \poly(n)$ \cite{Law76}\footnote{This classical result 
of
Lawler
was from the 1970's, and the
current status of the art is $2^n\cdot \poly(n)$ \cite{BHK09}.}.
%\item The chromatic number of a graph is upper bounded by the maximum degree plus
%one.
\end{enumerate}
All the above results will be found to have natural correspondences
in the alternating matrix space setting. The reader may find some fun in trying to
formulate the correspondences by him/herself.
%Among these results,
%some will be generalized, some will be
%seen to form a contrast, and some trivial ones will become no longer trivial.

%We
%say that two bilinear maps $\phi_1, \phi_2:\Z_p^n\times \Z_p^n\to \Z_p^m$ are
%pseudo-isometric, if there exist $A\in\GL(n, p)$, $B\in \GL(m, p)$, such that
%$\phi_1=B\circ \phi_2\circ A$. Baer's correspondence \cite{Bae38} sets up

%We have just introduced isotropic spaces and isotropic decompositions for
%alternating matrix spaces as linear algebraic analogues of independent sets and
%vertex colorings of graphs. As a first step to justify this analogue, we show
%that
%the maximum independent set problem and the vertex $c$-coloring problem reduce to
%the maximum isotropic space problem and the isotropic $c$-decomposition problem,
%respectively. Before we present more results which hopefully justify that this
%analogue

\subsection{Linear algebraic analogues of bipartite testing and maximum 
independent set on bipartite graphs
%Isotropic $2$-decompositions, maximum isotropic spaces in the
%bipartite case, and isotropic spaces of dimension $\geq 2$
}\label{subsec:2-decomp}

After
Corollary~\ref{cor:reduction}, the isotropic $2$-decomposition problem is of
particular interest, as the vertex $2$-coloring problem just asks whether a graph
is bipartite, which can be tested efficiently by breadth-first search.
A
moment's thought sugggests that a breadth-first search type idea seems not
applicable to the isotropic $2$-decomposition problem (see also
Appendix~\ref{app:breadth}).

Fortunately, it turns out
that this problem has been studied in computer algebra
over finite fields by Brooksbank, Maglione, and Wilson in \cite{BMW17}.
%, in the
%context of computing with $p$-groups.
Their
strategy
can be readily applied to $\R$
and $\C$, using some ingredients from \cite{IQ18}.
%, and a variant of this problem over the complex
%number field $\C$ has been studied in quantum information
%\cite{guan2018decomposition}.
%So adapting ideas from there, we do have the
%following. Let $\R$ denote the real number field.
\begin{theorem}[{\cite[Theorem 3.6]{BMW17}, \cite{IQ18}}]\label{thm:2-decomp}
%Let $\F$ be either $\F_q$ with $q$ odd, $\R$, or $\C$. Then
%there exists a
%randomized efficient algorithm that solves the isotropic $2$-decomposition problem
%over $\F$.
The isotropic $2$-decomposition problem can be solved in randomized polynomial
time over $\F_q$ with $q$ odd, and in deterministic polynomial time over $\R$ and
$\C$.
Furthermore, over $\F_q$ with $q$ odd, the algorithm also outputs the linear bases
of the two
subspaces in an isotropic $2$-decomposition.
\end{theorem}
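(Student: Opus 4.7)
The plan is to reformulate the existence of an isotropic $2$-decomposition in terms of a special idempotent in the adjoint $*$-algebra of $\cA$, and then invoke structural algorithms for associative $*$-algebras. First, a direct sum decomposition $\F^n = U_1 \oplus U_2$ is recorded by a pair of complementary projections $e_1, e_2$ with $e_1 + e_2 = I$, $e_i^2 = e_i$, and $U_i = \im(e_i)$. A short computation using $A^t = -A$ shows that $U_1$ and $U_2$ are both totally isotropic for $\cA$ if and only if $e_1^t A + A e_1 = A$ for every $A \in \cA$. Introducing the adjoint algebra
$$\Adj(\cA) = \{(X, Y) \in \M(n, \F)^2 : X^t A = A Y \text{ for every } A \in \cA\},$$
which carries the involution $*\colon (X, Y) \mapsto (Y, X)$, the previous condition translates into finding an idempotent $e = (e_1, I - e_1) \in \Adj(\cA)$ satisfying $e + e^* = (I, I)$ with $0 \neq e_1 \neq I$, that is, a nontrivial symmetric idempotent in this $*$-algebra.

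The second step is algorithmic. I would compute a basis of $\Adj(\cA)$ by solving the defining linear system, then compute its Wedderburn--Malcev decomposition $\Adj(\cA) = S \ltimes J$ where $J$ is the Jacobson radical and $S$ a semisimple complement. Since the involution preserves $J$, it descends to $S$ and further decomposes $S$ as a product of simple $*$-algebras whose simple components are either preserved or swapped in pairs by $*$. Over $\F_q$ with $q$ odd, these structural computations admit efficient randomized algorithms (R\'onyai, Friedl--R\'onyai, de Graaf--Ivanyos), while over $\R$ and $\C$ they can be carried out deterministically using the machinery of \cite{IQ18}.

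The main obstacle, and the heart of the argument, is to decide within each simple $*$-factor of $S$ whether it contains a symmetric idempotent $e$ with $e + e^* = 1$, and to produce one explicitly when it does. Here I would appeal to the classification of simple $*$-algebras over the relevant field (matrix algebras over a division ring with orthogonal, symplectic, or unitary involution), which reduces the question to decomposing a Hermitian or quadratic form as an orthogonal sum with prescribed dimensions. Over $\F_q$ with $q$ odd, this amounts to diagonalizing quadratic or Hermitian forms and extracting square roots, both feasible in randomized polynomial time; over $\R$ and $\C$, the signature and the required orthogonal decomposition are computable deterministically. Factors that are swapped in pairs contribute a symmetric idempotent automatically by projecting onto one of the swapped simple summands. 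Once suitable symmetric idempotents are located in every factor, they assemble into an element of $S$ with $e + e^* = 1$, which is then lifted through $J$ by the standard idempotent-lifting procedure (tractable because the involution is compatible with the radical filtration). Projecting to the first coordinate recovers $e_1$ and hence the decomposition $\F^n = \im(e_1) \oplus \ker(e_1)$; if some simple factor contains no symmetric idempotent with $e + e^* = 1$, the algorithm correctly reports that $\cA$ admits no isotropic $2$-decomposition.
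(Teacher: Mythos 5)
Your proposal takes essentially the same route as the paper's (which follows \cite{BMW17,IQ18}): characterize an isotropic $2$-decomposition as an idempotent $e$ in the adjoint $*$-algebra satisfying $e^* = 1 - e$, compute the Wedderburn structure of that $*$-algebra via R\'onyai/Ivanyos-type algorithms, handle the exchange-type and simple-with-involution factors by classifying forms, and lift through the radical. Two small notational points: the paper calls $e^* = 1-e$ a \emph{hyperbolic} idempotent (``symmetric'' usually means $e^* = e$), and it first factors out $\rad(\cA)$ so that the adjoint can be taken element-wise rather than as pairs $(X,Y)$, whereas your pair formulation sidesteps that reduction at no cost — these are cosmetic differences, not gaps.
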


While a proof for $\F_q$ was already sketched in \cite{BMW17}, we still give an
exposition of this proof in Section~\ref{sec:star}. Besides indicating how to
handle
$\R$
and $\C$, we wish to give some reader a
flavor of how the so-called $*$-algebra technique, pioneered by Wilson
\cite{Wil09,Wil09b} in computer algebra, is applied to this setting.
%The
%$*$-algebra technique was
%in the context of computing with $p$-groups of class $2$.
This technique was recently shown to
be useful in polynomial identity
testing and multivariate cryptography %, by Ivanyos and the fourth author
\cite{IQ18}.

%, as well as to address some salient details.
Theorem~\ref{thm:2-decomp} and its
proof reveal that the
isotropic spaces and the isotropic decompositions do have connections with, and
implications to, other disciplines, just like the case of non-commutative ranks.
Furthermore, a quantum variant of the theory can be developed, and the
corresponding isotropic $2$-decomposition problem can be solved efficiently
using quantum information theoretic techniques (see Section~\ref{sec:quantum}).
As mentioned above, the techniques used to solve the non-commutative rank problem
also have their roots in algebra \cite{IQS17} and quantum information
\cite{GGOW16}. Perhaps it is not so coincidental that techniques from these areas
are useful again.

In fact, the non-commutative rank problem arises naturally in our context.
%A
%well-known fact in graph theory is that, the maximum independent set
%problem for bipartite graphs can be solved efficiently.
Since a bipartite graph is just a graph
admitting a vertex $2$-coloring, it is natural to make the following definition.
\begin{definition}
An alternating matrix
space is \emph{bipartite}, if it admits an isotropic $2$-decomposition.
\end{definition}

A well-known fact in graph theory is that, on bipartite graphs, the maximum
independent set problem can
be solved in deterministic polynomial time, through a reduction to the minimum
vertex cover problem. The latter problem is equivalent to the maximum matching
problem via K\H{o}nig's
theorem. % \cite{Konig}.
It is then interesting to examine whether bipartite
alternating matrix spaces admit an efficient algorithm for the maximum isotropic
space problem. It turns out that the isotropic number of a bipartite alternating
matrix space is closely related to the non-commutative rank of some matrix space, 
as we shall see now.

We have mentioned the decision version of the non-commutative rank problem in
Section~\ref{subsec:known_connection}. We now define the non-commutative rank in a
slightly more general setting. Given $\cB\leq \M(s\times t, \F)$, its
non-commutative rank is $\ncrk(\cB):=s+t-\max\{\dim(U)+\dim(V) :
\forall u\in U, v\in V, B\in \cB, u^tBv=0\}$ \cite{FR04}. Note that the recent 
works \cite{GGOW16,IQS17} used a slightly different formulation in the setting 
$s=t$.
%Note that in 
%\cite{GGOW16,IQS17} 
%When $s=t$, this definition of 
%non-commutative rank coincides with that in
%\cite{GGOW16,IQS17} (see Appendix~\ref{app:ncrk})\ynote{Cite Fortin-Reutenauer 
%and 
%get rid of the corresponding appendix.}.
%the maximum $d\in \N$, such that
%there exists $U\leq \F^n$ with $\dim(U)-\dim(\cB(U))=d$.

Given a bipartite $\cA\leq\Lambda(n, \F)$, up to isometry (i.e. the action of
$T\in \GL(n, \F)$ sending $\cA$ to $T^t\cA T:=\{T^tAT : A\in \cA\}$), every
$A\in \cA$ is of the form $\begin{bmatrix} \zerovec & B \\ -B^t &
\zerovec\end{bmatrix}$, where $B$ is of size $s\times t$ (see
Section~\ref{sec:basic}). Let $\cB\leq \M(s\times
t, \F)$ be the space of such $B$ arising from some $A\in \cA$. Then we have:
%It turns out
%$\alpha(\cA)$ and the non-commutative rank of $\cB$ are closely related!
%=\{B\in \M(s\times t, \F) : \begin{bmatrix} \zerovec & B \\ -B^t &
%\zerovec\end{bmatrix}\in \cA\}$.
%While a straightforward analogue of the independent set vs vertex
%cover relation fails (see Appendix~\ref{app:iso_cover}),
%It turns out that answer
%is yes over $\F_q$, by a reduction to the
%non-commutative rank problem!
%We then show the following.
\begin{theorem}\label{thm:maximum_bipartite}
Let $\cA\leq \Lambda(n, \F)$ and $\cB\leq \M(s\times t, \F)$ be from above. Then
$\alpha(\cA)=n-\ncrk(\cB)$.
%Let $\F$ be either $\F_q$ with $q$ odd, $\R$, or $\C$. The maximum independent set
%problem for bipartite alternating matrix spaces reduces to the non-commutative
%rank problem for matrix spaces.
\end{theorem}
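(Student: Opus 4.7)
The plan is to show the two inequalities $\alpha(\cA)\ge n-\ncrk(\cB)$ and $\alpha(\cA)\le n-\ncrk(\cB)$ directly from the block form of $\cA$. Using the isotropic $2$-decomposition that witnesses bipartiteness, write $\F^n=\F^s\oplus\F^t$ so that every $A\in\cA$ has the stated block shape with off-diagonal block $B\in\cB$. For a pair of vectors $w=(u,v), w'=(u',v')\in\F^s\oplus\F^t$, a direct block computation gives
\[
w^t A w' \;=\; u^t B v' - u'^t B v.
\]
This identity is the bridge between isotropy of subspaces of $\F^n$ for $\cA$ and the ``simultaneous annihilation'' condition appearing in the definition of $\ncrk(\cB)$.

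For the easy direction, given $U\le\F^s$ and $V\le\F^t$ with $u^t B v=0$ for all $u\in U,\,v\in V,\,B\in\cB$, take $W=(U\oplus 0)\,+\,(0\oplus V)\le\F^n$. For any $w=(u_1,v_1), w'=(u_2,v_2)\in W$ with $u_i\in U$ and $v_i\in V$, both $u_1^t B v_2$ and $u_2^t B v_1$ vanish, so $w^t A w'=0$. Hence $W$ is isotropic of dimension $\dim(U)+\dim(V)$, giving $\alpha(\cA)\ge n-\ncrk(\cB)$.

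For the other direction, let $W\le\F^n$ be an isotropic subspace of dimension $\alpha(\cA)$, and set
\[
U \;:=\; \{u\in\F^s : (u,0)\in W\}, \qquad V \;:=\; \pi_2(W),
\]
where $\pi_2:\F^n\to\F^t$ is the second projection. By construction $U$ is exactly the kernel of $\pi_2|_W$, so rank-nullity yields $\dim(U)+\dim(V)=\dim(W)$. To verify the annihilation condition, pick $u\in U$ and $v\in V$, and choose $u'\in\F^s$ with $(u',v)\in W$. Applying the block identity to $w=(u,0)$ and $w'=(u',v)\in W$ gives $0=u^t B v - u'^t B\cdot 0=u^t B v$ for every $B\in\cB$. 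Hence $U,V$ realize a pair in the maximization defining $\ncrk(\cB)$, giving $\alpha(\cA)=\dim(W)=\dim(U)+\dim(V)\le n-\ncrk(\cB)$.

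I do not expect any serious obstacle. The one subtlety to get right is the choice of $U$ and $V$ in the upper bound: the naive choice $U=\pi_1(W)$, $V=\pi_2(W)$ fails the annihilation condition, while the choice $U=W\cap(\F^s\oplus 0)$, $V=\pi_2(W)$ (an ``asymmetric'' pairing of a restriction with a projection) gives both the correct dimension count via rank-nullity and the required annihilation via isotropy against the vectors $(u,0)\in W$. Everything else is a routine block-matrix manipulation enabled by the bipartite normal form justified earlier in Section~\ref{sec:basic}.
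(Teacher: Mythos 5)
Your proof is correct and rests on the same mathematical core as the paper's: the block identity $w^tAw' = u^tBv'-u'^tBv$ for $w=(u,v)$, $w'=(u',v')$, together with a dimension count. The difference is in how the isotropic pair is extracted from a maximum isotropic space $W$ in the upper bound. You take $U=\ker(\pi_2|_W)$ (identified with a subspace of $\F^s$) and $V=\pi_2(W)$, and rank--nullity delivers $\dim U+\dim V=\dim W$ in one line; the paper instead column-reduces a basis matrix for $W$ into the block form $\bigl[\begin{smallmatrix} V & 0 \\ W'' & W \end{smallmatrix}\bigr]$ and reads off the pair $(\pi_1(W),\ \{v\in\F^t:(0,v)\in W\})$, which is the mirror of your choice (image/kernel of $\pi_1$ rather than kernel/image of $\pi_2$). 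Both work for the same reason: pairing an element of the kernel with a preimage of an element of the image makes one of the two terms in the block identity vanish. Your exposition is leaner in two places: the lower-bound direction uses the direct construction $W=(U\oplus 0)+(0\oplus V)$ rather than the paper's two examples and case split between $\ncrk(\cB)=\min\{s,t\}$ and $\ncrk(\cB)<\min\{s,t\}$; and the upper-bound direction avoids the explicit column reduction. Your parenthetical warning that the naive choice $U=\pi_1(W)$, $V=\pi_2(W)$ fails is exactly right -- the paper's third example (symmetric $B$'s and the ``diagonal'' isotropic space) is a witness -- and is precisely the ``twist'' the paper alludes to in the introduction.
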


Thanks to the solution of the non-commutative rank problem over any
field \cite{IQS17}, and Theorem~\ref{thm:2-decomp} in the case of $\F_q$ with odd
$q$, we have
\begin{corollary}\label{cor:maximum_bipartite}
The maximum isotropic space 
problem for bipartite alternating matrix spaces of size $n\times n$ over $\F_q$,
$q$ odd, can be
solved in randomized $\poly(n, \log q)$ time.
\end{corollary}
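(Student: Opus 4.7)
The plan is to chain together three ingredients already available: the isotropic $2$-decomposition algorithm (Theorem~\ref{thm:2-decomp}), the identity $\alpha(\cA)=n-\ncrk(\cB)$ from Theorem~\ref{thm:maximum_bipartite}, and the deterministic polynomial-time algorithm for non-commutative rank from \cite{IQS17}. Given a bipartite $\cA\leq \Lambda(n,\F_q)$ with $q$ odd, specified by a linear basis $A_1,\dots,A_m$, and a threshold $d\in\N$, I intend to compute $\alpha(\cA)$ exactly and then compare it to $d$.

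First I would run the algorithm of Theorem~\ref{thm:2-decomp} on $\cA$. By hypothesis $\cA$ is bipartite, so this randomized $\poly(n,\log q)$-time procedure returns linear bases of two isotropic subspaces $U_1,U_2\leq \F_q^n$ with $U_1\oplus U_2=\F_q^n$. Let $s=\dim(U_1)$ and $t=\dim(U_2)$, so $s+t=n$.

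Second, I would form the change-of-basis matrix $T\in \GL(n,\F_q)$ whose first $s$ columns are the basis of $U_1$ and whose last $t$ columns are the basis of $U_2$, and replace each $A_i$ by $T^t A_i T$. Because $U_1$ and $U_2$ are isotropic, the top-left $s\times s$ and bottom-right $t\times t$ blocks of every $T^t A_i T$ vanish, so each transformed generator has the form $\begin{bmatrix} \zerovec & B_i \\ -B_i^t & \zerovec \end{bmatrix}$ with $B_i\in \M(s\times t,\F_q)$. The $B_i$ span the matrix space $\cB\leq \M(s\times t,\F_q)$ appearing in Theorem~\ref{thm:maximum_bipartite}; a linear basis of $\cB$ is then obtained by Gaussian elimination in deterministic $\poly(n,\log q)$ time.

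Third, I would invoke the algorithm of \cite{IQS17} on $\cB$ to compute $\ncrk(\cB)$ deterministically in $\poly(n,\log q)$ time. By Theorem~\ref{thm:maximum_bipartite}, $\alpha(\cA)=n-\ncrk(\cB)$, and the decision ``$\alpha(\cA)\geq d$?'' reduces to a single comparison. The overall running time is randomized $\poly(n,\log q)$, with randomness entering only through Theorem~\ref{thm:2-decomp}. There is no serious obstacle here: the corollary is essentially a concatenation of two prior results, and the only thing to verify is that the block form produced after the change of basis $T$ really is the setup described before Theorem~\ref{thm:maximum_bipartite}, which is immediate from the isotropy of $U_1$ and $U_2$.
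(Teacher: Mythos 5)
Your overall strategy matches the paper's: apply Theorem~\ref{thm:2-decomp} to produce the explicit isotropic $2$-decomposition, change basis to put $\cA$ in the bipartite block form, read off $\cB\leq \M(s\times t,\F_q)$, and finish by computing $\ncrk(\cB)$ and using Theorem~\ref{thm:maximum_bipartite}. The reduction of the problem to a single $\ncrk$ computation is correct, and your verification that the block form really arises from isotropy of $U_1$ and $U_2$ is fine.

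However, the step where you ``invoke the algorithm of \cite{IQS17} on $\cB$'' skips over a genuine issue. The space $\cB$ lives in $\M(s\times t,\F_q)$ and in general $s\neq t$, whereas \cite{IQS17} (and \cite{GGOW16}) compute non-commutative rank for square matrix spaces, and moreover use a formulation in terms of shrunk subspaces rather than the isotropic-pair definition $\ncrk(\cB)=s+t-\max\{\dim U+\dim V\}$ used by Theorem~\ref{thm:maximum_bipartite}. The paper addresses exactly this mismatch in Proposition~\ref{prop:ncrk}: assuming $s<t$, it pads each $B\in\cB$ to a $t\times t$ matrix $B'=\begin{bmatrix}\zerovec\\ B\end{bmatrix}$ and adjoins the elementary matrices $E_{i,j}$ for $1\leq i\leq t-s$, $1\leq j\leq t$, then proves $\ncrk(\cB)+(t-s)=\ncrk(\cC)$ for the resulting square space $\cC\leq\M(t,\F_q)$, which \emph{can} be handled by \cite{IQS17}. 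Without that reduction (or an explicit citation of a rectangular-ready version of the $\ncrk$ algorithm that also matches the isotropic-pair formulation), your third step is not justified as stated. Inserting Proposition~\ref{prop:ncrk} closes the gap and recovers the paper's argument.
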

The proofs of Theorem~\ref{thm:maximum_bipartite} and
Corollary~\ref{cor:maximum_bipartite} are in Section~\ref{sec:maximum_bipartite}.
In some sense, the non-commutative rank may be considered as corresponding to the
minimum vertex
cover size in
the bipartite alternating matrix space setting. However, unlike in the graph case,
where
the relation between independent sets and vertex covers is so straightforward,
the proof of Theorem~\ref{thm:maximum_bipartite} requires some twists, because of
the ``flexibility'' of vectors and matrices.

%It should be noted that the connection
%between the two problems in
%Theorem~\ref{thm:maximum_bipartite} is not surprising, given that the
%non-commutative rank problem is somewhat similar to the minimum vertex cover
%problem.
%However, the proof of Theorem~\ref{thm:maximum_bipartite} is not as
%straightforward as the graph case, where the relation between vertex covers and
%independent sets is so obvious.

Having seen the implication of the non-commutative rank problem to our setting,
let us examine the following mathematical problem that arises naturally in our
context, whose solution turns out to
come from algebraic geometry. Again, let us trace back to the
graph setting, and consider $\alpha(n, m):=\min\{\alpha(G) : G\text{ a graph with
}n\text{ vertices and }m\text{ edges}\}$, where $\alpha(G)$
is
%the size of a maximum independent set of $G$, also known as
the independence
number. A celebrated result of Tur\'an \cite{Tur41}
in extremal graph theory implies that
\begin{equation}\label{eq:turan}
\alpha(n, m)\leq \Big\lceil
\frac{n^2}{2m+n} \Big\rceil.
\end{equation}
%For
%$\cA\leq \Lambda(n, \F)$, we define the \emph{isotropic number} of $\cA$ to be the
%dimension of a maximum isotropic space of $\cA$.
Turning to the alternating matrix space setting, it is natural to define
$\alpha(\F, n,
m):=\min\{\alpha(\cA) : \cA\leq \Lambda(n, \F), \dim(\cA)= m\}$. This quantity
has
been studied by Buhler, Gupta, and Harris \cite{BGH87} in relation to abelian
subgroups of $p$-groups
\cite{Bur13,Alp65}. The main result of \cite{BGH87}, proved using algebraic
geometric techniques, is as follows: 
%if $\F$ is
%not of characteristic $2$ and $m>1$, then
for any $m>1$, we have 
\begin{equation}\label{eq:bgh}
\alpha(\F, n, m)\leq \Big\lfloor \frac{m+2n}{m+2}\Big\rfloor,
\end{equation}
where the equality is attainable over algebraically closed fields\footnote{While 
in \cite{BGH87} the main result was stated for fields of characteristic $\neq 2$, 
the 
proof, at least the inequality, works for any characteristic.}. 
%(We use
%$\lfloor
%x \rfloor$ to denote the largest integer $\leq x$.) 
This inequality was also 
obtained earlier by Ol'shanskii \cite{Ols78}.
Comparing Equations~\ref{eq:turan} and~\ref{eq:bgh}, we see that $\alpha(n, m)$
and $\alpha(\F, n, m)$ behave quite differently. For example, by
Equation~\ref{eq:turan}, every graph with $n$ vertices and $2n$ edges
has an independent set of size at least $n/5$. On the other hand, by
Equation~\ref{eq:bgh}, there exists a dimension-$2n$ alternating matrix space in
$\Lambda(n, \F)$ with no isotropic space of dimension $\geq 2$. 

%This is a sharp
%contrast to the graph case, as the only graph of size $n$ with no independent set
%of size $\geq 2$ is the complete graph.

Motivated by the discussion in the last paragraph, we study the
algorithmic problem of deciding whether there exists an isotropic space of
dimension $\geq 2$ for $\cA\leq \Lambda(n, \F)$. This is equivalent to ask whether
$\cA$ has an isotropic $(n-1)$-decomposition. Note that the corresponding problem
on graphs is trivial, as
a graph has an
independent set of size $\geq 2$ if and only if it is not the complete graph. It
turns out that over $\Q$, this problem is substantially more difficult.
\begin{theorem}\label{thm:iso_dim_2}
Over $\Q$, assuming the generalized Riemann hypothesis, there is a randomized
polynomial-time
reduction from deciding quadratic residuosity modulo squarefree composite
numbers
to the problem
of deciding whether an
alternating matrix space has an isotropic space of dimension $\geq 2$.
%Let $\F$ be either $\F_q$, $\Q$, or $\R$. It is NP-hard to decide whether an
%alternating matrix space has an isotropic space of dimension $\geq 2$.
%
%Equivalently, it is NP-hard to decide whether an
%alternating matrix space in $\Lambda(n, \F)$ has an isotropic
%$(n-1)$-decomposition.
\end{theorem}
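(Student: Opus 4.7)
The plan is a two-step reduction: first reduce QR modulo a squarefree composite $N$ to the rational solvability of a ternary quadratic form (conic) over $\Q$, then reduce the latter to the isotropic $2$-space problem in $\Lambda(4, \Q)$ via an explicit Pfaffian construction. Step 1 uses GRH and randomization to locate a suitable auxiliary prime; Step 2 is elementary linear algebra.

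For Step 2, recall that a $4\times 4$ alternating matrix $A$ satisfies $\det(A) = \mathrm{Pf}(A)^2$ with $\mathrm{Pf}(A) = A_{12}A_{34} - A_{13}A_{24} + A_{14}A_{23}$; hence $A$ is nonzero of rank $2$ iff $\mathrm{Pf}(A) = 0$ and $A \neq 0$, and such matrices are exactly those of the form $uv^t - vu^t$ for linearly independent $u, v \in \Q^4$. Given a ternary form $Q(p_1, p_2, p_3) = \alpha p_1^2 + \beta p_2^2 + \gamma p_3^2$, I define $B_1, B_2, B_3 \in \Lambda(4, \Q)$ by specifying the nonzero $(i,j)$-entries for $i < j$:
\begin{align*}
(B_1)_{12} &= \alpha, \quad (B_1)_{34} = 1, \\
(B_2)_{13} &= -\beta, \quad (B_2)_{24} = 1, \\
(B_3)_{14} &= \gamma, \quad (B_3)_{23} = 1,
\end{align*}
with all unspecified $(i,j)$-entries (for $i<j$) equal to zero. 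A direct Pfaffian computation gives $\mathrm{Pf}(p_1 B_1 + p_2 B_2 + p_3 B_3) = Q(p_1, p_2, p_3)$. Let $\cA := \langle B_1, B_2, B_3 \rangle^{\perp} \leq \Lambda(4, \Q)$ under the pairing $\langle A, B \rangle = \sum_{i<j} A_{ij} B_{ij}$. A nonzero rank-$2$ element of $\langle B_1, B_2, B_3 \rangle$, written as $uv^t - vu^t$, corresponds to a pair $(u, v)$ with $u^t C v = 0$ for every $C \in \cA$, i.e., to a $2$-dimensional isotropic subspace $\langle u, v\rangle$ of $\cA$; conversely, any such isotropic subspace yields such a rank-$2$ element. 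Therefore $\cA$ admits a $2$-dimensional isotropic subspace iff $Q$ has a nontrivial rational zero, and the construction is polynomial-time in $\alpha, \beta, \gamma$.

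For Step 1, I use the conic $C_q: x^2 - q y^2 - N z^2 = 0$ with $q$ a carefully chosen prime. Given $(a, N)$, first compute the Jacobi symbol $\left(\frac{a}{N}\right)$; if it equals $-1$ then $a$ is not a QR modulo $N$ and we are done. Otherwise, draw $s \in (\Z/N\Z)^{\times}$ uniformly at random, and search for a prime $q$ of polynomial size satisfying $q \equiv a s^2 \pmod{N}$ together with prescribed congruences modulo $8$. These ensure (i) $q / a$ is a square modulo $N$, so $a$ is a QR modulo $N$ iff $q$ is, and (ii) the Hilbert symbols $(q, N)_v$ are $1$ at the archimedean place, at $v = 2$, and at $v = q$ (the last via quadratic reciprocity, using $(q/N) = (a/N) = 1$). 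Under GRH, the effective Chebotarev density / Linnik theorem supplies such a prime $q$ of polynomial size in randomized polynomial time. Then by Hasse--Minkowski and the identity $(q, N)_p = \left(\frac{q}{p}\right)$ at odd primes $p \mid N$ coprime to $2q$, the conic $C_q$ has a rational point iff $q$ is a QR modulo $N$ iff $a$ is. Composing with Step 2 applied to $(\alpha, \beta, \gamma) = (1, -q, -N)$ yields the desired randomized polynomial-time reduction.

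The main obstacle lies in Step 1: engineering the conic so that its global solvability captures exactly QR modulo $N$ requires trivializing all Hilbert symbols at places outside the primes of $N$, which is achieved by finding an auxiliary prime $q$ in a prescribed arithmetic progression modulo $8N$. GRH is used precisely to supply a polynomial-size such $q$ via effective density theorems, and randomization enters through the sampling of $s$ and the probabilistic prime search. Step 2, by contrast, is an entirely elementary and explicit linear-algebraic construction exploiting the special identification of $4 \times 4$ alternating matrices with $\Lambda^2 \Q^4$ and the Pfaffian (Plücker) quadric.
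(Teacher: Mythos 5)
Your proposal is correct, and it reaches the same statement by a genuinely different route. The paper factors the reduction through the \emph{linear $\exists$-singularity problem} (Problem~\ref{prob:exist_sing}): it invokes R\'onyai's result \cite{Ron87} that, under GRH, quadratic residuosity modulo squarefree composites reduces in randomized polynomial time to deciding whether a $4$-dimensional central simple algebra over $\Q$ (given as a $4$-dimensional subspace of $\M(4,\Q)$) is split, i.e.\ contains a nonzero singular matrix; it then gives a \emph{general} reduction (Claim~\ref{claim:iso_dim_2}) from the $\exists$-singularity problem on $\cB\leq\M(n,\F)$ with $\dim\cB=m$ to the $\alpha(\cA)\geq 2$ problem on an explicit $\cA\leq\Lambda(n+m,\F)$ built from vertical slices of $\cB$ together with elementary alternating matrices $C_{i,j}, D_{k,\ell}$. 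You instead (a) reprove the number-theoretic step directly in the language of ternary conics, arranging the Hilbert symbols at $\infty$, $2$, and the auxiliary prime $q$ via congruence conditions and locating $q$ by effective density under GRH, and (b) exploit the Pfaffian/Pl\"ucker quadric $\det = \mathrm{Pf}^2$ on $\Lambda(4,\Q)$, together with the orthogonality-duality between rank-$2$ alternating matrices $uv^t-vu^t$ and $2$-planes $\langle u,v\rangle$, to translate solvability of $\alpha p_1^2+\beta p_2^2+\gamma p_3^2=0$ into $\alpha(\cA)\geq 2$ for a $3$-dimensional $\cA\leq\Lambda(4,\Q)$. Your Step~2 is more elementary and yields a tiny instance ($\Lambda(4,\Q)$ versus the paper's $\Lambda(8,\Q)$ after instantiating $n=m=4$), and your Step~1 is self-contained where the paper is modular; conversely, the paper's detour through the $\exists$-singularity problem is deliberate, as that intermediate problem is introduced and studied further in Section~\ref{subsec:exist_sing}, so the paper's reduction is reusable for other matrix spaces $\cB$ that are not quaternion algebras. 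One cosmetic gap in your write-up: the ``prescribed congruences modulo $8$'' should be made explicit (e.g.\ $q\equiv 1\pmod 8$ suffices to trivialize $(q,N)_2$, $(q,N)_\infty$, and, via reciprocity with $(q/N)=(a/N)=1$, $(q,N)_q$), and the even-$N$ case should be dispatched by passing to the odd part of $N$; but these are routine and the argument is sound.
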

The proof is in Section~\ref{sec:iso_dim_2}. It relies crucially on
R\'onyai's fundamental work on computing algebra structures \cite{Ron87}. One
ingredient here is the introduction of the existential singularity problem
for matrix spaces, which turns out to have
rich connections to several mathematical disciplines (see
Problem~\ref{prob:exist_sing} and Section~\ref{subsec:exist_sing}).

%Note that the corresponding problems on graphs are trivial, as a graph has an
%independent set of size $\geq 2$ if and only if it is not the complete graph.
%Interestingly, this NP-hardness echoes with the experiences of Buhler, Gupta, and
%Harris while working on $\alpha(\F, n, m)$ for these fields. Indeed, in
%\cite[Sec.
%3]{BGH87}, they indicated that it may be difficult to find $\alpha(\F, n,
%m)$
%exactly when $\F$ is not algebraically closed.

%Observe that, a graph $G$ of size $n$ has an independent set of
%size $2$ if and only if $G$ has an $(n-1)$-coloring. This relation holds in the
%setting of alternating matrix spaces as well. So Theorem~\ref{thm:iso_dim_2}
%implies the following.
%\begin{corollary}\label{cor:iso_dim_2}
%Let $\F$ be either $\F_q$, $\Q$, or $\R$. It is NP-hard to decide whether an
%alternating matrix space in $\Lambda(n, \F)$ has an isotropic
%$(n-1)$-decomposition.
%\end{corollary}

%The proof of Theorem~\ref{thm:iso_dim_2} relies crucially on techniques and
%results from the algebraic complexity theory. The strategy is adapted from the
%NP-hardness of the existential singularity problem over such fields in
%\cite{BFS99}, and key ingredients include the universality of determinant
%\cite{Val79} and the NP-hardness hardness of the non-trivial solvability of
%systems of homogeneous polynomial equations \cite{GKP13}.

\subsection{Linear algebraic analogues of bounding the number of maximal 
independent sets and exact exponential algorithms for chromatic numbers
%Bounds on the number of maximal isotropic spaces, and exact
%exponential
%algorithms for isotropic $c$-decompositions
}\label{subsec:exact}

An independent set on a graph is maximal if it is not properly contained in some
other independent set. The study of maximal independent sets is a classical
demonstration
of
how graph theory and algorithm study are intertwined.

In the 1960's, Erd\H{o}s and Moser raised the question of bounding the number
of maximal independent sets on a graph. It was subsequently solved by Moon and
Moser \cite{MM65}, and alternative proofs have been found \cite{Wood11,Vat11}.
They show that
the number of maximal independent set of an
$n$-vertex graph is upper bounded by $3^{n/3}$,
%\footnote{Some care needs to be
%taken when $n$ is not a multiple of $3$.}
and this bound is tight. (Some
refinement is required when $n$ is not a multiple of $3$.)
Since the 1970's, the problem of outputting all maximal independent sets or maximal
cliques received considerable attention
\cite{AM70,TIAS77,LLK80,JPY88}. One
application was
provided by
Lawler \cite{Law76}, who showed that the Moon-Moser bound together with dynamic
programming give an algorithm for computing the chromatic number of an $n$-vertex
graph in time $(1+\sqrt[3]{3})^n\cdot \poly(n)$. This algorithm was the
starting
point of exact exponential-time algorithms for chromatic numbers. Subsequent
improvements \cite{eppstein2003small,byskov2004enumerating,bjorklund2008exact}
lead to the breakthrough by Bj\"orklund, Husfeldt, and Koivisto, who presented an
algorithm in time $2^n\cdot \poly(n)$ \cite{BHK09}.

Getting back to alternating matrix spaces, the natural correspondence would be
maximal isotropic
spaces. Formally, for an
alternating matrix space $\cA\leq \Lambda(n, \F)$, an isotropic space  is
\emph{maximal}, if there is no isotropic space properly containing it. We then ask
analogous questions over finite fields, namely upper bounding the number of maximal
isotropic spaces of $\cA\leq \Lambda(n, \F_q)$, and exact exponential-time
algorithms for computing the isotropic decomposition number $\chi(\cA)$.
Interestingly, on one
hand, these
problems demonstrate behaviours different from the combinatorial counterpart. On
the other hand, strategies and techniques from graph theory and algorithm design
do carry over, in a non-trivial way, to these problems. Again, such phenomena have
been witnessed in the non-commutative rank problem, and it is interesting to see
these happening in this context. Furthermore, our result on the number of maximal 
isotropic spaces has a direct application to group theory, as we will see in 
Section~\ref{subsec:inter1}.

We now describe our results in more details. To start with, we note that, as in
the graph setting, an easy greedy algorithm outputs one maximal isotropic space
(see Proposition~\ref{prop:maximal_is}). We then consider the number of maximal
isotropic spaces for
alternating
matrix spaces in $\Lambda(n, q)$, analogously as done by Moon and Moser for
graphs \cite{MM65}. A trivial upper bound is the number of all
subspaces of $\F_q^n$. This number, $q^{\frac{1}{4}n^2+\Theta(n)}$, is well-known
and classical (see Fact~\ref{fact:gaussian}). Any alternating matrix space spanned
by a single
full-rank
%\footnote{The maximal rank of $A\in\Lambda(n, \F)$ is $n$ if $n$ is
%even, and $n-1$ if $n$ is odd.}
alternating matrix has $q^{\frac{1}{8}n^2+\Theta(n)}$ maximal isotropic spaces,
providing a lower bound. This is also classical but perhaps not that well-known
(see Proposition~\ref{prop:iso_bd}).
We show a non-trivial upper bound as follows.
\begin{theorem}\label{thm:iso_bound}
The number of maximal isotropic spaces of any $\cA\leq \Lambda(n, q)$ is upper
bounded by $q^{\frac{1}{6}n^2+O(n)}$.
\end{theorem}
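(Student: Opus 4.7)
The plan is to induct on $n$, with $f(n) = f(n, q)$ denoting the supremum of the number of maximal isotropic spaces over all $\cA \leq \Lambda(n, q)$; the goal is $\log_q f(n) \leq n^2/6 + O(n)$. The main tool is a double-counting argument on pairs $(U, v)$ with $U$ a maximal isotropic space and $v \in U$ nonzero. First, I would reduce to the case where $\cA$ has trivial radical $R := \{v \in \F_q^n : Av = 0 \text{ for all } A \in \cA\}$: every maximal isotropic space automatically contains $R$, so the maximal isotropic spaces of $\cA$ correspond bijectively to those of the alternating matrix space induced on $\F_q^n/R$, which has smaller ambient dimension whenever $R \neq 0$, and induction disposes of this case. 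With $R = 0$, the linear map $\psi_v : \cA \to \F_q^n$, $A \mapsto Av$, has rank $r(v) \geq 1$ for every nonzero $v$.

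The workhorse observation is that the maximal isotropic spaces of $\cA$ containing a given nonzero $v$ are in natural bijection (via $U \mapsto U/\langle v \rangle$) with the maximal isotropic spaces of the alternating matrix space induced by $\cA$ on the quotient $v^{\perp_{\cA}}/\langle v \rangle$, where $v^{\perp_{\cA}} := \{u \in \F_q^n : u^t A v = 0 \text{ for all } A \in \cA\} = (\im \psi_v)^\perp$ has dimension $n - r(v)$. Hence each nonzero $v$ lies in at most $f(n - 1 - r(v)) \leq f(n-2)$ maximal isotropic spaces, and double counting pairs $(U, v)$ with $v \in U \setminus \{0\}$ yields $\sum_{U} (q^{\dim U} - 1) \leq (q^n - 1)\, f(n-2)$.

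The main obstacle is the amortization on the left-hand side: using only $\dim U \geq 1$ yields $f(n) \leq q^{n + O(1)} f(n-2)$, which telescopes to the too-weak bound $\log_q f(n) \leq n^2/4 + O(n)$. To reach $n^2/6$ one needs a sharper recursion of the shape $\log_q f(n) \leq \log_q f(n-3) + n + O(1)$. The strategy I would pursue is to strengthen the pairing by considering linearly independent tuples $(v_1, \ldots, v_k)$ inside $U$, exploiting that for a generic such tuple the joint rank $\dim(\im \psi_{v_1} + \cdots + \im \psi_{v_k})$ scales as $\Theta(k)$ (so that the induced ambient dimension drops by $\Omega(k)$ rather than by one per vector); complementarily, one handles the regime where maximal isotropic spaces have uniformly small dimension $d_{\min}$ by a direct Gaussian-binomial count, using that small $d_{\min}$ forces $\dim \cA \geq (n - d_{\min})/d_{\min}$ to be large and hence severely constrains the count of self-orthogonal subspaces in the Grassmannian $\mathrm{Gr}(d_{\min}, n)$. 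Balancing these two regimes should deliver the sharpened recursion and the desired $\log_q f(n) \leq n^2/6 + O(n)$.
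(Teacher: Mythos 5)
Your opening moves are aligned with the paper: you correctly reduce to the non-degenerate case, you correctly observe that maximal isotropic spaces through a nonzero $v$ correspond to maximal isotropic spaces of the restriction to $\rad_\cA(v)/\langle v\rangle$ (this is Observation~\ref{obs:mis_nondeg} together with the restriction picture), and you correctly identify that the naive recursion only gives $q^{n^2/4+O(n)}$ and that the target is a sharpened recursion of the shape $f(n)\leq q^{n+O(1)}f(n-3)$. Once such a recursion is available, your double-counting framing yields this cleanly whenever every nonzero $v$ has $\deg_\cA(v)\geq 2$, which agrees with the paper's easy case ($d\geq 2$).

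The genuine gap is in the hard case where degree-$1$ vectors abound, and this is exactly where your two proposed remedies are unjustified. The claim that a generic independent tuple $(v_1,\dots,v_k)$ inside an isotropic space $U$ has joint rank $\Theta(k)$ can fail precisely in the regime that matters: if most vectors in $U$ have degree $1$, the images $\im\psi_{v_i}$ need not accumulate, and no argument is offered to show that a ``generic'' tuple behaves well. (A plain double count over all degree-$1$ vectors, with no structural input, leaves the sum $\sum_{v:\,\deg(v)=1} f(n-2)$ potentially too large.) The complementary ``small $d_{\min}$'' branch is also underdeveloped: the dimensions of maximal isotropic spaces of a single $\cA$ are not uniform (e.g.\ the ``star'' space $\langle e_1e_i^t-e_ie_1^t : i\geq 2\rangle$ has maximal isotropic spaces of dimension $1$ and of dimension $n-1$), so the regime split on $d_{\min}$ is ill-posed, and the Grassmannian count sketched via ``$\dim\cA\geq(n-d)/d$ constrains self-orthogonal subspaces'' is heuristic -- those $m\binom{d}{2}$ quadratic constraints need not be independent, and no mechanism is given for converting large $\dim\cA$ into a usable bound.

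The paper closes this case differently, and the missing ingredient is a \emph{rank} lemma. Instead of double counting over all $v$, the paper fixes one minimum-degree vector $v$ and uses Wood's closed-neighbourhood trick: every maximal $U$ must meet $N(v)$, giving $\MINum(\cA)\leq \sum_{w\in N(v)}\MINum(\cA|_{\rad(w)})$. When $\deg(v)=1$ the contribution from degree-$1$ neighbours $N_1(v)$ is controlled by a dichotomy: if $|N_1(v)|\leq q^{2n/3}$ the recursion still closes; if $|N_1(v)|>q^{2n/3}$, a structural argument (Claim~\ref{claim:rank}) uses those many degree-$1$ neighbours to produce, after a change of basis, a \emph{single} matrix $A\in\cA$ of rank $>\tfrac{2}{3}n$, and then Lemma~\ref{lem:technical} bounds the isotropic spaces of any single alternating matrix of that rank by $q^{n^2/6+O(n)}$. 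This ``pass to one high-rank form'' step is the concrete replacement for both of your speculative branches, and it is what your proposal is missing: the relevant handle is not the dimension of $\cA$ or joint ranks of tuples, but the maximum rank of an individual matrix in $\cA$, and the fact that abundant degree-$1$ vectors forces it to be large.
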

The proof is in Section~\ref{sec:iso_bound}. We adapt
the proof
strategy of the upper bound on maximal independent sets by Wood \cite{Wood11}.
This requires analogues of certain graph-theoretic concepts such as degrees and
neighbours
in
the alternating matrix space setting, which have been developed in \cite{Qia18}.
It
works up to some point, but after that, we have to resort to certain linear
algebraic techniques. We leave closing the gap between
$q^{\frac{1}{8}n^2+\Omega(n)}$ and $q^{\frac{1}{6}n^2+O(n)}$ an interesting open
problem.

The proof of Theorem~\ref{thm:iso_bound} is constructive (see
Corollary~\ref{cor:iso_bound}), so we can enumerate
all maximal isotropic spaces in time $q^{\frac{1}{6}n^2+O(n)}$.
We then
consider the problem of computing the isotropic decomposition number for $\cA\leq
\Lambda(n,
q)$. A naive brute-force algorithm, namely enumerating all
direct sum decompositions of $\F_q^n$, runs in time $q^{n^2+O(n)}$.
%Following
Inspired by Lawler's strategy in
\cite{Law76}, we combine our Corollary~\ref{cor:iso_bound} with a
dynamic programming idea to obtain the following.
\begin{theorem}\label{thm:lawler}
The isotropic decomposition number of $\cA\leq \Lambda(n, q)$ can be
computed in
time $q^{\frac{5}{12}n^2+O(n)}$.
\end{theorem}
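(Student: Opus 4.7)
The plan is to adapt Lawler's dynamic-programming algorithm for chromatic number to this setting. I index the DP by subspaces $W \leq \F_q^n$, storing $f(W) := \chi(\cA|_W)$ (the isotropic decomposition number of $\cA$ restricted to bilinear forms on $W$), with base case $f(\{0\}) = 0$. The recursion is
\[
f(W) \;=\; 1 + \min \bigl\{ f(W') : W = U \oplus W',\ U \text{ a maximal isotropic subspace of } \cA|_W \bigr\},
\]
justified by a Lawler-style maximality argument: for any optimal isotropic $c$-decomposition $W = U_1 \oplus \cdots \oplus U_c$, I extend $U_1$ to a maximal isotropic $U_1^*$ of $\cA|_W$ and then produce a complement $W'$ of $U_1^*$ in $W$ that still admits an isotropic $(c-1)$-decomposition. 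The existence of such a $W'$ can be established by writing $U_1^* = U_1 \oplus V$ with $V \subseteq U_2 \oplus \cdots \oplus U_c$ (take $V$ as the projection of $U_1^*$ along $U_1$), choosing $W'$ to be a complement of $V$ inside $U_2 \oplus \cdots \oplus U_c$, and then analyzing how the blocks $U_2, \dots, U_c$ induce isotropic pieces of $W'$.

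The efficiency gain comes from Corollary~\ref{cor:iso_bound}: I precompute once the list $\mathcal{L}$ of all maximal isotropic subspaces of the global $\cA$, which has size at most $q^{n^2/6 + O(n)}$ and is enumerable in that time. A crucial observation is that every maximal isotropic subspace of $\cA|_W$ has the form $\tilde U \cap W$ for some $\tilde U \in \mathcal{L}$: any isotropic $U \subseteq W$ is a fortiori isotropic for $\cA$ and so extends to some $\tilde U \in \mathcal{L}$, and maximality of $U$ inside $W$ forces $\tilde U \cap W = U$. Consequently, the recursion at $W$ need only enumerate candidate first blocks by scanning $\mathcal{L}$, avoiding a per-$W$ recomputation of maximal isotropics.

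Algorithmically, I process subspaces $W$ in increasing order of dimension, and for each pair $(W, \tilde U)$ with $\tilde U \in \mathcal{L}$ compute $U := \tilde U \cap W$, test maximality of $U$ in $\cA|_W$, pick a complement $W'$ determined by $(W, \tilde U)$, and update $f(W) := \min(f(W), 1 + f(W'))$. The number of processed pairs is bounded by (the classical count of subspaces of $\F_q^n$) times $|\mathcal{L}|$, which is $q^{n^2/4 + O(n)} \cdot q^{n^2/6 + O(n)} = q^{5n^2/12 + O(n)}$; each pair takes $\poly(n, \log q)$ time to process (for the intersection, the maximality check, and the complement-and-lookup step), giving the claimed total running time.

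The main obstacle is ensuring correctness while only examining one complement per $(W, \tilde U)$ pair: unlike the graph case where $S \setminus I$ is canonical, different complements of $U_1^*$ in $W$ can have different isotropic decomposition numbers. The remedy I would pursue is to show that a complement canonically determined by $\tilde U$ (for instance via the $\tilde U$-orthogonal structure in $\cA$, or via a Gaussian-elimination-style choice compatible with prescribed bases of $\tilde U$ and $W$) always realizes the optimum in the maximality argument; failing that, one can enlarge the enumeration to a $q^{O(n)}$-sized family of complements per pair without affecting the $q^{5n^2/12 + O(n)}$ bound.
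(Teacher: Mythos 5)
The central difficulty you identify — that a complement to a maximal isotropic subspace of $\cA|_W$ is not canonical — is precisely the obstacle, and your proposal does not resolve it. Neither of your two remedies works as stated. For the first, a complement $W'$ that ``realizes the optimum'' depends on the (unknown) optimal decomposition $U_1\oplus\cdots\oplus U_c$ of $W$: in your own maximality argument you must choose $W'$ to be a complement of $V$ \emph{inside} $U_2\oplus\cdots\oplus U_c$, and the DP has no access to $U_2\oplus\cdots\oplus U_c$. There is no way to canonically single out this $W'$ from $(W,\tilde U)$ alone, and in general different complements of $\tilde U\cap W$ in $W$ have different restricted isotropic-decomposition numbers. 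For the second remedy, the number of complements of a $d$-dimensional subspace of a $\ell$-dimensional space is $q^{d(\ell-d)}$, which can be $q^{\Theta(n^2)}$; no $q^{O(n)}$-sized family can cover enough of them. If you instead enumerate \emph{all} complements (the honest fallback), the cost at level $\ell$ becomes roughly $\binom{n}{\ell}_q\cdot q^{\frac16\ell^2+O(\ell)}\cdot q^{\frac14\ell^2+O(\ell)}$, and optimizing over $\ell$ gives $q^{\frac37 n^2+O(n)}$, not the claimed $q^{\frac{5}{12}n^2+O(n)}$. This is exactly the bound the paper obtains in its ``directly applying Lawler's idea'' warm-up discussion before presenting the real scheme.

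The way to reach $q^{\frac{5}{12}n^2+O(n)}$ is to avoid complements altogether. The key observation (Observation~\ref{obs:maximal_chi}) is that $\chi(\cA)\leq k$ if and only if there exist $k$ maximal isotropic spaces $U_1,\dots,U_k$ with $\F^n=\langle U_1\cup\cdots\cup U_k\rangle$: the ``only if'' direction extends each block of a decomposition to a maximal one, and the ``if'' direction shrinks each $U_i$ to a subspace $U_i'\leq U_i$ to obtain an actual direct-sum decomposition (subspaces of isotropic spaces being isotropic). This replaces the complement-based recursion by a covering/spanning recursion: define a Boolean $f(k,W)$ true iff $W$ is spanned by $k$ maximal isotropic spaces, and propagate $f(k,\langle W\cup T\rangle)\leftarrow f(k-1,W)$ over all $T$ in the precomputed list $\mathcal{L}$ of maximal isotropics. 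Now each $(W,T)$ pair is processed in polynomial time with no complement to choose, and the total cost is $n\cdot q^{\frac14 n^2+O(n)}\cdot q^{\frac16 n^2+O(n)}=q^{\frac{5}{12}n^2+O(n)}$. Your use of Corollary~\ref{cor:iso_bound} to precompute $\mathcal{L}$, and your observation that every maximal isotropic subspace of $\cA|_W$ is of the form $\tilde U\cap W$ for some $\tilde U\in\mathcal{L}$, are both correct and in the right spirit, but the recursion you attached to them cannot be made both correct and fast.
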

The proof is in Section~\ref{sec:lawler}.
An open question is whether the strategy in
\cite{BHK09} for chromatic numbers can be adapted to obtain an algorithm for
isotropic decomposition numbers in time
$q^{\frac{1}{4}n^2+O(n)}$. This is because the
number
of subspaces of $\F_q^n$ is $q^{\frac{1}{4}n^2+\Theta(n)}$, while the algorithm in 
\cite{BHK09} runs in time
$2^n\cdot \poly(n)$ where $2^n$ is the number of subsets of $[n]$.

\subsection{Complexity-theoretic upper bound over $\C$, and the dependence of the
independence number on the maximum degree}

%\subsection{Step 2: algorithmic problems for general matrix spaces}

%\paragraph{Problems about isotropic spaces.}
%The maximum independent set problem
%is a classical problem in graph theory. We have seen that the maximum isotropic
%problem is NP-hard in Corollary~\ref{cor:iso_hard}. We now consider the
%complexity-theoretic upper bounds over
%finite fields and $\C$.

We first consider complexity-theoretic upper bounds for the maximum isotropic
space problem and the isotropic $3$-decomposition
problem. Clearly, these problems are in NP over finite fields. Over $\C$, we have
the following, by resorting to a celebrated result of Koiran
on the Hilbert Nullstellensatz problem
\cite{Koi96}. The proof is in Section~\ref{sec:maximum_isC}.

\begin{proposition}\label{thm:maximum_isC}
Let $\cA\leq \Lambda(n, \C)$ be given by a linear basis consisting of integral
matrices. The maximum isotropic space problem and the isotropic $3$-decomposition
problem
are in PSPACE unconditionally, and in
PH assuming the generalized Riemann hypothesis.
\end{proposition}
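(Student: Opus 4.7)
The plan is to encode both decision problems as Hilbert Nullstellensatz (HN) instances over $\C$ with integer coefficients, and then invoke Koiran's theorem \cite{Koi96}, which places HN over $\C$ in PSPACE unconditionally and in AM (hence in $\Sigma_2^p\subseteq\mathrm{PH}$) assuming the generalized Riemann hypothesis. Since the input matrices $A_1,\dots,A_m$ spanning $\cA$ are integral, all polynomial systems produced below will have integer coefficients of polynomial bit-size.

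For the maximum isotropic space problem with target dimension $d$, I would introduce $nd$ formal variables for the entries of $d$ candidate vectors $u_1,\dots,u_d\in\C^n$. Isotropy translates to the $O(md^2)$ quadratic equations $u_i^t A_k u_j=0$. Linear independence of $u_1,\dots,u_d$ asserts that some $d\times d$ minor of the matrix $[u_1\mid\cdots\mid u_d]$ is nonzero; I would nondeterministically guess a set $S\subseteq[n]$ of $d$ rows defining such a minor $M_S$, and then use the Rabinowitsch trick, introducing a fresh variable $y$ and the equation $M_S\cdot y-1=0$, to convert $M_S\neq 0$ into a polynomial equation. The resulting system has polynomially many variables, total degree at most $d+1$, and integer coefficients of polynomial bit-size, hence forms a polynomial-size HN instance.

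For the isotropic $3$-decomposition problem, I would additionally guess a triple of positive integers $(d_1,d_2,d_3)$ with $d_1+d_2+d_3=n$ (there are only $O(n^2)$ such triples). For each candidate, introduce $nd_i$ variables for basis vectors $u_{i,1},\dots,u_{i,d_i}$ of a putative $U_i$, write the isotropy equations $u_{i,k}^t A_\ell u_{i,k'}=0$ for each $i\in\{1,2,3\}$, and encode the direct-sum condition by requiring that the $n\times n$ matrix $M$ obtained by concatenating all $n$ basis vectors satisfies $\det(M)\neq 0$; this single inequation simultaneously forces linear independence within each $U_i$ and that the three subspaces direct-sum to $\C^n$. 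Applying Rabinowitsch once more yields a polynomial-size HN instance.

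Putting it together, each problem reduces in polynomial time, using polynomially many nondeterministic bits (to guess the minor index set or the dimension triple), to an HN instance over $\C$ with integer coefficients. PSPACE is closed under such reductions, giving the unconditional PSPACE bound. Under GRH, HN lies in AM, so prepending one NP-style existential layer keeps us inside PH. The main technical point to watch, though routine, is to verify that the Rabinowitsch reformulation and the nondeterministic guesses do not blow up the instance size or coefficient bit-length beyond polynomial; since the $A_k$ are integral and all equations are of bounded degree in polynomially many variables, this is straightforward.
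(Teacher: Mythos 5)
Your proposal is correct and uses essentially the same approach as the paper: encode each problem as a system of integral polynomial equations over $\C$ and invoke Koiran's theorem placing Hilbert's Nullstellensatz in PSPACE unconditionally and (under GRH) in PH. The paper's encoding differs only superficially---it parametrizes by a single $n\times n$ change-of-basis matrix $X$ and enforces $\det(X)\cdot y=1$, which handles linear independence and the direct-sum condition in one stroke and avoids your nondeterministic minor guess---but your version works just as well since both PSPACE and PH absorb the extra existential layer.
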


%We then consider the isotropic $c$-decomposition for $\cA\leq \Lambda(n, \F)$ of
%dimension $c$, where $c$ is a constant. When $c=1$, then by classical linear
%algebra, $\cA$ always admits an isotropic $2$-decomposition. When $c=2$, then by
%results of . To the best of our knowledge, the case when $c=3$ seems not well
%studied.
Our next result has two diverse motivations.

The first motivation is from linear algebra. Given a single alternating matrix
$A$, its canonical
form suggests that $\langle A\rangle$ admits an isotropic $2$-decomposition (see
Section~\ref{sec:prel}). Given
a pair of alternating matrices $A_1$ and $A_2$, it is also known that $\langle
A_1, A_2\rangle$ admits an isotropic $2$-decomposition \cite{Sch76,GL84,GG07} (see
also
\cite[Lemma 3.7]{BMW17}). A natural question is what happens for
alternating
matrix
spaces of dimension $3$, or in general, any constant $c$.

The second motivation is from graph theory.
Given a graph $G=([n], E)$, let $\Delta(G)$ be the maximum degree over vertices 
of  $G$. It is
well-known that a simple greedy algorithm yields that $\chi(G)\leq \Delta(G)+1$
\cite[pp. 122]{diestel}. For $\cA\leq \Lambda(n, \F)$, the \emph{degree} of $v\in 
\F^n$ in $\cA$
can be
defined as $\deg_\cA(v):=\dim(\langle Av: A\in \cA\rangle)$ \cite{Qia18}. As
mentioned in Section~\ref{subsec:exact}, this
notion was already useful in the proof of Theorem~\ref{thm:iso_bound}. Let
$\Delta(\cA)=\max\{\deg_\cA(v) : v\in \F^n\}$. It is then natural to ask the 
relation between $\chi(\cA)$ and $\Delta(\cA)$ in analogy to the graph setting. 
This question is closely related to the one in the last paragraph, since
$\deg_\cA(v)\leq \dim(\cA)$ for any $v\in \F^n$, so $\Delta(\cA)\leq\dim(\cA)$. 

We now present the following result, also deduced from a greedy algorithm.

\begin{proposition}\label{prop:bounded_deg}
Let $\cA\leq \Lambda(n, \F)$. Then $\chi(\cA)\leq O(\Delta(\cA)\cdot \log n)$.
Furthermore, an isotropic $C$-decomposition with $C=O(\Delta(\cA)\cdot \log n)$
can be found in polynomial time.
\end{proposition}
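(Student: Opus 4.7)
The plan is to mimic the standard greedy argument giving $\chi(G) \leq \Delta(G) + 1$, but since a single isotropic subspace in our setting can be much larger than a single vertex, we can peel off many ``colors'' worth of vertices at once and then recurse; this is what replaces the $+1$ loss of the graph greedy by a $\log n$ factor. Concretely, I will iteratively (i) find a large isotropic subspace $U_1$ by a greedy extension procedure, and (ii) restrict $\cA$ to a complementary subspace of $U_1$ and recurse.

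\textbf{Finding a large isotropic subspace.} Starting from $U = 0$, I extend $U$ by any vector $v \in (\cA U)^\perp \setminus U$ at each step, where $(\cA U)^\perp$ denotes the orthogonal complement (with respect to the standard bilinear form on $\F^n$) of $\cA U = \langle Au : A \in \cA,\, u \in U\rangle \leq \F^n$. Using that every $A \in \cA$ is alternating, one checks that $U$ stays isotropic and that $U \subseteq (\cA U)^\perp$ throughout, so the procedure can continue as long as $\dim((\cA U)^\perp) > \dim(U)$, i.e.\ $\dim(\cA U) < n - \dim(U)$. Since $\cA U = \sum_i \cA u_i$ for a basis $u_1,\dots,u_{\dim U}$ of $U$, we have $\dim(\cA U) \leq \Delta(\cA)\cdot \dim(U)$. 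When the procedure halts,
\[
n - \dim(U) \;=\; \dim(\cA U) \;\leq\; \Delta(\cA)\cdot\dim(U),
\]
so the resulting isotropic $U$ satisfies $\dim(U) \geq \lceil n/(\Delta(\cA)+1)\rceil$.

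\textbf{Recursion on a complement.} Pick any subspace $V$ with $\F^n = U \oplus V$. Restricting each $A \in \cA$ to a bilinear form on $V$ yields an alternating matrix space $\cA|_V \leq \Lambda(\dim V, \F)$. Isotropy of $W \leq V$ with respect to $\cA|_V$ is equivalent to isotropy of $W$ as a subspace of $\F^n$ with respect to $\cA$, so any isotropic decomposition $V = U_2 \oplus \cdots \oplus U_c$ for $\cA|_V$ yields the decomposition $\F^n = U \oplus U_2 \oplus \cdots \oplus U_c$ for $\cA$. Crucially, for every $v \in V$, the set $\cA|_V \cdot v$ is the image of $\cA \cdot v \leq \F^n$ under the linear projection $\F^n \to V$ along $U$, whence $\deg_{\cA|_V}(v) \leq \deg_{\cA}(v)$ and so $\Delta(\cA|_V) \leq \Delta(\cA)$. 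Thus the bound on the maximum degree is preserved across the recursion.

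\textbf{Recurrence and running time.} Letting $f(n)$ bound the number of pieces produced in ambient dimension $n$ when $\Delta(\cA) \leq \Delta$, Steps~1 and~2 give $f(n) \leq 1 + f(n - \lceil n/(\Delta+1)\rceil)$, with base case $f(1) = 1$ (a $1$-dimensional subspace is automatically isotropic because $\cA$ is alternating). Iterating the multiplicative factor $1 - 1/(\Delta+1)$ gives $f(n) = O(\Delta\cdot\log n)$, which is the desired $C$. Every step of the algorithm is pure linear algebra over $\F$ --- constructing $\cA U$, computing its annihilator, picking an extending vector outside $U$, and changing basis to represent $\cA|_V$ --- so the whole construction runs in time polynomial in $n$, $\dim(\cA)$, and the bit size of the input. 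The only step that requires any care is the ``degree does not grow under restriction'' claim in Step~2; once that is in hand, the geometric decay of the ambient dimension automatically produces the $O(\Delta \log n)$ bound.
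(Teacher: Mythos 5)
Your proof is correct and is essentially the same as the paper's: both run the same greedy extension (repeatedly pick a vector in the radical of the current isotropic subspace but outside it), extract a piece of dimension at least about $n/(\Delta+1)$, and iterate on a complement, giving geometric decay and the $O(\Delta \log n)$ bound. The only cosmetic difference is that you recurse on the restricted space $\cA|_V$ (which requires your correct observation that $\Delta(\cA|_V)\le\Delta(\cA)$), while the paper keeps everything in the ambient $\F^n$ by intersecting a running complement $W$ with $\rad_\cA(w)$; under the change of basis these are the same computation.
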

The proof is in Section~\ref{sec:maximum_isC}. Note that this implies that when
$\dim(\cA)$ is a constant, then $\chi(\cA)\leq
O(\log n)$. We leave it an open problem for further improvement of the
bound in Proposition~\ref{prop:bounded_deg}.

\subsection{Applications of our results}\label{subsec:inter1}

After studying problems on 
alternating matrix spaces mostly by way of analogy
with graphs,
%In this subsection, we provide concrete applications of some of our results.
it is natural to ask whether some results have concrete applications. The 
answer is quite affirmative.

In this subsection, 
we first provide two applications to finite groups, one being computational, and 
the other being enumerative. As will be explained below, 
these applications are based 
on a family of finite groups, for which testing isomorphism has 
long been known to be difficult, and is becoming more urgent in 
light of 
Babai's recent breakthrough on graph isomorphism \cite{Bab16}. 

We then describe a variant of our
theory in the context of quantum information theory, and present an 
information theoretic interpretation of isotropic spaces in the context of quantum 
error correction. 
In Section~\ref{app:origin},
we also present the connections to 
manifold theory, and mention a potential application.
All these suggest
that our results could be of interest to group theorists, quantum information
theorists, and geometers, in particular to those who work on the 
computational aspects of these
disciplines. % \cite{Magma,GAP,M2}.

Of course,
these applications and connections are not surprising to readers
in these fields, because alternating bilinear maps, and therefore 
alternating matrix
tuples and spaces\footnote{For the
relations
among alternating matrix tuples and spaces, and alternating bilinear maps, see
Sec.~\ref{sec:prel} and~\ref{app:spaces_maps}.}, naturally arise in 
group theory via the commutator bracket, and in manifold
geometry via the cup product in cohomology.
%Briefly speaking,
%their appearances originate
%from the commutator bracket in group theory, and the cup product in
%cohomologies. 
Therefore, certain isotropic spaces and decompositions have
natural group-theoretic or geometric interpretations (see
Section~\ref{app:origin}).

On the other hand, these applications may look exotic to some other readers, as
they will be stated
purely in group theoretic or quantum information theoretic terms. This is natural 
and expected, after a bridge is built. Indeed, the present bridge enables 
us 
to transfer problems, techniques, and results in graph theory and algorithm
study, to other mathematical and computational disciplines which otherwise seem
barely related to graph theory. 
%That is, while
%our investigations started as following what Tutte and Lov\'asz did for perfect
%matchings and see what happens for independent sets and vertex colorings, our
%results turn out to have unexpected applications to other mathematical and
%computational areas.

%We will give two
%concrete applications to group theory. The first one is
%algorithmic, about computing with matrix groups over finite fields. The second
%one
%is mathematical, about counting the number of maximal abelian subgroups.
%We also
%introduce the underlying class of groups that support these results, and why this
%group class is important for theoretical computer scientists.

%We now present two concrete applications to group theory
%, one being algorithmic,
%and the other being quantitative.
%mathematical
%{\color{red}analytical.}

%\paragraph{Computing with matrix groups over finite fields.} 
We now describe the first application to finite groups, more specifically, to 
computing with matrix groups over finite fields. 
Matrix groups over
finite fields given by generators form an important model of computing with finite
groups.
%Computing with matrix groups over finite fields is a difficult subject,
%especially
%from the complexity-theoretic viewpoint. Here, groups are represented by a set of
%invertible matrices which are generators of the group.
In theoretical computer science, the study of this model led to the inventions of
the
black-box group model
by Babai and
Szemer\'edi \cite{BS84}, and the Arthur-Merlin class by Babai \cite{Bab85}. Though 
some 
algorithms with worst-case analyses can
be found in \cite{Bea95}, even the very basic membership testing problem
was
only recently known to be solvable in randomized polynomial time under a
number-theoretic oracle \cite{BBS09}. 
%
%, many questions
%about computing with finite matrix groups await investigations
%

Overall, our knowledge about this model is rather limited, and 
many
questions await investigations. One interesting problem is to compute an abelian
subgroup of the largest size. Large abelian subgroups, besides motivations from
computational group theory \cite{Ros11}, are useful in controlling the character
degrees of the group, which in turn are useful in the group-theoretic
approach for fast matrix multiplication \cite{CKSU05}. As a consequence of
Corollary~\ref{cor:reduction}, we have the following result, whose proof is in
Section~\ref{sec:application}.
\begin{theorem}\label{thm:large_abelian}
Let $p$ be an odd prime. Given a matrix group $G$ over $\F_p$, and $s\in\N$, 
deciding whether $G$ has an abelian subgroup of order $\geq s$ is NP-hard.
\end{theorem}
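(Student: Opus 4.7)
The plan is to reduce the maximum isotropic space problem, shown NP-hard in Corollary~\ref{cor:reduction}, to the abelian subgroup problem, by associating to each $\cA\leq\Lambda(n,\F_p)$ a $p$-group of class $2$ (the classical Baer correspondence) whose abelian subgroups are governed by the isotropic subspaces of $\cA$. Since Theorem~\ref{thm:reduction} gives $\alpha(G)=\alpha(\cA_G)$, it suffices to produce, in polynomial time from $\cA\leq\Lambda(n,\F_p)$, a matrix group $\Gamma$ over $\F_p$ whose maximum abelian subgroup has order $p^{\alpha(\cA)+m}$, where $m:=\dim\cA$.

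First, I would build the group. Fix a basis $A_1,\dots,A_m$ of $\cA$ and endow $H:=\F_p^n\times\F_p^m$ with the multiplication
\[
(u,t)\cdot(u',t')=\Bigl(u+u',\ t+t'+\tfrac12\bigl(u^tA_1u',\dots,u^tA_mu'\bigr)\Bigr),
\]
which is associative with identity $(0,0)$ and inverse $(u,t)^{-1}=(-u,-t)$ (using $u^tA_ku=0$). A short calculation using $A_k^t=-A_k$ yields
\[
[(u,0),(u',0)]=\bigl(0,\,(u^tA_1u',\dots,u^tA_mu')\bigr).
\]
The group $H$ embeds faithfully in $\GL(1+n+m,\F_p)$ via
\[
(u,t)\longmapsto\begin{pmatrix}1 & u^t & t^t\\ 0 & I_n & W(u)\\ 0 & 0 & I_m\end{pmatrix},\qquad W(u)=\tfrac12[A_1u\mid\cdots\mid A_mu],
\]
and a polynomial-size generating set for the image $\Gamma$ is obtained from the images of the standard basis vectors $(e_i,0)$ and $(0,f_k)$; verifying that matrix multiplication reproduces the group law on $H$ is a direct computation.

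Next, I would show that the largest abelian subgroup of $\Gamma$ has order exactly $p^{\alpha(\cA)+m}$. Lower bound: if $U\leq\F_p^n$ is isotropic of dimension $d$, then $U\times\F_p^m\leq H$ is abelian (by the commutator identity, and because the $t$-part is central) and has order $p^{d+m}$. Upper bound: let $K\leq\Gamma$ be abelian and let $\pi\colon\Gamma\to\F_p^n$ be the projection $(u,t)\mapsto u$. For any $u,u'\in\pi(K)$, commuting lifts in $K$ force $u^tA_ku'=0$ for every $k$ by the commutator identity, so $\pi(K)$ is isotropic and $|\pi(K)|\leq p^{\alpha(\cA)}$. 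Combining this with $|K\cap\ker\pi|\leq|\ker\pi|=p^m$ and the first isomorphism theorem gives $|K|\leq p^{\alpha(\cA)+m}$.

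Chaining the two steps, a graph $G=([n],E)$ satisfies $\alpha(G)\geq s$ if and only if the matrix group $\Gamma$ built from $\cA_G$ has an abelian subgroup of order at least $p^{s+|E|}$; the threshold has bit length polynomial in the input size when written in binary, and the whole construction runs in time $\poly(n,|E|,\log p)$. Since the maximum independent set problem is NP-hard, the stated result follows. The only delicate point, rather than a true obstacle, is the odd-prime hypothesis: the factor $\tfrac12$ in the group law and in $W(u)$ is essential both to make the multiplication a valid group operation and to ensure that the commutator captures the alternating form faithfully; over characteristic $2$ the analogous statement would require the more intricate Lazard correspondence or a different gadget.
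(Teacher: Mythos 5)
Your proof is correct and takes essentially the same approach as the paper: reduce the maximum isotropic space problem to the abelian subgroup problem by passing through the Baer group of $\cA$, embedded as a matrix group in $\GL(1+n+m,\F_p)$, and use the correspondence between isotropic spaces and abelian subgroups containing the commutator. The only cosmetic difference is that the paper omits the factor $\tfrac12$ in the off-diagonal block (giving an isomorphic group with commutator $2b(u,u')$ rather than $b(u,u')$), and the paper invokes the isotropic-space/abelian-normal-subgroup correspondence from its background section, whereas you spell out both the lower and upper bounds directly -- your projection argument is a clean way to make the upper bound explicit.
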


The proof
of Theorem~\ref{thm:large_abelian} relies on the connection between alternating
matrix spaces over $\F_p$, and $p$-groups of class $2$ and exponent
$p$ for odd $p$, via Baer's correspondence \cite{Bae38} (see 
Section~\ref{app:origin}). 

%By this connection, 
%isotropic
%spaces correspond to abelian normal subgroups containing the commutator subgroup 
%(see Section~\ref{app:origin}).

%\paragraph{The importance of $p$-groups of class $2$ and exponent $p$ in
%theoretical computer
%science.}
%The
%underlying group class
%supporting Theorem~\ref{thm:large_abelian} is $p$-groups of class $2$ and
%exponent
%$p$, that is, a group of prime power order, with the commutator
%subgroup contained in the centre, and every group element $g$ satisfying $g^p=1$.
%$p$-groups, namely groups of prime power orders, are
%known for their abundance among all groups. Indeed, it is well-known that most
%finite groups are $p$-groups \cite{BNV07}.
%Therefore, it is not surprising, and it
It has long been known that
$p$-groups of class 2 and exponent $p$ form a
bottleneck for testing isomorphism of finite groups. To solve the group 
isomorphism problem in time
polynomial in
the group order is a long-standing open problem \cite{Mil78}. This 
problem is
becoming more prominent in light of Babai's breakthrough on graph isomorphism
\cite{Bab16}, as
Babai indicated the group isomorphism problem as a key bottleneck
to put graph isomorphism in P \cite[arXiv version, Section 13.2]{Bab16}. Some 
interesting progress on testing isomorphism of such $p$-groups was recently made 
by 
utilizing the 
connection to alternating matrix spaces \cite{LQ17}.

%It is then desirable to
%identify certain useful
%structures for such $p$-groups in general. The connection between such groups and
%alternating matrix spaces is a natural way to identify such structures.
%, as
%explained in Appendix~\ref{app:origin}.

%Such $p$-groups are widely regarded to be the
%bottleneck of the group isomorphism problem, which asks to decide whether two
%finite groups, given by their multiplication tables, are isomorphic. Babai
%listed
%the group isomorphism problem as one bottleneck for putting the graph isomorphism
%in P \cite[arXiv version, Sec.~13.2]{Bab16}. This is one reason why testing
%isomorphism of such
%$p$-groups, and understanding them in general, is of interest to theoretical
%computer scientists.

%This group class is generally believed to be the bottleneck
%case for the group isomorphism problem;
%
%Even when restricting to $p$-groups of class $2$ and exponent $p$, isomorphism
%testing still poses a great challenge.

%\paragraph{On the number of maximal abelian subgroups.} 
Let us turn to the second application to finite groups. 
The question of bounding the number of maximal
abelian subgroups has been considered for various group families 
\cite{Dix71,Woo89,Ant87,Vdo04}, but to the best of our knowledge, there had been 
no results on  
this question for 
$p$-groups of class 
$2$ and exponent $p$. Let
$P$ be such a group, so that the center $Z(P)\cong \Z_p^m$ and the 
central quotient 
$P/Z(P)\cong \Z_p^n$. 
%It is not hard
%to see that
%there exists such a group $P$ with $p^{\frac{1}{8}n^2+O(n)}$ many maximal abelian
%subgroups. 
The number of maximal
abelian subgroups is upper bounded trivially by $p^{\frac{1}{4}n^2+O(n)}$,
the number of subgroups of $\Z_p^n$. 
%By the correspondence between certain abelian
%subgroups of
%$P$ and those isotropic spaces of the commutator map derived from $P$ (see 
%Section~\ref{app:origin}),
Our Theorem~\ref{thm:iso_bound} then provides the following improvement, whose 
proof is in
Section~\ref{sec:application}.
\begin{theorem}\label{thm:maximal_abel}
Let $P$ be as above. Then the number of maximal abelian subgroups of $P$ is upper
bounded by $p^{\frac{1}{6}n^2+O(n)}$.
\end{theorem}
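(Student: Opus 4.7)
The plan is to reduce Theorem~\ref{thm:maximal_abel} to Theorem~\ref{thm:iso_bound} via Baer's correspondence, showing that counting maximal abelian subgroups of $P$ amounts to counting maximal isotropic spaces of an associated alternating matrix space in $\Lambda(n,\F_p)$.

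First I would fix the alternating matrix space attached to $P$. Since $P$ has class $2$ and exponent $p$ for odd $p$, the commutator induces a well-defined alternating bilinear map $\beta \colon P/Z(P)\times P/Z(P) \to Z(P)$, sending $(\bar x,\bar y)$ to $[x,y]$. Identifying $P/Z(P)\cong \F_p^n$ and $Z(P)\cong \F_p^m$, and expanding $\beta$ componentwise, gives an $m$-tuple of alternating bilinear forms, i.e., a subspace $\cA\leq \Lambda(n,\F_p)$ of dimension at most $m$. This is the standard Baer correspondence setup referenced in Section~\ref{app:origin}.

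Next I would set up the bijection between maximal abelian subgroups of $P$ and maximal isotropic subspaces of $\cA$. The key structural point is that in a $p$-group of class $2$, every maximal abelian subgroup contains $Z(P)$: if $A$ is abelian and maximal, then $Z(P)A$ is also abelian (since $Z(P)$ commutes with everything), so $Z(P)\subseteq A$. Hence maximal abelian subgroups of $P$ are exactly the preimages in $P$ of those subspaces $\bar A\leq P/Z(P)$ that are maximal with respect to the property $\beta(\bar A,\bar A)=0$. By definition of $\cA$, the condition $\beta(\bar A,\bar A)=0$ is exactly the condition that $\bar A$ is an isotropic subspace of $\cA$. Thus the map $A\mapsto A/Z(P)$ is a bijection between the maximal abelian subgroups of $P$ and the maximal isotropic subspaces of $\F_p^n$ with respect to $\cA$.

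Finally, I would apply Theorem~\ref{thm:iso_bound} to $\cA\leq \Lambda(n,\F_p)$ with $q=p$, which bounds the number of maximal isotropic spaces by $p^{\frac{1}{6}n^2+O(n)}$, yielding the claim. The only delicate point, and the one I would double-check carefully, is the ``maximal goes to maximal'' direction of the correspondence: one must verify both that maximal abelian $A$ gives an isotropic $\bar A$ that is not properly contained in any larger isotropic subspace (otherwise the preimage would be a strictly larger abelian subgroup), and conversely that maximal isotropic $\bar A$ lifts to an abelian subgroup that admits no proper abelian enlargement. Both directions follow from the fact that an abelian enlargement $A'\supsetneq A$ with $Z(P)\subseteq A\subseteq A'$ corresponds precisely to an isotropic enlargement $\bar{A'}\supsetneq \bar A$, together with the observation above that $Z(P)$ sits inside every abelian candidate.
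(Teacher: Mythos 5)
Your proof is correct and follows essentially the same route as the paper: identify maximal abelian subgroups (which all contain $Z(P)$) with maximal isotropic spaces under Baer's correspondence, then apply Theorem~\ref{thm:iso_bound}. The only cosmetic difference is that you define the commutator form directly on $P/Z(P)$, so non-degeneracy is automatic, whereas the paper starts with $\phi$ on $P/[P,P]$ and then invokes Observation~\ref{obs:mis_nondeg} to pass to the non-degenerate part $P/Z(P)$.
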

Recall that the
proof of Theorem~\ref{thm:iso_bound}
starts by following the strategy of Wood's proof \cite{Wood11} of bounding the
number of maximal independent sets on a graph. We view this as an interesting and 
somewhat
unexpected example of transferring techniques from graph theory to group
theory.

We also initiate a quantum variant of the theory in Section~\ref{sec:quantum}.
There, the objects are a special
type of quantum channels, and isotropic spaces and isotropic decompositions are
defined on the Kraus operators of such channels. Furthermore, we require the
isotropic decompositions to be orthogonal. One can then transform classical
connected graphs into such channels, and prove an analogue of
Theorem~\ref{thm:reduction}. More surprisingly, we also obtain an efficient
isotropic $2$-decomposition algorithm, as an analogue of 
Theorem~\ref{thm:2-decomp}, by resorting to the recent development on
the periodicity of quantum Markov chains \cite{guan2018decomposition}.

We then present an information theoretic 
interpretation for isotropic spaces in the context of quantum error correction. 
Briefly speaking, from the viewpoint of 
certain natural 
generalizations of 
quantum 
gate fidelities\cite{nielsen2002quantum}, isotropic spaces can be viewed as the 
opposite 
structure of noiseless 
subspaces (Proposition~\ref{prop:quantum}), which have been studied 
intensively in 
quantum error 
correction~\cite{KL97,lidar2012review}. 
Indeed, noiseless subspaces are shelters for the information 
residing in them under quantum noise, while 
the information in an isotropic space would be completely destroyed by quantum noise.

%, The latter structure has been 
%studied intensively in 
%quantum error 
%correction~\cite{KL97,lidar2012review}, including its
%algorithmic aspects~\cite{knill2006protected}. 
%Thanks to this connection, the 
%algorithmic 
%results in Section~\ref{sec:quantum} on isotropic spaces may be viewed as having  
%quantum 
%information theoretic applications.

Let us conclude this subsection with a remark on these applications. After 
building 
a bridge, we expect it to serve as a two-way street between the two sides. 
However, in reality there is usually more traffic in one direction than the other. 
For example, the traffic between
perfect matchings and full-rank matrices mostly goes from the algebra side to the
combinatorial side, e.g., the randomized NC algorithm for perfect matchings 
\cite{Lov79}. The traffic between  shrunk
subsets and
shrunk
subspaces mostly goes in the other direction, e.g., linear algebraic analogues of 
augmenting paths \cite{IKQS15} and scaling \cite{Gurvits}.  
So far, our applications in this work mostly go in the direction from
combinatorics to algebra, following the pattern of the
shrunk subset vs. shrunk subspace case. It will be very interesting to explore 
implications in the other direction in the future.

\subsection{Outlook}

\paragraph{Summary of our contributions.} The concepts of isotropic spaces and
isotropic
decompositions for alternating bilinear maps are classical, with natural 
interpretations in group theory and
manifold theory. 
%However, it seems that
%relatively few
%results have been obtained in the literature. 
Our key new insight is
that
they can be viewed and studied as linear algebraic analogues of independent sets
and vertex colorings. 
This insight leads us to study algorithmic and mathematical problems about 
isotropic 
spaces and isotropic decompositions, by drawing inspirations from results and 
techniques from graph theory and algorithm study. 
The techniques used to address the problems range from combinatorics, to algebra, 
and to quantum information. 

We believe that this
investigation is fruitful, for the following reasons.
\begin{enumerate}
\item First, it discloses new algorithmic and mathematical questions. For example, 
in Section~\ref{subsec:exact} we proposed and studied upper 
bounding 
the number of maximal isotropic spaces, and exact exponential-time 
algorithms for isotropic decomposition numbers  over finite fields. 
\item Second, the results obtained have concrete applications to other 
mathematical and computational disciplines. For example, in 
Section~\ref{subsec:inter1}, we described the applications of our results to 
finite groups and quantum information.
\item Third, it sheds new lights on known results from different research 
directions. For example, in 
Section~\ref{subsec:2-decomp} we compared Tur{\'a}n's extremal 
graph result with 
Buhler, Gupta, and Harris' algebraic geometric result.
\end{enumerate}
%We also initiate a variant of the theory in the context of quantum information. 

This investigation then lays the foundation for yet another bridge between 
graphs and alternating matrix spaces, adding to the classical ones established by 
Tutte and Lov\'asz.

\paragraph{Open ends.}
Several
interesting open problems have been mentioned before, and here we give a summary
and propose some new ones.
\begin{enumerate}
\item By Theorem~\ref{thm:iso_bound}, the number of maximal isotropic spaces of
$\cA\leq \Lambda(n, q)$ is upper bounded by $f(n, q)=q^{\frac{1}{6}n^2+O(n)}$.
There exists an alternating matrix space with $g(n,
q)=q^{\frac{1}{8}n^2+\Omega(n)}$ many maximal isotropic spaces (see
Section~\ref{subsec:exact}). Either improve the current upper bound $f(n,q)$, or
construct an alternating
matrix
space
with more than $g(n, q)$ maximal isotropic spaces. Note that resolving this 
problem would lead to a sharp bound on the number of maximal abelian subgroups of 
$p$-groups of class $2$ and exponent $p$.
\item Improve the exact exponential-time algorithm for computing the isotropic
decomposition number for $\cA\leq \Lambda(n, q)$ in Theorem~\ref{thm:lawler}. An
interesting question is
whether the strategy in \cite{BHK09} can be adapted here. The results in
\cite{BHK+16} should be useful in this context.
%\item The problem of deciding whether an alternating matrix space has an isotropic
%space of dimension $\geq 2$ over $\C$. Theorem~\ref{thm:iso_dim_2} deals with
%$\F_q$, $\Q$, and $\R$. The strategy there fails, partly because $\C$ is
%algebraically closed. %\cite{BFS99}.
%In fact, \cite{BFS99} shows that the
%existential singularity problem for affine matrix spaces over $\C$ is in RP. That
%proof is not directly working here, because we cannot express the
%corresponding condition (e.g. an alternating matrix space has an isotropic space
%of dimension $\geq 2$) as a single polynomial as in there.
\item Despite Theorem~\ref{thm:iso_dim_2}, the complexities of deciding whether an
alternating matrix space has an isotropic
space of dimension $\geq 2$ are not clear over various fields. Even over $\Q$, our
proof for Theorem~\ref{thm:iso_dim_2} relies on a special case of the underlying
existential singularity problem for matrix spaces (see
Section~\ref{subsec:exist_sing}), so it is
left open even for the general case of
that problem over $\Q$.
\item Investigate the behaviours of the isotropic and isotropic decomposition
numbers in the linear algebraic Erd\H{o}s-R\'enyi model
\cite{LQ17,bollobas2001random}.
\item Improve the dependence of the isotropic decomposition number on the maximum
degree, or the dimension of the alternating matrix space
(see Proposition~\ref{prop:bounded_deg}). Note that this problem has motivations 
from classical geometry (see the discussions before 
Proposition~\ref{prop:bounded_deg}).
\end{enumerate}

\paragraph{The structure of the paper.}
%We will present the proof of
%Theorem~\ref{thm:reduction} in the next section~\ref{sec:reduction}, as this is
%the basis
%connecting the graph-theoretic
%structures and those structures on alternating matrix spaces.
We present
certain preliminaries in Section~\ref{sec:prel}. Then in Section~\ref{sec:basic},
we collect some basic facts and properties about isotropic spaces and
decompositions, including their meanings in group theory and 
manifold geometry in Section~\ref{app:origin}. We
then prove Theorem~\ref{thm:reduction} in Section~\ref{sec:reduction}, which is
the basis
connecting the graph-theoretic
structures and those structures on alternating matrix spaces.
We then prove all the main results
mentioned above in the following sections. (We have mentioned the corresponding
section numbers when describing those results.) An appendix then follows,
containing
some
background material and discussions on certain conceptual questions.

%A basic question about colouring is the relations of the chromatic number with
%various
%graph-theoretic quantitites like the clique number and the maximum degree. The
%corresponding concepts, namely cliques and degrees, for alternating matrix spaces
%have been observed by the fourth author in a forthcoming \cite{Qia18}.

\section{Preliminaries}\label{sec:prel}

\paragraph{Notation.} For $n\in \N$, $[n]:=\{1, \dots, n\}$. We use $\uplus$ for
disjoint union of sets. The base of the logarithm is $2$ unless otherwise stated.

Let $\F$ be a field. We use $\F^n$ to denote the vector
space of
\emph{column} vectors of
length $n$ over
$\F$.
The standard basis of $\F^n$ consists of vectors $e_1, \dots, e_n$, where
$e_i$ is
the vector with the $i$th entry being $1$, and other entries being $0$.
The linear span of several vectors or matrices is denoted by $\langle \cdot
\rangle$.
%For
%example, given
%$A_1, \dots, A_m\in \Lambda(n, \F)$, the alternating matrix space spanned by the
%$A_i$'s is
%$\langle A_1, \dots, A_m\rangle$.

For $n, d\in \N$, let $\M(n\times d, \F)$ be the linear
space of $n\times d$ matrices over $\F$, and $\GL(n\times d, \F)$ the set of
$n\times d$ matrices over $\F$ of rank $\min(n, d)$. We also let $\M(n,
\F):=\M(n\times n, \F)$, and $\GL(n, \F):=\GL(n\times n, \F)$. Dimension-$d$
subspaces of
$\F^n$ will
be understood as represented by elements from $\GL(n\times d, \F)$. Given $A\in
\M(n\times d, \F)$, the transpose of $A$ is denoted by $A^t\in \M(d\times n, \F)$.
For convenience, we sometimes write a vector $v$ in $\F^n$ as $v=(v_1, \dots,
v_n)^t$.
Let $\K/\F$ be a quadratic extension, and let $\overline\alpha$ denote the image
of $\alpha\in \K$ under the quadratic involution. For a matrix $A\in \M(n\times d,
\K)$, $A^\dagger$ denotes the conjugate transpose of $A$.

Depending on
the context, $\zerovec$ may denote either the zero
space, a zero vector, or a zero matrix. The identity matrix in $\M(n, \F)$ is
denoted by $I_n$; we may drop the subscript $n$ when it is understood from the
context.
Given a matrix $A\in \M(n, \F)$, its
kernel and image are denoted by
$\ker(A)$ and $\im(A)$, respectively. For $U\leq \F^n$, the image of $U$ under $A$
is denoted by $A(U)$.

\paragraph{Linear algebra.} Given $U\leq \F^n$, a complementary subspace, or just
a complement, is some $V\leq \F^n$ such that $V\cap U=\zerovec$, and $\langle
U\cup V\rangle=\F^n$. Note that complement subspaces of $U$ are not unique.
Indeed, the
number of complements of a dimension-$d$ subspace $U\leq \F_q^n$ is $q^{d(n-d)}$.
The
space orthogonal to $U$ is $\{v\in \F^n : \forall
u\in U, v^tu=0\}$. (Over $\C$, the conjugate transpose is used.) Note that the
space orthogonal to $U$ is not necessarily a
complement to $U$.

\paragraph{On matrix spaces.} Given a matrix space $\cA\leq \M(s\times t, \F)$,
the image of $U\leq \F^t$
under $\cA$ is $\cA(U):=\langle \cup_{A\in \cA}A(U)\rangle$. The dimension of
$\cA$ is denoted by $\dim(\cA)$. The (maximum) rank of $\cA$ is
$\rk(\cA):=\max\{\rk(A) : A\in \cA\}$. Let $\cB\leq \M(s\times t, \F)$ be another
matrix space. We say that $\cA$ and $\cB$ are \emph{equivalent}, if there exist
$C\in\GL(s, \F)$ and $D\in \GL(t, \F)$, such that $\cA=C\cB D:=\{CBD : B\in
\cB\}$. When working with $\cA$, an equivalence transformation is meant to left
multiply $\cA$ with some $C\in \GL(s, \F)$ and right multiply it with some $D\in
\GL(s, \F)$.

\paragraph{On alternating matrices.} Let $A, B\in \Lambda(n, \F)$. We say
that $A$ and $B$ are \emph{isometric}, if
there
exists $T\in \GL(n, \F)$, such that $A=T^tBT$.
Given a dimension-$d$ $U\leq \F^n$
represented by $T\in \GL(n\times d, \F)$, the \emph{restriction} of $A$ to $U$ by
$T$, denoted as $A|_{U, T}$, is
$T^tAT\in \Lambda(d, \F)$. The
radical of $A$ is the subspace $\{u\in \F^n : \forall v\in \F^n,
v^tAu=0\}$, which is just $\ker(A)$. The rank of $A$ is always even. If
$\rk(A)=2r$, then $A$ is isometric to $\begin{bmatrix} \zerovec & I_r & \zerovec
\\  -I_r & \zerovec & \zerovec \\ \zerovec & \zerovec & \zerovec\end{bmatrix}$
(see e.g. \cite[Chap. XV, Sec. 8]{Lang}). We say that $A\in \Lambda(n, \F)$ is
\emph{non-degenerate}, if $A$ is full-rank (so $n$ is even).

\paragraph{On alternating matrix spaces.}Let $\cA, \cB\leq \Lambda(n, \F)$.  We
say that $\cA$ and $\cB$ are
\emph{isometric}, if
there exists $T\in \GL(n, \F)$, such that $\cA=T^t\cB T:=\{T^tBT : B \in \cB\}$.
Given a
dimension-$d$ $U\leq
\F^n$ represented by $T\in \GL(n\times d, \F)$, the \emph{restriction} of $\cA$ on
$U$
via $T$ is $\cA|_{U, T}:=\{T^tAT : A\in \cA\}\leq \Lambda(d, \F)$. When it does
not cause confusion, we may not write $T$ explicitly, and just say the restriction
of $\cA$ to $U$, denoted by $\cA|_U$.
This
corresponds to the concept of induced subgraphs in graph theory. Indeed, we see
that $U$ is an isotropic space if and only if $\cA|_U$ is the zero (alternating
matrix) space. Given
$v\in
\F^n$, the \emph{radical of $v$ in $\cA$}, denoted
as $\rad_\cA(v)$, is $\{u\in \F^n : \forall A\in \cA, u^tAv=0\}$ which is a
subspace of $\F^n$. Elements in $\rad_\cA(v)$ correspond to non-neighbours in
graph theory.
The
\emph{codegree} of $v$ in $\cA$, denoted as $\codeg_\cA(v)$, is
$\dim(\rad_\cA(v))$. Note
that $\codeg_\cA(v)\geq 1$ for nonzero $v$, as $v\in \rad_\cA(v)$. The
\emph{degree} of $v$ in
$\cA$ is $\deg_\cA(v):=n-\codeg_\cA(v)$. More generally, for $U\leq \F^n$,
$\rad_\cA(U)=\{v\in \F^n : \forall u\in U, \forall A\in \cA, u^tAv=0\}$.
When
$\cA$ is
clear from the
context, we may drop the subscript $\cA$ in $\rad_\cA(v)$, $\codeg_\cA(v)$,
$\deg_\cA(v)$, etc.. It is easy to see that for any $v\in \rad(U)$, we have
$U\leq \rad(v)$, or in other words, $U\leq \rad(\rad(U))$.

A vector
$v\in \F^n$ is called \emph{isolated} in $\cA$, if
for
any $A\in \cA$, $Av=\zerovec$, which is equivalent to say that $\deg(v)=0$. This
corresponds to the concept of isolated
vertices in graph theory. The \emph{radical} of $\cA$, $\rad(\cA)$, is the
subspace
of $\F^n$ consisting of all isolated vectors. We say that $\cA$ is
\emph{non-degenerate},
if $\rad(\cA)=\zerovec$, and \emph{degenerate} otherwise. If $\cA\leq \Lambda(n,
\F)$ is degenerate with $\dim(\rad(\cA))=d>0$, then $\cA$ is isometric to $\cA'$
where each $A\in \cA'$ is of the form $\begin{bmatrix}
A' & \zerovec \\
\zerovec & \zerovec
\end{bmatrix}$, where $A'\in \Lambda(n-d, \F)$.

\paragraph{Sets, tuples, and spaces.} Let $\cA\leq \Lambda(n, \F)$ be given by a
linear basis $A_1, \dots, A_m\in \Lambda(n, \F)$. We can collect them as a set
$\sA=\{A_1, \dots, A_m\}\subseteq \Lambda(n, \F)$. Sometimes it is also useful to
impose an order on them, and form a tuple $\bA=(A_1, \dots, A_m)\in \Lambda(n,
\F)^m$. We shall use calligraphic fonts for spaces, bold fonts for tuples, and
sans serif fonts for sets.

Suppose $\cA\leq \Lambda(n,
\F)$ is given by a linear basis $A_1, \dots, A_m\in \Lambda(n, \F)$. Then it is
clear that, given $U\leq \F^n$, for any $u, u'\in U$ and any $A\in \cA$,
$u^tAu'=0$, if and only if for any $u, u'\in U$ and any $i\in[m]$, $u^tA_iu'=0$.
We therefore can define isotropic spaces, and isotropic decompositions, for sets
or tuples of alternating
matrices. In particular, since alternating matrix tuples represent alternating
bilinear maps naturally (see Appendix~\ref{app:spaces_maps}), this observation
suggests that isotropic spaces and decompositions for alternating matrix spaces
and for alternating bilinear maps are basically the same object. Furthermore,
many, though not all, concepts introduced in Section~\ref{sec:prel} about
alternating matrix spaces can be translated natually to alternating bilinear maps,
including degrees, degeneracy, radicals, etc.. (Indeed, some notions there are
actually borrowed from alternating bilinear maps.) On the other hand, note that
the maximum rank in $\cA$ is more natually associated with the space perspective.
More discussions on the relation between these two notions are in
Appendix~\ref{app:spaces_maps}.

Therefore, in the context of isotropic spaces and decompositions, the choices
between
alternating matrix spaces and tuples usually do not matter much. The tuple
perspective is more natural from the algorithm perspective, because the input of
an alternating matrix space to an algorithm is usually an ordered basis. The space
perspective is more natural for forming the analogy with graphs, and more
naturally allows for some more notions including the maximum rank, which is
important in e.g. the proof of Theorem~\ref{thm:iso_dim_2}. The set perspective
will be used in the quantum variant of the theory in Section~\ref{sec:quantum}.
Therefore, it is best
to keep all three perspectives in mind, and see how they fit into our problems.

\subsection{Computational models}\label{subsec:algo_model}

%We will mainly work with the underlying fields being $\F_q$, $\Q$, $\R$, and
%$\C$.
%Suppose we work with the underlying field being $\F$.
%In the most general model,
%we assume that the input matrices (or vectors) are over a field $\E$ where $\E$
%is
%a finite field extension over its prime field $\F$, so $\F$ is either a field of
%prime order or $\Q$.

We will work with two computational models, depending on the problems. The first
model may be called the exact model; see e.g. \cite{lovasz1986algorithmic}. This
is the
model to work with, if field
extensions are unavoidable. In this model, input matrices or vectors are over a
field $\E$ where $\E$ is a finite field extension over its prime field $\F$, so
$\F$ is either a field of prime order or $\Q$. Suppose $\dim_\F(\E)=d$. Then $\E$
is an extension of $\F$ by a single generating element $\alpha$. We represent
$\alpha$ by the minimal polynomial of $\alpha$ over $\F$, and an isolating
interval for $\alpha$ in the case of $\R$, or an isolating rectangle for $\alpha$
in the case of $\C$. Note that from this representation, one can approximate the
numerical value of $\alpha$ arbitrarily closely. When we say that we work over
$\R$ or $\C$, the input is
given as over some number field $\E$ in $\R$ or $\C$. The algorithm is allowed to
work with extension fields of
$\E$ in $\R$ or $\C$, as long as the extension degrees are polynomially bounded.

The second model may be called the arithmetic model. In this model, only basic
field operations are performed, and the issue of working with different field
extensions does not arise. Still, over number fields we will be concerned with the
bit complexities, though it is possible that we may be able to only bound the
number of arithmetic
steps, but not the bit complexities.

We shall mostly work with $\F_q$, $\Q$, $\R$, and $\C$ in this article, though
some results extend to number fields naturally. Sometimes,
we make further restrictions like requiring $q$ to be odd, or the input to be
integral.

\section{Basic facts and properties}\label{sec:basic}

In this section we collect some basic results about isotropic spaces and isotropic
decompositions.

\paragraph{On the definitions.} The following is a somewhat more intuitive
definition of these two notions.

Let $\cA\leq\Lambda(n, \F)$, and let $U\leq \F^n$ be an isotropic space.
Suppose $d=\dim(U)$. Then form a change of basis matrix $T\in \GL(n, \F)$, such
that its first
$d$ columns form a basis of $U$, and the rest columns together with the first $d$
ones span $\F^n$. Then for any $A\in \cA$, we see that $T^tAT$ is of the form
$\begin{bmatrix}\zerovec & B \\ -B^t & C\end{bmatrix}$ where $\zerovec$ is of size
$d\times d$. It is not hard to see that $\cA$ has a dimension-$d$ isotropic space
if and only if there exists such a change of basis matrix $T$.

Similarly, $\cA$ has an isotropic $c$-decomposition, if and only if there exist
$T\in \GL(n, \F)$, and $d_1, \dots, d_c\in \Z^+$ with $\sum_i d_i=n$, such that
for any $A\in \cA$,
$T^tAT=\begin{bmatrix}\zerovec &
A_{1,2} & \dots & A_{1,c} \\ -A_{1,2}^t & \zerovec & \dots & A_{2,c} \\ \vdots &
\vdots & \ddots & \vdots \\ -A_{1,c}^t & -A_{2,c}^t & \dots &
\zerovec\end{bmatrix}$, where the $i$th $\zerovec$ on the diagonal is of size
$d_i$.

%\paragraph{On alternating matrix spaces and tuples.}

\paragraph{Computing the radical.} For many problems about isotropic spaces, given
a degenerate $\cA\leq
\Lambda(n, \F)$,
usually it is possible to reduce to the non-degenerate case. This is facilitated
by the fact that the radical of $\cA$ is easy to compute.

\begin{observation}
Suppose $\cA=\langle A_1, \dots, A_m\rangle\leq \Lambda(n, \F)$ is given by a
linear basis. There is a polynomial-time algorithm that computes a linear basis of
$\rad(\cA)$.
\end{observation}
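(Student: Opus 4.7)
The plan is to observe that $\rad(\cA)$ is cut out by a simple linear system, and then appeal to Gaussian elimination. Concretely, since $\cA = \langle A_1, \dots, A_m \rangle$, a vector $v \in \F^n$ is isolated in $\cA$ if and only if $Av = \zerovec$ for every $A \in \cA$, and by linearity this holds if and only if $A_i v = \zerovec$ for every $i \in [m]$. So I would first reformulate the condition $v \in \rad(\cA)$ as membership in $\bigcap_{i=1}^m \ker(A_i)$.

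Next, I would implement this intersection as a single linear system. Stack the basis matrices into the block matrix
\[
M \;=\; \begin{bmatrix} A_1 \\ A_2 \\ \vdots \\ A_m \end{bmatrix} \in \M(mn \times n, \F),
\]
so that $Mv = \zerovec$ is equivalent to $A_i v = \zerovec$ for all $i$. Then $\rad(\cA) = \ker(M)$, and a linear basis of $\ker(M)$ can be produced by Gaussian elimination on $M$, which runs in $\poly(n, m)$ field operations (and polynomial bit complexity over $\Q$ or $\F_q$).

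Finally, I would note that this algorithm is independent of the choice of basis for $\cA$: any two bases yield block matrices with the same kernel, since the row spaces of the two stacked matrices agree (both equal the span of all rows of all matrices in $\cA$). Since the construction of $M$ takes $O(mn^2)$ time and Gaussian elimination fits in the standard polynomial bounds in both the exact and arithmetic models described in Section~\ref{subsec:algo_model}, the overall procedure is polynomial time. There is no substantive obstacle here; the content of the observation is really just the translation of the defining condition into a kernel computation.
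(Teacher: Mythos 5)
Your proof is correct and follows essentially the same approach as the paper: both reduce the computation of $\rad(\cA)$ to the intersection $\bigcap_{i=1}^m \ker(A_i)$ and solve it by standard linear algebra. Your version is just slightly more explicit about the implementation via a stacked block matrix.
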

\begin{proof}
Observe that $\rad(\cA)=\cap_{i=1}^n\ker(A_i)$. Computing
$\ker(A_i)$ and the intersection of $\ker(A_i)$'s are standard linear algebraic
tasks that can be performed as stated.
\end{proof}

%\subsection{On maximal isotropic spaces}
%\label{sec:maximal_is}

\paragraph{Isotropic spaces and radicals of subspaces.} Let $\cA\leq \Lambda(n,
\F)$.
Recall that for $U\leq \F^n$, we defined $\rad_\cA(U)=\{v\in \F^n : \forall u\in
U, v^tAu=0
\}$. The following observation is immediate.

\begin{observation}\label{obs:isotropic}
Let $U\leq \F^n$ and $\cA\leq \Lambda(n, \F)$. Then we have the following.
\begin{enumerate}
\item $U$ is an isotropic space of $\cA$ if
and only if $U\subseteq \rad(U)$.
\item $U$ is a maximal isotropic space of $\cA$ if and only if $U=\rad(U)$.
\end{enumerate}
\end{observation}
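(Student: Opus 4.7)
The proof is a short unpacking of definitions, with the alternating property of $\cA$ playing the key role in one direction of part 2.

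My plan for part 1 is simply to chase the quantifiers. The isotropy condition ``for all $u, u' \in U$ and all $A \in \cA$, $u^t A u' = 0$'' literally says that every $u' \in U$ belongs to $\rad_\cA(U)$, which is exactly $U \subseteq \rad(U)$. Both directions are immediate from the definitions, so this amounts to one line.

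For part 2, the forward direction (maximal $\Rightarrow$ $U = \rad(U)$) is where the alternating property enters. I already know $U \subseteq \rad(U)$ from part 1, so I only need $\rad(U) \subseteq U$. The plan is to take an arbitrary $v \in \rad(U)$ and show that the subspace $U + \langle v \rangle$ is still isotropic; then maximality forces $v \in U$. To verify isotropy of $U + \langle v \rangle$, I expand $(u_1 + \alpha v)^t A (u_2 + \beta v)$ for $u_1, u_2 \in U$, $\alpha, \beta \in \F$, $A \in \cA$, and check the four resulting terms: the term $u_1^t A u_2$ vanishes because $U$ is isotropic; the term $\alpha \beta \, v^t A v$ vanishes because $A$ is alternating; and the cross terms $\alpha v^t A u_2$ and $\beta u_1^t A v$ vanish because $v \in \rad(U)$ (using $A^t = -A$, which follows from the alternating condition, to convert one direction of pairing into the other).

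The reverse direction ($U = \rad(U) \Rightarrow U$ maximal isotropic) is again a short contradiction argument: part 1 already gives isotropy; if some isotropic $U' \supsetneq U$ existed, then any $v \in U' \setminus U$ would satisfy $u^t A v = 0$ for every $u \in U \subseteq U'$ and every $A \in \cA$, putting $v \in \rad(U) = U$, a contradiction.

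The only even mildly nontrivial point in the whole argument is the use of the alternating condition to handle the $v^t A v = 0$ and the symmetry $v^t A u = -u^t A v$ in part 2, so I expect no real obstacle. I would present the whole observation in a few lines without separating many cases.
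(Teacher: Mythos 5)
Your proof is correct, and it supplies exactly the routine unpacking that the paper omits: the paper states this observation with the remark that it is ``immediate'' and gives no argument. Part~1 is indeed a pure quantifier translation. In part~2 the forward direction is handled correctly; the only non-formal ingredient is verifying that $U + \langle v\rangle$ stays isotropic for $v \in \rad(U)$, and your four-term expansion correctly disposes of the diagonal term $v^t A v$ by alternation and the cross term $v^t A u_2$ by passing through $A^t = -A$. (A cosmetically shorter route, closer to how the paper later \emph{uses} this observation in Proposition~\ref{prop:maximal_is}, is to check directly that both $U$ and $v$ lie in $\rad(\langle U, v\rangle)$ and then invoke part~1 plus the fact that $\rad$ is a subspace; the underlying computation is the same.) The converse direction of part~2 is also correct. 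No gaps.
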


An easy application of Observation~\ref{obs:isotropic} gives a greedy
algorithm for computing one maximal isotropic space.
%, proving
%Proposition~\ref{prop:maximal_is}.
\begin{proposition}\label{prop:maximal_is}
Given a matrix space $\cA\leq \Lambda(n, \F)$, a maximal isotropic space can be
computed in
polynomially many arithmetic steps.
\end{proposition}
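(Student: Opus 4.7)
The plan is to implement a greedy enlargement based on the fixed-point characterisation in Observation~\ref{obs:isotropic}(2), namely that $U$ is a maximal isotropic space of $\cA$ iff $U = \rad_\cA(U)$. I would start from any nonzero $v \in \F^n$ and set $U_0 := \langle v \rangle$; this is isotropic because the alternating condition forces $v^t A v = 0$ for every $A \in \cA$.

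The iterative step is as follows. Given an isotropic $U_i$ with basis $u_1, \dots, u_d$, compute $\rad_\cA(U_i)$ by solving the homogeneous linear system $\{u_k^t A_j x = 0 : k \in [d], j \in [m]\}$ in the variable $x \in \F^n$, where $A_1, \dots, A_m$ is the given basis of $\cA$. This is a standard linear algebraic task, solvable in a polynomial number of arithmetic steps. Since $U_i$ is isotropic, $U_i \leq \rad_\cA(U_i)$. If equality holds, stop and output $U_i$, which is maximal by Observation~\ref{obs:isotropic}(2). Otherwise, pick any $w \in \rad_\cA(U_i) \setminus U_i$ and set $U_{i+1} := U_i + \langle w \rangle$. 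To check that $U_{i+1}$ is still isotropic, one verifies: for $u, u' \in U_i$, $u^t A u' = 0$ by isotropy of $U_i$; for $u \in U_i$, $u^t A w = 0$ because $w \in \rad_\cA(U_i)$, and $w^t A u = -u^t A w = 0$ by the alternating condition; finally $w^t A w = 0$ again by the alternating condition.

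Each iteration strictly increases $\dim(U_i)$, so the loop terminates after at most $n$ steps. Altogether, the algorithm performs $O(n)$ iterations, each of which reduces to solving a linear system of size $O(nm) \times n$ and basic basis manipulations; the total cost is polynomial. There is essentially no obstacle here: the single content of the argument is the verification that the chosen $w \in \rad_\cA(U_i) \setminus U_i$ preserves isotropy, which falls out of the definition of $\rad_\cA(U_i)$ together with $w^t A w = 0$.
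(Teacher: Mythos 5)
Your proposal is correct and follows essentially the same greedy algorithm as the paper: start from a one-dimensional isotropic space, repeatedly compute $\rad_\cA(U)$ and adjoin a vector from $\rad_\cA(U)\setminus U$ until $U=\rad_\cA(U)$, then invoke Observation~\ref{obs:isotropic}(2) for maximality. The only stylistic difference is that you verify the isotropy of $U_{i+1}$ directly by expanding $u^t A u'$, $u^t A w$, $w^t A w$, whereas the paper cites Observation~\ref{obs:isotropic}(1); both are the same observation.
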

\begin{proof} %[Proof of Proposition~\ref{prop:maximal_is}]
%Recall that we are given $\cA=\langle A_1,
%\dots,
%A_m\rangle\leq \Lambda(n, \F)$,
%and our goal is to compute one maximal isotropic space of $\cA$.
%Observation~\ref{obs:isotropic} leads to the following recursive algorithm for
%computing one maximal
%isotropic space.
We first present the algorithm.
Recall that $e_i$ is the $i$th standard basis vector of $\F^n$.
\begin{enumerate}
\item Let $U=\langle e_1\rangle$.
\item While $U\subsetneq \rad(U)$:
\begin{enumerate}
\item Take any $u\in \rad(U)\setminus U$.
\item $U\leftarrow \langle U, u\rangle$.
\end{enumerate}
\item Output $U$.
\end{enumerate}

To see the correctness, note that in Step (2.a), by the choice of $u$, we have
$U\subseteq \rad(\langle U, u\rangle)$ and $u\in \rad(\langle U, u\rangle)$, so
$\langle U, u\rangle$ is an isotropic space by Observation~\ref{obs:isotropic}
(1). In Step (3), $U$ satisfies $U=\rad(U)$, so $U$ is maximal by
Observation~\ref{obs:isotropic} (2).

To see the running time, note that the while loop will be executed by at most $n$
times, since $\dim(U)$ increases by $1$ in each execution. Each step
involves
basic linear algebraic computations which require only polynomially many
arithmetic operations.
\end{proof}

%\todo
Over $\R$ or $\C$, the bit sizes  in the above algorithm may blow up, at
least with
a
straightforward implementation, due to the iterative computations of the radicals.

\paragraph{On field extensions.} Let $\K$ be an extension field of $\F$. Given
$\cA\leq \Lambda(n, \F)$, we let $\cA_\K\leq \Lambda(n, \K)$ be the alternating
matrix spaces when we allow linear combinations over $\K$, or in other words,
$\cA_\K=\cA\otimes_\F\K$. We may write $\cA$ as $\cA_\F$ for further distinction.
For $\cA_\K$ we allow for isotropic spaces to come from
$\K^n$. %, that is not just those subspaces of $\F^n$.
Since isotropic spaces and
$c$-decompositions of
$\cA_\F$ naturally give isotropic spaces and $c$-decompositions of $\cA_\K$, we
have the following (see also \cite[Lemma 6]{Gelbukh}).
\begin{proposition}\label{prop:fd}
Let $\cA_\F\leq \Lambda(n, \F)$ and $\cA_\K\leq \Lambda(n, \K)$ be as above. Then
$\alpha(\cA_\F)\leq \alpha(\cA_\K)$, and $\chi(\cA_\F)\geq \chi(\cA_\K)$.
\end{proposition}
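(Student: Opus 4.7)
The plan is to prove both inequalities by the natural tensoring construction: an isotropic space or decomposition defined over $\F$ lifts to one over $\K$ with the same dimension (or same number of components), hence $\alpha$ can only go up and $\chi$ can only go down.

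For the first inequality $\alpha(\cA_\F)\le\alpha(\cA_\K)$, I would start from a dimension-$d$ isotropic space $U\le\F^n$ of $\cA_\F$ witnessing $\alpha(\cA_\F)=d$, represented by a basis $u_1,\dots,u_d\in\F^n$. The claim is that $\til U:=\langle u_1,\dots,u_d\rangle_\K\le\K^n$ is a dimension-$d$ isotropic space of $\cA_\K$. Dimension is preserved because the $u_i$ are $\F$-linearly independent in $\F^n$ and hence $\K$-linearly independent in $\K^n$ (the transition matrix is the same full-rank $n\times d$ matrix). For the isotropic condition, any $u,u'\in\til U$ are $\K$-linear combinations $u=\sum_i\lambda_i u_i$, $u'=\sum_j\mu_j u_j$, and any $A\in\cA_\K$ is a $\K$-linear combination $A=\sum_k\nu_k A_k$ of a fixed $\F$-basis $A_1,\dots,A_m$ of $\cA_\F$. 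Bilinearity then gives $u^tAu'=\sum_{i,j,k}\lambda_i\mu_j\nu_k\,u_i^tA_ku_j=0$ since each $u_i^tA_ku_j=0$ by hypothesis.

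For the second inequality $\chi(\cA_\F)\ge\chi(\cA_\K)$, I would take an isotropic $c$-decomposition $\F^n=U_1\oplus\cdots\oplus U_c$ with $c=\chi(\cA_\F)$. The proposal is that $\K^n=\til U_1\oplus\cdots\oplus\til U_c$ with $\til U_i:=U_i\otimes_\F\K$ is again an isotropic $c$-decomposition. Each $\til U_i$ is isotropic by the argument of the previous paragraph, and each is nonzero since $U_i\ne\zerovec$. The direct sum property translates because, by concatenating bases of the $U_i$, one obtains a basis of $\F^n$, whose image in $\K^n$ under the inclusion is still a basis of $\K^n$; this basis then decomposes as the concatenation of bases of $\til U_1,\dots,\til U_c$, yielding the internal direct sum.

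I do not expect any real obstacle here; the proposition is a routine compatibility of the constructions with base change, and both parts reduce to the observation that $\F$-bilinear identities $u^tAu'=0$ are preserved under extension of scalars, together with the fact that $\F$-linear independence persists over $\K$. The only care needed is to make sure the resulting $\til U_i$ are nonzero proper summands, which follows directly from $\dim_\K\til U_i=\dim_\F U_i\ge 1$.
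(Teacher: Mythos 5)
Your proof is correct and takes essentially the same approach the paper gestures at: the paper simply observes (without detail) that isotropic spaces and $c$-decompositions of $\cA_\F$ naturally give rise to isotropic spaces and $c$-decompositions of $\cA_\K$ via extension of scalars. You have spelled out the routine details of that observation.
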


In Proposition~\ref{prop:fd}, the inqualities for $\alpha(\cdot)$ and
$\chi(\cdot)$ could be strict, as shown by Buhler, Gupta, and Harris
\cite[pp. 277]{BGH87}.
Recall that in Section~\ref{subsec:2-decomp} we defined $\alpha(\F, n,
m)=\{\alpha(\cA) : \cA\leq \Lambda(n, \F), \dim(\cA)=m\}$, and Buhler et al.
showed
that $\alpha(\F, n, m)\leq \lceil \frac{m+2n}{m+2}\rceil$, when $m>1$ and
the characteristic of $\F$ is not $2$. Furthermore the equality can be attained
for algebraically closed fields. Buhler et al. then demonstrated
examples over $\F_q$ and $\Q$, for which the inequality is strict. Consider the
example over $\Q$, which is some $\cA\leq \Lambda(n, \Q)$ of dimension $n$. They
show that $\alpha(\cA_\Q)=1$ and $\alpha(\cA_\C)=2$, which is equivalent to that
$\chi(\cA_\Q)=n$ and $\chi(\cA_\C)\leq n-1$. This gives the desired separations.
Some interesting discussions on $\R$ vs $\C$ can be found in \cite[Sec. 3]{BGH87}.

\subsection{Isotropic spaces and decompositions in group theory and 
manifold theory}\label{app:origin}
%While we arrive at isotropic spaces and isotropic decompositions 
%for alternating
%matrix spaces from the viewpoint of generalizing graph-theoretic structures, we do
%find out that these structures have appeared in group theory and manifold theory.

In this subsection, we explain the origins of isotropic spaces and decompositions 
for alternating matrix spaces in group theory and manifold theory. These are  
classical, so the purpose here is to provide references for interested 
readers who have not met with these before. 

\paragraph{Group theory.} %Let us first check out the group theory side. 
Let $p$ be 
a prime $>2$. We 
consider $p$-groups of class $2$
and exponent $p$, that is, a group $G$ of prime power order, with the commutator
subgroup $[G, G]$ contained in the centre $Z(G)$, and every group element $g\in G$ 
satisfying
$g^p=1$.
%a group of prime power order, with the commutator 
%subgroup contained in the centre, and every group element $g$ satisfying $g^p=1$.
%This group class is generally believed to be the bottleneck
%case for the group isomorphism problem \cite{BCGQ11,GQ17,LQ17}.

Let $P$ be such a
group
of order $p^\ell$. We have that the
commutator subgroup $[P, P]\cong \Z_p^m$, and the commutator quotient $P/[P,
P]\cong
\Z_p^n$. The commutator bracket then gives an alternating bilinear map
$\phi:P/[P,P]\times P/[P,P]\to [P,P]$, or, after fixing bases of $[P,P]$ and
$P/[P,P]$, an alternating bilinear map $\phi:\Z_p^n\times \Z_p^n\to \Z_p^m$. On 
the other hand, given an alternating bilinear map $\phi:\F_p^n\times \F_p^n\to 
\F_p^m$, one can construct a $p$-group of class $2$ and exponent $p$, $P_\phi$, 
called the \emph{Baer group} corresponding to $\phi$, as 
follows. The group elements are $(v, u)\in \F_p^n\times \F_p^m$, and the group 
multiplication $\circ$ is by $(v_1, u_1)\circ (v_2, u_2) = (v_1+v_2, 
u_1+u_2+\frac{1}{2}\cdot \phi(v_1,v_2))$. This sets up a two-way correspondence 
between $p$-groups of class $2$ and exponent $p$ and alternating biliear maps.
%, which has been used at least since 
%the 1930's \cite{Bae38,Hig60}, that the study of
%such $p$-groups is basically equivalent to the study of such alternating bilinear
%maps. 

Having set up the connection, let us see the group-theoretic interpretations of 
isotropic spaces and decompositions. The following is classical: an isotropic 
space of $\phi$ 
corresponds to a normal abelian subgroup of $P$ containing the commutator subgroup 
(see e.g. \cite{Alp65}). 
More recently, Lewis and Wilson proposed 
the concept of a hyperbolic pair of $P$ 
\cite{LW12}, which just
consists of
two normal abelian subgroups of $P$ which together generate $P$, and whose
intersection equals $[P,P]$. A natural generalization is then the following. A 
hyperbolic $c$-system of $P$ consists of $c$ normal
abelian subgroups $A_1, A_2$, $\dots$, $A_c$, such that $P$ is generated by $A_i$, 
and
for any $i, j\in[c]$, $i\neq j$, $A_i\cap A_j=[P,P]$. A hyperbolic $c$-system then 
naturally corresponds to an isotropic $c$-decomposition of $\phi$.

%It turns out computing a 
%hyperbolic triplet is then a substantially more difficult problem than computing 
%a hyperbolic pair.

%By using $*$-algebra 
%techniques \cite{Wil09}, a polynomial-time algorithm was given in \cite{BMW17} to 
%decide whether $P$ admits a hyperbolic pair when $P$ is given in the natural 
%succint encoding\footnote{By this we mean, e.g. given by
%generating sets represented by
%matrices over finite fields.}. 

\paragraph{Manifold theory.} We then turn to the manifold 
theory side. We shall just walk through some examples of the connection, and point 
the interested reader to the survey \cite{Dim08} for more detailed information. 

Let $M$ be a compact K\"ahler manifold. Let $H^i(M; \C)$ be the $i$th cohomology 
group of $M$
with coefficients in $\C$. The cup product $\smile:H^1(M; \C)\times H^1(M; \C)\to
H^2(M; \C)$ is a skew-symmetric bilinear map. Then as an application of results by 
Castelnuovo and de Franchis, Catanese \cite{Cat91} showed that there 
exists a constant holomorphic map $f:M\to C$, where $C$ is a curve of genus $g\geq 
2$, if and only if, $\smile$ has a dimension-$g$ maximal isotropic space. Catanese 
went on to generalize this connection much further in \cite{Cat91}.

Let $M$ be a smooth closed orientable
$n$-dimensional manifold. Let $H^i(M; \Q)$ be the $i$th cohomology group of $M$
with coefficients in $\Q$. 
%The cup product $\smile:H^1(M; \Q)\times H^1(M; \Q)\to
%H^2(M; \Q)$ is a skew-symmetric bilinear map. 
Gelbukh showed that an isotropic
space of $\smile$ corresponds to a geometric structure on $M$
\cite[Lemma 10, Definition 11, Theorem 13]{Gelbukh}, called an isotropic system,
which consists
of smooth closed
orientable
connected codimension-one submanifolds that are homologically non-intersecting,
homologically independent, and intersecting transversely. The isotropic index
defined in \cite{Gelbukh} for an alternating bilinear map is just the isotropic
number introduced in Definition~\ref{def:main}. In particular, our 
Theorem~\ref{thm:reduction} shows that computing this isotropic index is NP-hard. 
The relations of isotropic indices
with other basic notions in manifold cohomology, including the first Betti number
and the co-rank of the fundamental group, are also studied there.

\section{Proof of Theorem~\ref{thm:reduction}}\label{sec:reduction}

Recall that we have a graph $G=([n], E)$, and an alternating matrix
space $\cA_G$ which  is spanned by those elementary alternating matrices $A_{i,
j}$ where $\{i, j\}\in E$.

\paragraph{(1) Isotropic spaces and independent sets.} We need to show that
$G=([n], E)$
has a size-$s$ independent set if and only if $\cA_G$ has a dimension-$s$
isotropic space.

For the only if direction,
let $T=\{i_1, \dots, i_s\}$ be a size-$s$ independent
set of $G$. Let $U$ be the subspace of $\F^n$ spanned by $e_{i_1}, \dots,
e_{i_s}$; recall that $e_i$ denotes the $i$th standard basis vector of $\F^n$. It
is easy to verify that $U$ is an isotropic space of $\cA$ of dimension $s$.

For the if direction, let $U=\langle u_1, \dots, u_s\rangle$ be a dimension-$s$
isotropic space of $\cA_G$. We form an $n\times s$ matrix $U$ such that the $i$th
column of $U$ is $u_i$, that is, $U=[u_1, \dots, u_s]\in \GL(n\times s, \F)$.
Suppose $U^t=[w_1, \dots, w_n]\in \GL(s\times n, \F)$, $w_i\in \F^s$.
Since $U$ is
of rank $s$, there exist integers $i_1, \dots, i_s$, $1\leq i_1<\dots<i_s\leq n$,
such that $w_{i_1}, \dots, w_{i_s}$ are linearly independent. We now claim that
$\{i_1, \dots, i_s\}$ forms an independent set of the original graph $G=([n], E)$.
If not, suppose $\{i_j, i_{j'}\}$, $1\leq j<j'\leq s$, is in $E$.
As $U$ is an isotropic space of $\cA_G$, we have that for any
$\{k, \ell\}\in E$,
$U^tA_{k, \ell}U=\zerovec$. As
$A_{k, \ell}=e_ke_\ell^t-e_\ell e_k^t$, we have
$$U^tA_{k, \ell}U=U^t(e_ke_\ell^t-e_\ell e_k^t)U=w_kw_\ell^t-w_\ell w_k^t=0,$$
which
implies that
$w_k$
and $w_\ell$ are linearly dependent. It follows that $w_{i_j}$ and $w_{i_{j'}}$
are linearly dependent.
% But by the argument of the last paragraph, $w_{i_j}$ and $w_{i_{j'}}$ are
%linearly dependent.
We then arrive at a contradiction.

\paragraph{(2) Isotropic decompositions and vertex colorings.} We need to show
that $G=([n], E)$ has a vertex $c$-coloring if and only if $\cA_G$ has an isotropic
$c$-decomposition.

For the only if direction, assume $G$ has a vertex $c$-coloring, and let
$[n]=T_1\uplus T_2\uplus \dots \uplus T_c$ be a
partition of $[n]$ into disjoint union of independent sets. Suppose $|T_j|=t_j$,
and $T_j=\{i_{j, 1}, \dots, i_{j, t_j}\}\subseteq [n]$. Let $U_j=\langle e_{i_{j,
1}}, \dots, e_{i_{j, t_j}}\rangle\leq \F^n$, so $\F^n=U_1\oplus U_2\oplus \dots
\oplus U_c$. By (1), every $U_j$ is an isotropic space. This gives an isotropic
$c$-decomposition of $\cA_G$.

For the if direction, let $\F^n=U_1\oplus U_2\oplus \dots\oplus U_c$ be an
isotropic $c$-decomposition. Let $d_i=\dim(U_i)$, and $b_i=\sum_{j=1}^id_j$, for
$i\in[c]$. Set $b_0=0$. Let
$P=[p_1, \dots, p_n]$ be an $n\times n$ invertible matrix, where $p_i\in \F^n$,
such that $p_{b_{i-1}+1}, \dots, p_{b_i}$ form a basis of $U_i$. By abuse of
notation, we also let $U_i=[p_{b_{i-1}+1}, \dots, p_{b_i}]\in \GL(n\times d_i,
\F)$, so $P=[U_1, \dots, U_c]$. Let $W_i=U_i^t=[w_{i, 1}, \dots, w_{i, n}]\in
\GL(d_i\times n, \F)$. Since $U_i$ is an isotropic space, by (1), we know that for
any $\{k, \ell\}\in E$, $w_{i, k}$ and $w_{i, \ell}$ are linearly dependent.

We then use the following simple linear algebraic result, which is a consequence
of the Laplacian expansion. For $A, B\subseteq[n]$ of the same size, we let
$C|_{A, B}$ to denote the
submatrix of $C$ with row indices from $A$ and column indices from $B$.
\begin{lemma}\label{lem:laplace}
Let $P=[U_1, \dots, U_c]\in \GL(n, \F)$, $U_i\in \GL(n\times d_i, \F)$, and suppose
$\det(P)\neq 0$. Then there exists a partition of $[n]=T_1\uplus T_2\uplus
\dots\uplus T_c$, where $|T_i|=d_i$, such that $\forall i\in[c]$, $\rk(U_i|_{T_i,
[d_i]})=d_i$.
\end{lemma}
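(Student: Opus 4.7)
The plan is to deduce the existence of the desired partition directly from the generalized Laplace expansion of $\det(P)$ along the block column structure $P = [U_1, \dots, U_c]$. Recall that for any partition of the column index set $[n] = [1,d_1] \uplus [d_1+1, d_1+d_2] \uplus \cdots$ into consecutive blocks of sizes $d_1, \dots, d_c$, the Laplace (or Cauchy–Binet style) expansion gives
\begin{equation*}
\det(P) \;=\; \sum_{(T_1,\dots,T_c)} \varepsilon(T_1,\dots,T_c)\, \prod_{i=1}^{c} \det\bigl(U_i|_{T_i,[d_i]}\bigr),
\end{equation*}
where the sum ranges over all ordered set partitions $[n] = T_1 \uplus T_2 \uplus \cdots \uplus T_c$ with $|T_i|=d_i$, and $\varepsilon(T_1,\dots,T_c) \in \{\pm 1\}$ is the sign of the permutation sending the natural order on $[n]$ to the order that lists the elements of $T_1$, then $T_2$, and so on.

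Given this identity, the proof is immediate. Since $\det(P) \neq 0$ by assumption, at least one term in the above sum must be nonzero. For any partition $(T_1, \dots, T_c)$ that contributes nontrivially, every factor $\det(U_i|_{T_i,[d_i]})$ must itself be nonzero, and hence the $d_i \times d_i$ submatrix $U_i|_{T_i,[d_i]}$ has rank exactly $d_i$. This is precisely the conclusion of the lemma.

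The only non-routine point is justifying the block Laplace expansion in the form above; I would simply cite it as the standard generalized Laplace development (iterating the usual Laplace expansion over a complementary pair of column blocks $c-1$ times, or equivalently appealing to the formula $\det(P) = \sum_\sigma \mathrm{sgn}(\sigma)\prod_i P_{\sigma(i),i}$ and grouping the sum over $S_n$ according to how the image of each column block $[d_{i-1}+1, d_i+d_{i-1}]$ under $\sigma$ partitions $[n]$). No obstacle is expected; the lemma is essentially a one-line consequence of the expansion, which is why the authors named it after Laplace.
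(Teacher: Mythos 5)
Your proof is correct and matches the paper's intended argument: the paper itself does not spell out a proof but simply states that the lemma is ``a consequence of the Laplacian expansion,'' which is exactly the generalized Laplace expansion over ordered row-set partitions that you invoke. Your observation that a nonzero $\det(P)$ forces at least one product of minors $\prod_i \det(U_i|_{T_i,[d_i]})$ to be nonzero, and hence each factor to be nonzero, is the whole content of the lemma.
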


We claim that the partition of $[n]=T_1\uplus T_2\uplus\dots\uplus T_c$ from
Lemma~\ref{lem:laplace} gives a vertex $c$-coloring of $G$. To see this, observe
that, the condition $\rk(U_i|_{T_i, [d_i]})=d_i$ is equivalent to that the vectors
$w_{i, j}$, $j\in T_i$, are linearly independent. This implies that $G$ cannot
have edges of the form $\{k, \ell\}$ where $k, \ell\in T_i$, as otherwise $w_{i,
k}$ and $w_{i, \ell}$ would be linearly dependent. Hence $T_i$ is an independent
set for
any $i\in[c]$. This completes the proof of the second part of
Theorem~\ref{thm:reduction}.

\section{An exposition of the proof of Theorem~\ref{thm:2-decomp}}\label{sec:star}

As mentioned in Section~\ref{subsec:2-decomp}, we give an exposition of the proof
of Theorem~\ref{thm:2-decomp} for $\F_q$ in \cite{BMW17}, using some ingredients
from \cite{IQ18} to handle $\R$ and $\C$. The main purpose is to give the reader a
flavor of how the so-called $*$-algebra technique is applied in this context. We
could not give all the details here, as that would be too long and unnecessary;
the interested reader may wish to go to \cite{Wil09,BW12,IQ18}, which contain
detailed proofs for using $*$-algebras to solve several closely related problems.

Recall that we are given $\cA=\langle A_1, \dots, A_m\rangle\leq \Lambda(n, \F)$,
and our goal is to find a non-trivial direct sum decomposition $\F^n=U_1\oplus
U_2$, such that $\cA|_{U_i}=\zerovec$ for $i=1, 2$. We first reduce to the
non-degenerate setting as follows. Suppose $\cA$ is degenerate, that is, there
exists $T\leq \F^n$ of dimension $n'$ such that $\cA'=\cA|_T$ is non-degenerate.
Then it is not hard to
verify that $\cA$
admits an isotropic $2$-decomposition if and only if $\cA'$ admits an isotropic
$2$-decomposition.
%The if direction is easy to see. For the only if direction,
%given an isotropic $2$-decomposition $U_1\oplus U_2$ of $\cA$, we can first
%transform it to another isotropic $2$-decomposition $U_1'\oplus U_2'$ such that
%$\rad(\cA)\leq U_2'$. From this one, we can transform it further to get
%$U_1''\oplus U_2''$ such that $U_1''\leq T$, and $\dim(T\cap
%U_2'')=\dim(T)-\dim(U_1'')$, which then gives an isotropic
%$2$-decomposition of $\cA'$.

In the following we assume that $\cA$ is non-degenerate. Let $\bA=(A_1, \dots,
A_m)\in \Lambda(n, \F)^m$. The adjoint algebra of $\bA$ is defined as
$\Adj(\bA):=\{D\in \M(n, \F) : \exists B\in \M(n, \F), \forall i\in[m],
B^tA_i=A_iD\}$. Since $\bA$ is non-degenerate, if such $B$ exists, then it is
unique. Then a natural involution (an anti-automorphism of order at most $2$) on
$\Adj(\bA)$ is to send $D\in \Adj(\bA)$ to this unique $B$ satisfying
$B^tA_i=A_iD$ for any $i\in[m]$, also denoted as $D^*$. The adjoint algbera is the
key device for the algorithm.

We now translate the isotropic $2$-decomposition problem for $\cA$, and therefore
$\bA$, to a problem about $\Adj(\bA)$. Any direct sum decomposition
$\F^n=U_1\oplus U_2$ can be encoded as a projection matrix $P\in \M(n, \F)$, that
is, $P^2=P$, $\im(P)=U_1$, $\ker(P)=U_2$. The key observation in \cite{BMW17} is
that, $P$ corresponds to an isotropic $2$-decomposition if and only if $P\in
\Adj(\bA)$ and $P^*=I-P$. This means that we need to search for an idempotent $P$
in $\Adj(\bA)$ satisfying $P^*=I-P$. Following \cite{BMW17}, we call such an
idempotent a hyperbolic idempotent.

To do that, we utilize the $*$-algebra structure of $\Adj(\bA)$. For this, we
recall in a nutshell the structure of $*$-algebras. Let $\fA$ be a $*$-algebra.
The Jacobson
radical of $\fA$, denoted by $\rad(\fA)$, is the largest nilpotent
ideal
of $\fA$, and it is invariant under $*$. The factor algebra $\fA/\rad(\fA)$ is
semi-simple, namely it is a direct sum of simple algebras. Let
$\fA/\rad(\fA)\cong S_1\oplus \dots \oplus S_k$, where each $S_i$ is simple. The
$*$ either switches between $S_i$ and $S_j$, $i\neq j$, or preserves $S_i$. Both
cases are referred
to as $*$-simple. The
former case is called the exchange type. In the
latter case, $S_i$ is a simple algebra with an involution.  Over any field,
Wedderburn's theory gives a characterization of such simple algebras (see e.g.
\cite[Chap. 5]{AB95}). Based on this, involutions on simple algebras are also
classified \cite[Chap. X.4]{Alb39}, and explicit lists for $\F_q$, $\R$, and $\C$
can be found in \cite{IQ18}.

Given this structure, the idea is to reduce the search for a hyperbolic idempotent
from general $*$-algebras to simple $*$-algebras. Clearly, if $\fA$ contains a
hyperbolic idempotent, then $\fA/\rad(\fA)$, and each $*$-simple summand of
$\fA/\rad(\fA)$, all contain hyperbolic idempotents. On the other hand, suppose
each $*$-simple summand of $\fA/\rad(\fA)$ contains a hyperbolic idempotent. Then
the sum of these idempotents is a hyperbolic idempotent for $\fA/\rad(\fA)$. From
here, to obtain a hyperbolic idempotent for $\fA$, we can use the classical
idempotent lifting technique (see e.g. \cite[Lemma 5.10]{Wil09b}).

Therefore, it remains to handle the $*$-simple case. Let $\K$ denote some
appropriate division algebra containing $\F$. The reader may as well think of
$\K$ as an extension field, as for $\F_q$, $\R$, and $\C$, the only ``non-field''
case is the quaternion algebra over $\R$. The exchange type is easy to
handle: it is $*$-isomorphic to $\M(\ell, \K)\oplus \M(\ell, \K)^{op}$ with $(A,
B)^*=(B, A)$. So one hyperbolic idempotent can be $(I, \zerovec)$. The simple case
is more interesting. It is isomorphic to $\M(\ell, \K)$ with the involution defined
by some non-degenerate classical form $F$; this includes alternating, symmetric,
and
Hermitian ones.\footnote{While in principle this is correct, depending on the
field, some type may not exist. For details see \cite{IQ18}.} Then for $A\in
\M(\ell,
\K)$, $A^*=F^{-1}A^\dagger F$, where
$\dagger$ denotes either transpose (for alternating and symmetric) or conjugate
transpose (for Hermitian). The problem is then to find a hyperbolic idempotent, or
equivalently, an isotropic $2$-decomposition, for
this form $F$. But now this is a single form, so one can bring it to say a
canonical form, and examine case by case.

For example, over $\C$, there are two types, symmetric and alternating. (The
Hermitian type does not appear because $*$ is required to preserve $\C$.) A
non-degenerate alternating form can always be transformed to $\begin{bmatrix}
\zerovec
& I \\ -I & \zerovec\end{bmatrix}$, so isotropic $2$-decomposable. A
non-degenerate symmetric form can be transformed to the identity matrix $I$. When
$I$ is of odd size $\ell$, then it does not admit an isotropic $2$-decomposition.
Because if so, then $I$ is isometric to $J=\begin{bmatrix} \zerovec & A \\ A^t &
\zerovec\end{bmatrix}$, where $A$ is of size $i\times (\ell-i)$. But then $\rk(J)$
is
either $2i$ or $2(\ell-i)$, an even number, so $J$ is degenerate, a contradiction.
When $I$ is of even size $\ell$, then it has an isotropic $2$-decomposition. This
is
because we have $\begin{bmatrix} 1 & i \\ i & 1 \end{bmatrix}\begin{bmatrix} 1 & 0
\\ 0 & 1 \end{bmatrix}\begin{bmatrix} 1 & i \\ i & 1 \end{bmatrix}=\begin{bmatrix}
0 & 2i \\ 2i & 0 \end{bmatrix}$. We can then use this to bring $I$ to
$\begin{bmatrix} \zerovec & 2i I' \\ 2i I' & \zerovec\end{bmatrix}$, where $I'$ is
the $\ell/2\times \ell/2$ identity matrix. Similar reasonings can be carried over
$\R$
and $\F_q$.

To make the above procedure constructive, we need to compute the algebra structure
efficiently. This can be done, over $\F_q$ with randomness \cite{Ron90} and over
$\C$ deterministically \cite{Ebe91,Ron94}. We also need to compute the $*$-algebra
structure, e.g. the forms associated with a simple $*$-algebra, by
\cite{Wil09,BW12,IQ18}. Finally, we need to compute the canonical forms of various
forms by \cite{Wil13Gram,Woe05}. %(For full references, see \cite{IQ18}.)

For the algorithm analysis, we work in the exact model, as going to extension
fields is already unavoidable in computing the algebra structure. Over $\F_q$, one
can always recover, from the solutions in the simple cases, an explicit hyperbolic
projection matrix $P$ over the original field, and then we can obtain the bases of
the two subspaces in an isotropic $2$-decomposition by computing the image and
kernel of $P$.  Over $\R$ and $\C$, one can
represent this projection matrix as a product of matrices over different extension
fields \cite[Sec. 3.5]{IQ18}.

This
concludes an exposition of the algorithm.
%Given a linear basis of $\fA$, the algebra structure can be computed efficiently,
%probabilistically over $\F_q$ \cite{Ron90,EG00,Iva00}, and deterministically over
%$\R$ and $\C$
%\subsection{An algorithm based on $*$-algebras}

\section{Proof of Theorem~\ref{thm:maximum_bipartite}}\label{sec:maximum_bipartite}

Let us first recall the alternative definition of non-commutative rank for a
slightly
more general situation. Given $\cB\leq \M(s\times t,
\F)$, an isotropic pair is a pair of vector spaces $(U, V)$, $U\leq \F^s$, $V\leq
\F^t$, such that for any $u\in U, v\in V$, and any $B\in \cB$, we have $u^tBv=0$.
The non-commutative rank of $\cB$ is then defined as $\ncrk(\cB):=(s+t)-\max\{c+e
: c=\dim(U),
e=\dim(V), (U, V)\text{ is an isotropic pair of }\cB\}$. 
Note that the recent works \cite{GGOW16,IQS17} mostly deal with the setting that 
$s=t$.
%The reader can verify
%that when $s=t$, this definition of non-commutative rank coincides with the usual
%non-commutative rank as in \cite{GGOW16, IQS17}. 
%For the reader's convenience, we
%explain this
%in appendix~\ref{app:ncrk}. 
Suppose $\ncrk(\cB)=r$. By equivalence
transformations, we can assume that every
$\cB$ is of the form $\begin{bmatrix} B_1 & B_2 \\ \zerovec & B_3\end{bmatrix}$,
where $B_2$ is of size $a\times b$ such that $a+b=r$.

We review the setting for Theorem~\ref{thm:maximum_bipartite}. Let $\cA\leq
\Lambda(n, \F)$ be a bipartite alternating matrix space. By isometric
transformations, we
can assume that every $A\in \cA$ is of the form $\begin{bmatrix} \zerovec & B \\
-B^t
& \zerovec\end{bmatrix}$ where $B\in \M(s\times t, \F)$, $s+t=n$. All such $B$
form
a matrix space $\cB\leq \M(s\times t, \F)$. We call such $\cB$ a matrix space
induced from the bipartite structure of $\cA$.
%Note that such $\cB$ may not be
%unique. To see this, consider $\cA=\langle A\rangle\leq \Lambda(2r, \F)$ where
%$A=\begin{bmatrix} \zerovec & I_r \\ -I_r & \zerovec\end{bmatrix}$. Then every
%$B$ satisfying $B^tAB=\begin{bmatrix} \zerovec & C \\ -C^t &
%\zerovec\end{bmatrix}$ gives rise to a
%the bipartite structure
%of $\cA$ may induce several matrix spaces $\cB$.

%The key to the proof of Theorem~\ref{thm:maximum_bipartite} is the following
%lemma.
%\begin{lemma}\label{lem:iso_ncrk}
%Let $\cA\leq \Lambda(n, \F)$ and $\cB\leq \M(s\times t, \F)$ be as above. Then
%$\alpha(\cA)=n-\ncrk(\cB)$.
%\end{lemma}
Before proving Theorem~\ref{thm:maximum_bipartite}, let us examine some examples
of isotropic spaces of
$\cA$.
\begin{enumerate}
\item First note that
$\alpha(\cA)\geq \max\{s, t\}$.
\item Second, suppose $\ncrk(\cB)=r$, so there exists
$P\in \GL(s, \F)$ and $Q\in \GL(t, \F)$, such that every matrix in $P\cB Q$ is of
the form $\begin{bmatrix} B_1 & B_2 \\ \zerovec & B_3\end{bmatrix}$,
where $B_2$ is of size $a\times b$ such that $a+b=r$. Let $R=\begin{bmatrix}P^t &
\zerovec \\ \zerovec & Q \end{bmatrix}$. Then we have
\begin{eqnarray*}
R^tAR & = & \begin{bmatrix} P & \zerovec \\ \zerovec & Q^t \end{bmatrix}
\begin{bmatrix} \zerovec & B \\ -B^t & \zerovec\end{bmatrix} \begin{bmatrix} P^t &
\zerovec \\ \zerovec & Q \end{bmatrix} \\
& = & \begin{bmatrix} \zerovec & PBQ \\ -(PBQ)^t & \zerovec\end{bmatrix} \\
& = & \begin{bmatrix} \zerovec & \zerovec & B_1 & B_2 \\ \zerovec & \zerovec &
\zerovec & B_3 \\ -B_1^t & \zerovec & \zerovec & \zerovec \\ -B_2^t & -B_3^t &
\zerovec & \zerovec\end{bmatrix},
\end{eqnarray*}
from which we get an isotropic space, consisting of the zero blocks in the middle
part, of dimension $n-r$. (Note that the zero matrix at the $(2, 3)$ position is
of size $(s-a)\times (t-b)$, and the isotropic space corresponding to the zero
blocks in the middle part is of size $(s-a)+(t-b)=(s+t)-(a+b)=n-r$.)
\item The third example we now describe represents a difference from the
graph-theoretic
setting. Suppose $s=t$, and every $B\in \M(s, \F)$ is symmetric. So $n=2s$. Then
let $U=\{u\in \F^n=\F^{2s} : u=(u_1, \dots, u_s, u_1, \dots, u_s)^t\in \F^n\}\leq
\F^n$.
That is, $U$ consists of those vectors whose $i$th component equals the $(i+s)$th
component, for $i\in [s]$, and $\dim(U)=s$. We claim that $U$ is an isotropic
space of $\cA$. To
see this, for a given $u\in U$, we can write it as $\begin{bmatrix} v \\
v\end{bmatrix}$ where $v\in \F^s$. So for $A\in \cA$ such that
$A=\begin{bmatrix}\zerovec & B \\ -B^t & \zerovec \end{bmatrix}$, we have
$
 \begin{bmatrix}v^t & v^t\end{bmatrix}\begin{bmatrix}\zerovec & B \\ -B^t &
\zerovec \end{bmatrix} \begin{bmatrix} v \\ v\end{bmatrix}
= -v^tB^tv+v^tBv=-v^tBv+v^tBv=0$.
\end{enumerate}
The proof of Theorem~\ref{thm:maximum_bipartite} basically suggests that isotropic
spaces of
the third type are not going to matter for the comparison with $\alpha(\cA)$.
We now go into the proof.
\begin{proof}[Proof of Theorem~\ref{thm:maximum_bipartite}] Let $r=\ncrk(\cB)$ and
$d=\alpha(\cA)$.

We first show that $\alpha(\cA)\geq n-\ncrk(\cB)$. Note that $r\leq
\min\{s, t\}$, because we have the trivial isotropic pairs $(\F^s, \zerovec)$ and
$(\zerovec, \F^t)$. If $r=\min\{s, t\}$, note that $n-\min\{s,
t\}=\max\{n-s, n-t\}=\max\{s, t\}$, and we do have isotropic spaces of dimensions
$s$
and $t$, respectively, by (1) from above. If $r<\min\{s, t\}$, then by (2) from
above, there is an isotropic space of dimensions $n-r$. This shows that
$\alpha(\cA)\geq n-\ncrk(\cB)$.

We then show that $\alpha(\cA)\leq n-\ncrk(\cB)$, or equivalently, $\ncrk(\cB)\leq
n-\alpha(\cA)$. Again, if $\alpha(\cA)=\max\{s, t\}$, then $\ncrk(\cB)\leq
n-\max\{s, t\}=\min\{s, t\}$, which trivially holds. So we assume that
$\alpha(\cA)>\max\{s, t\}$.
Let $U$ be an isotropic space of
dimension $d=\alpha(\cA)$, and take an $n\times d$ matrix whose columns form a
basis of $U$, which, by abuse of notation, is also denoted by $U$. Let
$U=\begin{bmatrix} v_1' & v_2' & \dots & v_d' \\ w_1' & w_2' & \dots &
w_d'\end{bmatrix}$, where $v_i'\in \F^s$, and $w_i'\in \F^t$. Let
$V'=\begin{bmatrix} v_1'&v_2'&\dots&v_d'\end{bmatrix}$, and $W'=\begin{bmatrix}
w_1'&w_2'&\dots&w_d'\end{bmatrix}$. By doing a linear
combination over the columns, we can assume that $U$ is of the form
$\begin{bmatrix}
V & \zerovec \\
W'' & W
\end{bmatrix}$ where $V\in \M(s\times c, \F)$, $W''\in \M(t\times c, \F)$, and
$W\in \M(t\times e, \F)$, such that $c+e=d$, $\rk(V)=c$, and $\rk(W)=e$. By abuse
of notation, let $V$ be the subspace of $\F^s$ spanned by columns in $V$, and $W$
the subspace of $\F^t$ spanned by columns in $W$.
\begin{claim}
Let $V$ and $W$ be as above. Then $(V, W)$ form an
isotropic pair for $\cB$.
\end{claim}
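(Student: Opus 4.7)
The plan is to exploit the block structure of the isotropic matrix $U$ and of matrices $A \in \cA$ directly: apply the isotropic condition $u_1^t A u_2 = 0$ to vectors $u_1, u_2$ chosen from different column blocks of $U$, and read off the claim after a small calculation.

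Concretely, first I would pick an arbitrary $\alpha \in \F^c$ and $\beta \in \F^e$ and set
\[
u_1 = U \begin{bmatrix}\alpha \\ \zerovec\end{bmatrix} = \begin{bmatrix} V\alpha \\ W''\alpha \end{bmatrix}, \qquad u_2 = U \begin{bmatrix}\zerovec \\ \beta\end{bmatrix} = \begin{bmatrix} \zerovec \\ W\beta \end{bmatrix},
\]
so that both $u_1, u_2$ lie in $U$. Then for an arbitrary $B \in \cB$, take $A = \begin{bmatrix}\zerovec & B \\ -B^t & \zerovec\end{bmatrix} \in \cA$ and compute
\[
u_1^t A u_2 = \begin{bmatrix}(V\alpha)^t & (W''\alpha)^t\end{bmatrix}\begin{bmatrix} B\, W\beta \\ \zerovec \end{bmatrix} = \alpha^t V^t B W \beta.
\]
Since $U$ is isotropic for $\cA$, the left-hand side is zero. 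As $\alpha$ and $\beta$ are arbitrary, this forces $V^t B W = \zerovec$ for every $B \in \cB$, which is exactly the statement that $(V, W)$ is an isotropic pair for $\cB$ (once we identify $V$ and $W$ with the column spans of the respective matrices, using that $\rk(V) = c$ and $\rk(W) = e$).

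There is no genuine obstacle here; the content is essentially bookkeeping with the bipartite block structure of $A$ together with the ``staircase'' normal form already arranged for $U$. The only thing to watch is that the normal form $U = \begin{bmatrix}V & \zerovec \\ W'' & W\end{bmatrix}$ is legitimate, which follows from elementary column operations on the original basis matrix of $U$ (these preserve the column span and hence the property of being an isotropic subspace). After that, the identity $u_1^t A u_2 = \alpha^t V^t B W \beta$ does all the work, and the claim drops out.
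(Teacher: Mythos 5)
Your proposal is correct and takes essentially the same approach as the paper's proof: both exploit the staircase form $U = \begin{bmatrix} V & \zerovec \\ W'' & W\end{bmatrix}$ and the block anti-diagonal form of $A$, pairing a vector from the first block of columns against one from the second, so that the zero top block of the second kills the unwanted term. Your version is marginally cleaner in that it works directly with the block product $u_1^t A u_2 = \alpha^t V^t B W\beta$ rather than first deriving the symmetry identity $v_i^t B w_j = v_j^t B w_i$ and then specializing, but this is a presentational difference, not a different argument.
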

%$\begin{bmatrix}
%V_1 & V_2 & \zerovec \\
%\zerovec & W_2 & W_3
%\end{bmatrix}$, where $V_1\in \M(s\times a, \F)$, $V_2\in \M(s\times b, \F)$,
%$W_2\in \M(t\times b, \F)$, and $W_3\in \M(t\times c, \F)$, with $a>0$,
%$c>0$, $b\geq 0$, $a+b+c=d$,
%$a+b\leq s$, and $b+c\leq t$. Furthermore, column vectors in $[V_1 V_2]$ are
%linearly independent, and column vectors in $W_3$ are linearly independent.
%This is because we can first use elementary column
%operations to put $U$ into the form $\begin{bmatrix} V_1' &
%\zerovec\\ W_1' & W_3\end{bmatrix}$, where $\rk(V_1')=\rk(V'')$. Because the
%column rank of $U$ is full, we must
%We can then use
%elementary column operations to put
%$\begin{bmatrix} V_1' \\ W_1'\end{bmatrix}$ into the form $\begin{bmatrix}V_1 &
%V_2\\ \zerovec & W_2\end{bmatrix}$ with $\rk(W_2)=\rk(W_1')$.
\begin{proof}
From above we have that
$U=\begin{bmatrix}
V & \zerovec \\
W'' & W
\end{bmatrix}$, and suppose $U=
\begin{bmatrix} v_1 & v_2 & \dots & v_d \\ w_1 & w_2 & \dots &
w_d\end{bmatrix}$. This means that $v_i=\zerovec$ for $c<i\leq d$.
Since $U$ is an isotropic space of $\cA$, we have, for
any $i, j\in [d]$, and
$A=\begin{bmatrix} \zerovec & B \\-B^t & \zerovec\end{bmatrix}\in \cA$,
$\begin{bmatrix} v_i^t & w_i^t\end{bmatrix}\begin{bmatrix} \zerovec & B \\-B^t &
\zerovec\end{bmatrix}
\begin{bmatrix} v_j \\
w_j\end{bmatrix}=-w_i^tB^tv_j+v_i^tBw_j=-v_j^tBw_i+v_i^tBw_j=0$, so
$v_i^tBw_j=v_j^tBw_i$. In particular, for any column vector $v\in V$ and any
column vector
$w\in W$, we have $v^tBw=\zerovec^tBw''=0$ for some column vector $w''\in W''$.
This
concludes the proof.
\end{proof}
We then get that $\ncrk(\cB)\leq (s+t)-(c+e)=n-d=n-\alpha(\cA)$, and the proof is
concluded.
\end{proof}

We now set out to prove Corollary~\ref{cor:maximum_bipartite}. For this we need
one more ingredient.
%After Lemma~\ref{lem:iso_ncrk}, we need one more ingredient.
In the literature \cite{GGOW16,IQS17},
the computation of the non-commutative ranks only deals with the case when $s=t$.
To use that for general $\M(s\times t, \F)$ we need a little twist.
\begin{proposition}\label{prop:ncrk}
Over any field $\F$, computing the non-commutative rank of $\cB\leq \M(s\times t,
\F)$ can be done in deterministic polynomial time.
\end{proposition}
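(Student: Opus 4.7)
The plan is to reduce the rectangular case to the square case, so that one can invoke the deterministic polynomial-time algorithm of \cite{IQS17} (which works over any field).

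First I would observe that the non-commutative rank is invariant under zero-padding of rows or columns. Concretely, assume without loss of generality that $s \leq t$, and extend each $B \in \cB$ to a $t \times t$ matrix $B' = \begin{bmatrix} B \\ \zerovec \end{bmatrix}$ by appending $t-s$ zero rows; let $\cB' \leq \M(t, \F)$ be the resulting matrix space. I claim that $\ncrk(\cB') = \ncrk(\cB)$. To see this, I would set up a bijection between isotropic pairs, up to the obvious additional freedom in the padded coordinates. Given an isotropic pair $(U, V)$ for $\cB$ with $U \leq \F^s$, $V \leq \F^t$, the pair $(U \oplus \F^{t-s}, V)$ is isotropic for $\cB'$, since the padded rows contribute zero. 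Conversely, if $(U', V')$ is isotropic for $\cB'$ and $\pi : \F^t \to \F^s$ denotes projection onto the first $s$ coordinates, then $(\pi(U'), V')$ is isotropic for $\cB$, and $\dim(\pi(U')) \geq \dim(U') - (t-s)$. Comparing the two inequalities gives that the maxima of $\dim(U) + \dim(V)$ differ by exactly $t - s$, so
\[
\ncrk(\cB') = 2t - (M + (t-s)) = (s+t) - M = \ncrk(\cB),
\]
where $M$ is the maximum of $\dim(U) + \dim(V)$ over isotropic pairs for $\cB$.

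Second, I would invoke the main result of \cite{IQS17}, which gives a deterministic polynomial-time algorithm for computing the non-commutative rank of a matrix space of square matrices over any field. Applied to $\cB'$, this yields $\ncrk(\cB') = \ncrk(\cB)$ in deterministic polynomial time.

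The main subtlety to take care of is that the various papers use slightly different but equivalent formulations of the non-commutative rank (shrunk subspaces versus isotropic pairs). I would include a brief verification that, in the square case, the isotropic-pair definition used in the present paper agrees with the shrunk-subspace definition used in \cite{IQS17}; this is the standard duality $V \leftrightarrow (\cB(U))^{\perp}$ and presents no real difficulty. Beyond that, the padding step is elementary, and I do not foresee any serious obstacle.
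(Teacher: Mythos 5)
Your reduction is correct, and it is genuinely different from — and in fact cleaner than — the one in the paper. The paper also pads to a square $t \times t$ space when $s < t$, but it places the zero block on top and \emph{additionally} throws in all elementary matrices $E_{i,j}$ with $1 \le i \le t-s$, $1 \le j \le t$, producing a space $\cC$ of matrices $\begin{bmatrix} C \\ B \end{bmatrix}$ with $C$ an arbitrary $(t-s)\times t$ block. That choice shifts the invariant by exactly $t-s$, so the paper must prove $\ncrk(\cC) = \ncrk(\cB) + (t-s)$, and the proof of the nontrivial inequality requires a small case analysis (if $V \ne \zerovec$ for an isotropic pair $(U,V)$ of $\cC$, then the presence of the $E_{i,j}$'s forces $U$ into the last $s$ coordinates; and the trivial pair $(\F^t, \zerovec)$ must be handled separately). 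Your version, padding with zero rows only, keeps the non-commutative rank unchanged, and the two inequalities follow directly from the ``extend by $\F^{t-s}$'' map and the projection $\pi$ with $\dim(\pi(U')) \ge \dim(U') - (t-s)$, with no case distinctions. The underlying reason both work is that $\ncrk(\cB) = t - \max_{V \le \F^t}\bigl(\dim(V) - \dim(\cB(V))\bigr)$, and zero-padding leaves $\dim(\cB(V))$ unchanged; the paper's extra $E_{i,j}$'s instead inflate $\cB(V)$ by $t-s$ for every nonzero $V$. Both constructions are valid reductions to \cite{IQS17}; yours is the more economical one. Two cosmetic points: calling the correspondence a ``bijection'' is an overstatement (it is a pair of maps giving matching upper and lower bounds, which is all you use), and in the duality remark the annihilator should be $(\cB(V))^{\perp}$, matched against $U$, rather than $(\cB(U))^{\perp}$. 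Neither affects correctness.
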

\begin{proof}
If $s=t$, this follows from \cite{IQS17}. Without loss of generality, let us
assume then $s<t$. We shall construct some $\cC\leq \M(t, \F)$ as follows. First,
for any $B\in \cB$, let $B'=\begin{bmatrix}
\zerovec\\ B\end{bmatrix}$, where $\zerovec$ is of size $(t-s)\times t$, so $B'\in
\M(t, \F)$. Second, recall that $E_{i,j}$ is the elementary matrix with the
$(i,j)$th entry being $1$, and the rest entries being $0$. Then $\cC$ is the
matrix space spanned by all $B'$ and $E_{i,j}$ with $1\leq i\leq t-s$ and $1\leq
j\leq t$.

We claim that
$\ncrk(\cB)+(t-s)=\ncrk(\cC)$. To see that $\ncrk(\cB)+(t-s)\geq \ncrk(\cC)$, let
$(U, V)$ be an isotropic pair of $\cB$, where $U\leq \F^s$, $V\leq \F^t$, such
that $\ncrk(\cB)=s+t-\dim(U)-\dim(V)$. Let $U'\leq \F^t$ be the image of $U$ under
the embedding $\F^s$ to $\F^t$ by sending $e_i$ to $e_{i+t-s}$. Clearly, $(U', V)$
is an isotropic pair for $\cC$, so
$\ncrk(\cC)\leq 2t-\dim(U')-\dim(V)=(t-s)+s+t-\dim(U)-\dim(V)=(t-s)+\ncrk(\cB)$.

To show that $\ncrk(\cB)+(t-s)\leq \ncrk(\cC)$, let $(U, V)$ be an isotropic pair
of $\cC$. If $\ncrk(\cC)=t$, then the equality is trivial. So in the following we
assume $\ncrk(\cC)<t$. We claim that if $V\neq
\zerovec$, then $U$ is a subspace of $\langle
e_{t-s+1}, \dots,
e_t\rangle$. Suppose not, then $U$ contains a vector $u=[u_1, \dots, u_t]^t$
with
some $u_i\neq 0$ for $1\leq i\leq t-s$. Because $E_{i,j}$ is present in $\cC$ for
$1\leq j\leq t$, for $v\in \F^n$ to satisfy that $u^tE_{i,j}v=0$ for any $1\leq
j\leq t$, $v$ has to be $\zerovec$. This implies that $V$ has to be $\zerovec$.
Therefore all isotropic pairs of $\cC$, except a trivial one $(\F^t, \zerovec)$,
are also isotropic pairs of $\cB$. Therefore, if $\ncrk(\cC)<t$, and $(U, V)$ is
an
isotropic pair for $\ncrk(\cC)$ such that $\ncrk(\cC)=2t-\dim(U)-\dim(V)<t$, we
have
$V\neq \zerovec$, so $U\leq \langle
e_{t-s+1}, \dots,
e_t\rangle$. Let $U'$ be the image of $U$ under the projection from $\F^t$ to
$\F^s$ by
sending
$(v_1, \dots, v_t)^t$ to $(v_{t-s+1}, \dots, v_t)^t$. We see then
$\dim(U')=\dim(U)$, and $(U', V)$ is an isotropic space for $\cB$. From
this we can conclude the proof.
\end{proof}

We are now ready to prove Corollary~\ref{cor:maximum_bipartite}.
\begin{proof}[Proof of Corollary~\ref{cor:maximum_bipartite}]
Given $\cA\leq \Lambda(n, \F_q)$, $q$ odd,  first put it into the explicit
bipartite form
using Theorem~\ref{thm:2-decomp}, which also produces the bases of the two
subspaces in an isotropic $2$-decomposition.
%This step requires the field to be $\F_q$, $\R$, or $\C$.
Then for a matrix space $\cB\leq \M(s\times t, \F)$ induced from the
bipartite structure, compute its non-commutative rank $r=\ncrk(\cB)$ using
Proposition~\ref{prop:ncrk}. The isotropic number of $\cA$ is then $n-r$, by
Theorem~\ref{thm:maximum_bipartite}.
\end{proof}

\begin{remark}
To obtain analogues of Corollary~\ref{cor:maximum_bipartite} over $\R$ and $\C$,
the bottleneck is that over $\R$ and $\C$, Theorem~\ref{thm:2-decomp} only outputs
the projection matrix as the product of a sequence of matrices over different
extension fields. This prevents us from working with the bases of the subspaces in
an isotropic $2$-decomposition directly. Of course, if we are content with
approximating those algebraic numbers up to certain precision, we can use the
representation of the
projection as a product of matrices over different extension fields to get such,
and then work with that.
\end{remark}

\section{Proof of Theorem~\ref{thm:iso_dim_2}}\label{sec:iso_dim_2}

%Let $\F$ be either $\F_q$, $\Q$, or $\R$.
Let us first work with general $\F$, and then restrict to our target field $\Q$ at
some point.
Recall that the problem is to decide
whether an alternating matrix space $\cA\leq \Lambda(n, \F)$ has an isotropic
space of dimension $2$. We first make the following easy observation.
\begin{observation}\label{obs:iso_dim_2}
Let $\cA=\langle A_1, \dots, A_m\rangle\leq \Lambda(n, \F)$. Then $\cA$ has an
isotropic space of dimension $2$,
if and only if there exist linearly independent $v, w\in \F^n$ such that for any
$A\in \cA$, $v^tAw=0$, which is further equivalent to that for any $i\in[m]$,
$v^tA_iw=0$.
\end{observation}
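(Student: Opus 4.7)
The plan is to prove the two equivalences by straightforward unfolding of the definition of isotropic space, together with two elementary observations: (i) an alternating matrix $A$ satisfies $u^t A u = 0$ for every vector $u$, and (ii) the map $A \mapsto v^t A w$ is linear in $A$.

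For the first equivalence, the forward direction is immediate: if $U \leq \F^n$ is a $2$-dimensional isotropic space of $\cA$, pick any basis $\{v, w\}$ of $U$; these vectors are linearly independent by construction, and $v^t A w = 0$ for every $A \in \cA$ by the definition of isotropic space. For the reverse direction, suppose $v, w \in \F^n$ are linearly independent with $v^t A w = 0$ for all $A \in \cA$, and set $U = \langle v, w\rangle$, which is $2$-dimensional. To check $U$ is isotropic, I need $u^t A u' = 0$ for every $u, u' \in U$ and every $A \in \cA$. By bilinearity of $(u, u') \mapsto u^t A u'$, it suffices to verify this on the four pairs of basis vectors. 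The diagonal pairs $(v, v)$ and $(w, w)$ vanish because $A$ is alternating; the pair $(v, w)$ vanishes by hypothesis; and for $(w, v)$, use $w^t A v = -(w^t A v)^t \cdot$ no — more cleanly, from the alternating property applied to $v + w$ we get $0 = (v+w)^t A (v+w) = v^t A w + w^t A v$, so $w^t A v = -v^t A w = 0$.

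For the second equivalence, the implication from ``for any $A \in \cA$'' to ``for any $i \in [m]$'' is trivial since each $A_i$ lies in $\cA$. Conversely, any $A \in \cA$ can be written as $A = \sum_{i=1}^m c_i A_i$ for scalars $c_i \in \F$, and then $v^t A w = \sum_{i=1}^m c_i (v^t A_i w) = 0$ by the hypothesis applied to each basis element.

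There is no real obstacle here; the statement is essentially a restatement of the definition for the two-dimensional case, refined to make algorithmic use of a fixed basis of $\cA$. The only point worth flagging is the use of the alternating property to reduce the four basis-pair checks to a single one, which is exactly the feature that distinguishes this setting from the symmetric-matrix-space analogue and which makes every $1$-dimensional subspace automatically isotropic.
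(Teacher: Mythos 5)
Your proof is correct. The paper states this observation without proof, treating it as an immediate consequence of the definitions, and your argument supplies exactly the routine verification the paper leaves implicit: bilinearity to reduce to basis pairs of $U$, the alternating condition to dispose of the diagonal pairs $(v,v),(w,w)$ and to get $w^tAv = -v^tAw$, and linearity in $A$ over the basis $A_1,\dots,A_m$ for the second equivalence.
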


We now need to prove some auxiliary results. Let $\cB=\langle B_1, \dots,
B_m\rangle\leq \M(n, \F)$, and let $\vB=(B_1, \dots,
B_m)\in \M(n, \F)^m$. Here
is a natural problem about matrix spaces.
%Given $v=(v_1, \dots, v_n)^t\in \F^n$, the horizontal degree of $v$ w.r.t. $\vB$,
%$\hdeg_\vB(v)$, is the rank of
%$\begin{bmatrix}
%v^tB_1 \\
%\vdots \\
%v^tB_m
%\end{bmatrix}$, namely the $m\times n$ matrix whose $i$th row is $v^tB_i$.
%Similarly, the vertical degree of $v$ w.r.t. $\vB$, $\vdeg_\vB(v)$,
%$\vdeg_\vB(v)$, is the rank of $\begin{bmatrix}B_1v & \dots & B_mv
%\end{bmatrix}$, namely the $n\times m$ matrix whose $i$th column is
%$B_iv$.\footnote{Horizontal degrees and vertical degrees can be defined without
%referring to a specific basis of $\cB$, so they can be viewed as properties of
%the matrix space $\cB$ instead of the matrix tuple $\vB$ \cite{Qia18}.}
\begin{problem}\label{prob:exist_sing}
The existential singularity problem for matrix spaces, or the linear
$\exists$-singularity
problem, asks the following: given $\cB\in \M(n, \F)$,
decide
whether there exists a singular (e.g. non-full-rank) \emph{non-zero} matrix in
$\cB$.
\end{problem}

The linear $\exists$-singularity problem turns out to be quite interesting. We
discuss this problem in detail in Section~\ref{subsec:exist_sing}. For the sake of
proving Theorem~\ref{thm:iso_dim_2}, we need the
following result, whose proof can be found there.
\begin{lemma}\label{lem:factoring}
Over $\Q$, assuming the generalized Riemann hypothesis, there is a randomized
polynomial-time
reduction from deciding quadratic residuosity modulo squarefree composite
numbers
to the linear $\exists$-singularity problem.
\end{lemma}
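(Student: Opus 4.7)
The plan is to realize the $\Q$-algebra zero divisor problem as an instance of the linear $\exists$-singularity problem via the left regular representation, and then to invoke the classical reduction of quadratic residuosity to zero-divisor detection in quaternion algebras over $\Q$. The bridge is the following: for any finite-dimensional associative $\Q$-algebra $A$ of dimension $n$, the left regular representation $\lambda \colon A \hookrightarrow \M(n, \Q)$ defined by $\lambda(a)(x) = ax$ is an injective $\Q$-linear map, and its image $\cB := \lambda(A) \leq \M(n, \Q)$ contains a nonzero singular matrix if and only if $A$ possesses a nontrivial zero divisor. Thus, deciding whether $A$ is a division algebra reduces to the linear $\exists$-singularity problem on $\cB$.

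Given a QR instance $(a, N)$ with $N$ an odd squarefree integer and $\gcd(a, N) = 1$, I would build a quaternion algebra $H$ over $\Q$ with basis $1, i, j, k$ satisfying $i^2 = a$, $j^2 = N'$, $ij = -ji = k$, where $N' = N \cdot M$ and $M$ is a small product of carefully chosen auxiliary primes. By the Hasse--Minkowski principle, $H$ has nonzero zero divisors (equivalently, $H \cong \M(2, \Q)$) if and only if $H$ splits at every completion of $\Q$. At each odd prime $p \mid N$ with $p \nmid a$, the local Hilbert symbol $(a, N')_p$ equals the Legendre symbol $\left(\frac{a}{p}\right)$, so these local symbols jointly track whether $a$ is a QR modulo $N$. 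The auxiliary primes in $M$ are chosen, following R\'onyai's framework~\cite{Ron87}, so as to trivialize the local obstructions outside $\{p : p \mid N\}$---namely at primes dividing $a$, at $p = 2$, and at the infinite place (the latter two handled by sign and residue-class choices of $M$ modulo appropriate small moduli). The output of the reduction is then $\cB := \lambda(H) \leq \M(4, \Q)$, which contains a nonzero singular matrix iff $a$ is a QR modulo $N$.

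The main obstacle is the construction of $M$ so that the global splitting of $H$ aligns exactly with the QR condition of $a$ modulo $N$. Under GRH one can, in randomized polynomial time, locate small primes with prescribed Legendre-symbol behaviour in short arithmetic progressions, which is the standard GRH application in R\'onyai's algorithmic framework for algebras over $\Q$; this keeps the auxiliary primes within polynomially bounded ranges. Once $M$ is fixed, the remaining steps---writing down the structure constants of $H$, applying the regular representation to produce $\cB$, and recording its basis---are routine linear algebra, and correctness then follows directly from Hasse--Minkowski together with the bridge described in the first paragraph.
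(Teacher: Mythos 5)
Your overall route is the same as the paper's: reduce to detecting whether a four-dimensional central simple algebra over $\Q$ is split or a division algebra via the regular representation, and invoke R\'onyai's GRH-conditional reduction from quadratic residuosity to that decision problem. The paper simply cites R\'onyai \cite{Ron87} together with the conversion from structure constants to matrix representations \cite{IR99} as black boxes, whereas you try to re-derive R\'onyai's quaternion-algebra construction. The regular-representation bridge in your first paragraph is exactly right and is also what the paper appeals to.

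The re-derivation, however, has a genuine gap at its central step. You claim that $M$ can always be chosen so that the local Hilbert symbols of $(a, NM)$ at every place outside $\{p : p \mid N\}$ are trivial, so that $H$ splits if and only if $a$ is a QR modulo $N$. This is impossible in general: Hilbert reciprocity forces $\prod_v (a, NM)_v = 1$, so if you could kill all symbols away from $N$, you would conclude $\prod_{p \mid N}\left(\frac{a}{p}\right) = 1$ for every coprime $a$, i.e.\ that the Jacobi symbol $\left(\frac{a}{N}\right)$ is always $1$, which is false. The construction as you describe it would therefore be obstructed for roughly half of the inputs. The standard fix (which R\'onyai uses) is to first compute the Jacobi symbol $\left(\frac{a}{N}\right)$ --- if it is $-1$ then $a$ is certainly a non-residue and you output ``no'' directly; the quaternion construction is only carried out in the residual hard case $\left(\frac{a}{N}\right) = 1$, where reciprocity is consistent with trivializing all symbols outside $N$. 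You should make this case split explicit, or, as the paper does, simply invoke R\'onyai's theorem as a black box rather than sketch its proof.
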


We then show that the linear $\exists$-singularity problem reduces to deciding
whether an
alternating matrix space has an isotropic space of dimension $2$. This reduction
works over any field.
%\begin{lemma}\label{lem:np-hard}
%When $\F$ is either $\F_q$, $\Q$, or $\R$, Problem~\ref{prob:exist_sing} for
%matrix spaces is NP-hard.
%\end{lemma}

%We will ultimately reduce our target problem to Problem~\ref{prob:exist_sing}.
For
this purpose, it will be convenient to define an intermediate problem,
which may be viewed as just a reformulation of Problem~\ref{prob:exist_sing}.

Recall that $\cB=\langle B_1, \dots, B_m\rangle\leq\M(n, \F)$, and $\vB=(B_1,
\dots, B_m)\in \M(n, \F)^m$. Think of $\vB$ as a $3$-tensor $T_\vB$ of size
$n\times n\times m$,
such that $T_\vB(i, j, k)=B_k(i, j)$. That is, $B_k$'s are the slices according to
the
third index (lateral slices). We will also be interested in the matrices obtained
according to the first index (horizontal slices) and the second index (vertical
slices). Specifically, define $\vB^v$ be the $n$-tuple of $n\times m$ matrices
that are vertical slices of $T_\vB$. That is, $\vB'=(B'_1, \dots, B'_n)\in
\M(n\times m, \F)^n$, where $B'_j=[B_1e_j, \dots, B_me_j]$, or in other words,
$B'_j(i, k)=T_\vB(i,j,k)=B_k(i,j)$.
Similarly we can define the matrix tuple
consisting of the horizontal slices of $T_\vB$.

We now consider the matrix space $\vB'\in \M(n\times m, \F)^n$. For $v=(v_1,
\dots, v_m)^t\in \F^m$,
its right degree in $\vB'$ is defined to be the rank of
$[B'_1v, \dots, B'_mv]=v_1B_1+\dots+v_mB_m$. Therefore, every non-zero $v\in \F^n$
has right degree $n$ in $\vB'$, if and only if every matrix in $\cB$ is of rank
$n$.
Lemma~\ref{lem:factoring} then
immediately implies the following.
\begin{corollary}\label{cor:iso_dim_2}
Over $\Q$, assuming the generalized Riemann hypothesis, there is a randomized
polynomial-time
reduction from deciding quadratic residuosity modulo squarefree composite
numbers
to deciding whether there exists a non-zero $v\in \F^m$ of right degree $<n$
w.r.t. a matrix tuple $\vB'\in
\M(n\times
m, \F)$.
%When $\F$ is either $\F_q$, $\Q$, or $\R$, it is NP-hard to decide whether there
%exists a non-zero $v\in \F^m$ of right degree $<n$ w.r.t. a matrix tuple $\vB'\in
%\M(n\times
%m, \F)$.
\end{corollary}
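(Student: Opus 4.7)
The plan is to recognize that Corollary~\ref{cor:iso_dim_2} is essentially a reformulation of Lemma~\ref{lem:factoring} through the slice correspondence set up in the paragraph just above the statement. Since Lemma~\ref{lem:factoring} already supplies the randomized polynomial-time reduction from quadratic residuosity (mod squarefree composites) to the linear $\exists$-singularity problem for $\cB \le \M(n,\F)$, the only work left is to verify that the ``right degree $<n$ in $\vB'$'' condition coincides with the presence of a non-zero singular matrix in the span of a basis $B_1,\dots,B_m$ of $\cB$.

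First, I would carry out the one-line bookkeeping computation that converts between lateral and vertical slices. For any $v=(v_1,\dots,v_m)^t\in \F^m$ and any $j\in[n]$, the $i$-th entry of $B'_j v$ equals
\[
\sum_{k=1}^m B'_j(i,k)\,v_k \;=\; \sum_{k=1}^m B_k(i,j)\,v_k \;=\; \Bigl(\sum_{k=1}^m v_k B_k\Bigr)(i,j),
\]
so the $j$-th column of $[B'_1 v,\dots,B'_n v]$ is exactly the $j$-th column of $\sum_k v_k B_k$. Hence $[B'_1 v,\dots,B'_n v]=\sum_k v_k B_k$, and the right degree of $v$ in $\vB'$ (defined as the rank of this block) equals $\rk\!\bigl(\sum_k v_k B_k\bigr)$.

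Next, I would use this identification to translate the linear $\exists$-singularity problem into the right-degree language. Given a linear basis $B_1,\dots,B_m$ of $\cB$, the map $v \mapsto \sum_k v_k B_k$ is a linear bijection between $\F^m\setminus\{\zerovec\}$ and $\cB\setminus\{\zerovec\}$, so the existence of a non-zero singular matrix in $\cB$ is the same as the existence of a non-zero $v\in\F^m$ with $\rk\!\bigl(\sum_k v_k B_k\bigr)<n$, i.e.\ of right degree $<n$ in $\vB'$. Passing from $\cB$ to $\vB'$ (or back) is a trivial polynomial-time reshaping of the input tensor $T_\vB$, and preserves integrality of the entries.

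Finally, I would compose these two pieces: quadratic residuosity reduces to the linear $\exists$-singularity problem by Lemma~\ref{lem:factoring} (randomized polynomial time, under GRH, over $\Q$), and the linear $\exists$-singularity problem is polynomial-time equivalent to the right-degree problem by the above reformulation. There is really no main obstacle here—everything genuinely difficult is absorbed into Lemma~\ref{lem:factoring}, whose proof is deferred to Section~\ref{subsec:exist_sing} and relies on R\'onyai's machinery for computing algebra structure. The corollary itself is a bookkeeping step that sets up the intermediate ``right degree'' formulation used later when reducing to the isotropic-space-of-dimension-$2$ problem of Theorem~\ref{thm:iso_dim_2} via Observation~\ref{obs:iso_dim_2}.
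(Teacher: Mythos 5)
Your argument matches the paper's proof essentially verbatim: the paper defines right degree precisely so that $[B'_1v,\dots,B'_nv]=\sum_k v_k B_k$, observes that right degree $<n$ for some non-zero $v$ is the same as $\cB$ containing a non-zero singular matrix, and then states that Lemma~\ref{lem:factoring} immediately gives the corollary. You have simply spelled out the index-juggling that the paper leaves implicit; the composition of the two reductions and the appeal to GRH are handled identically.
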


Given $\vB'=(B_1', \dots, B_n')\in \M(n\times m, \F)$, we construct a tuple of
alternating matrices of size $(n+m)\times (n+m)$, as
follows. For $i\in[n]$, let $A_i=\begin{bmatrix}
\zerovec & B_i' \\
-B_i'^t & \zerovec
\end{bmatrix}$. For $1\leq i<j\leq n$, let $C_{i,
j}=e_ie_j^t-e_je_i^t$.
%\bra{e_i}\ket{e_j}-\bra{e_j}\ket{e_i}$.
For $1\leq k<\ell\leq m$, let $D_{k,
\ell}=e_{n+k}e_{n+\ell}^t-e_{n+\ell}e_{n+k}^t$.
%\bra{e_{m+k}}\ket{e_{m+\ell}}-\bra{e_{m+\ell}}\ket{e_{m+k}}$.
Note that $C_{i,j}$ and $D_{k, \ell}$ are elementary alternating matrices.
Let $\bA=(A_1,
\dots, A_n, C_{1, 2}, \dots, C_{n-1, n}, D_{1, 2}, \dots, D_{m-1, m})$. We now
claim the following.
\begin{claim}\label{claim:iso_dim_2}
Let $\vB'\in \M(n\times m, \F)^n$ and
$\bA\in \Lambda(n+m, \F)^{n+\binom{n}{2}+\binom{m}{2}}$ be as above. Let
$\cA=\langle \bA\rangle \leq \Lambda(n+m, \F)$. Then there exists a non-zero
$v'\in \F^m$ of
right degree $<n$ in $\vB'$ if and only if $\cA$ has an isotropic space of
dimension $2$.
\end{claim}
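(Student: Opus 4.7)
The plan is to characterize dimension-$2$ isotropic spaces of $\cA$ by peeling off the constraints imposed by the three families of generators $A_i$, $C_{i,j}$, $D_{k,\ell}$ in turn, and then to match the resulting condition with the singularity of a combination $\sum_k y(k) B_k$ in $\cB$. Let $U\leq \F^{n+m}$ be a candidate $2$-dimensional subspace with basis $u_1, u_2$, and decompose $u_\ell = \begin{bmatrix} x_\ell \\ y_\ell \end{bmatrix}$ with $x_\ell\in \F^n$ and $y_\ell\in \F^m$. Since the elementary alternating matrices $C_{i,j}$ for $1\le i<j\le n$ form a basis of $\Lambda(n,\F)$ embedded into the top-left block, imposing $u_1^t C_{i,j} u_2 = 0$ for all $i<j$ amounts to the condition $x_1(i)x_2(j) - x_1(j)x_2(i) = 0$ for all such $i,j$, i.e.\ that $x_1, x_2$ are linearly dependent in $\F^n$; symmetrically, the $D_{k,\ell}$ constraints force $y_1, y_2$ to be linearly dependent in $\F^m$.

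Given these two linear-dependence conditions together with $\dim U = 2$, the next step is to reduce $U$ to a canonical basis of the form $\begin{bmatrix} x \\ \zerovec \end{bmatrix}, \begin{bmatrix} \zerovec \\ y\end{bmatrix}$ with $x\in\F^n\setminus\{\zerovec\}$ and $y\in\F^m\setminus\{\zerovec\}$. A short case analysis rules out both $x_1, x_2$ being zero or both $y_1, y_2$ being zero (each would contradict $\dim U = 2$); and when $x_1, y_1$ are both nonzero with $x_2 = \alpha x_1$, $y_2 = \beta y_1$, the linear independence of $u_1, u_2$ forces $\alpha \neq \beta$, which lets one extract the two desired vectors inside $U$ as $u_2 - \alpha u_1$ and $u_2 - \beta u_1$. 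The remaining sub-cases (one component already vanishing in $u_1$ or $u_2$) are even more direct. This canonicalization is the step I expect to require the most care, since one must verify in each sub-case both that $2$-dimensionality of $U$ is preserved and that the canonical shape is genuinely attained.

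Once $U$ is in canonical form, the $C_{i,j}$ and $D_{k,\ell}$ constraints are automatically satisfied (their supports avoid the nonzero blocks of the two basis vectors), and the only remaining isotropy requirement $u_1^t A_i u_2 = 0$ for each $i\in[n]$ collapses to the single scalar equation $x^t B_i' y = 0$. Setting $M_y := \sum_{k=1}^m y(k) B_k \in \M(n,\F)$ and invoking the identity $B_i' y = M_y\, e_i$ noted just before the claim, the $n$ equations $x^t B_i' y = 0$ are jointly equivalent to $M_y^t x = \zerovec$. Since $x\neq\zerovec$, this is equivalent to $M_y$ being singular, and combined with $y\neq\zerovec$ it is equivalent to $y$ having right degree strictly less than $n$ in $\vB'$. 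The converse direction is then immediate: from any such $y$, picking a nonzero $x\in\ker(M_y^t)$ and reversing the construction shows that the span of $\begin{bmatrix} x \\ \zerovec \end{bmatrix}$ and $\begin{bmatrix} \zerovec \\ y\end{bmatrix}$ is a $2$-dimensional isotropic space of $\cA$, completing the equivalence.
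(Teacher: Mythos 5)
The proposal is correct and follows essentially the same route as the paper's proof: the $C_{i,j}$ and $D_{k,\ell}$ generators force the $\F^n$- and $\F^m$-blocks of any basis vectors of a $2$-dimensional isotropic space to be pairwise linearly dependent, a case analysis then canonicalizes the basis to one vector supported on the first block and one on the second, and the remaining $A_i$ constraints collapse to the singularity of $\sum_k y(k) B_k$. The only differences are cosmetic (the paper performs the canonicalization by two successive row operations, you do it via the $\alpha \neq \beta$ trick in the generic sub-case), so the argument matches the paper's.
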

\begin{proof}
By Observation~\ref{obs:iso_dim_2}, to decide whether $\cA$ has an isotropic space
of dimension $\geq 2$, we only need to test whether there exist
linearly independent $u, v\in \F^{n+m}$, such that for any $E=A_i$, or $C_{i, j}$,
or $D_{k, \ell}$,
$u^tEv=0$.

The only if direction is easy to verify. Suppose $v'\in \F^m$ is of right degree
$n-1$ w.r.t. $\vB'$, namely $[B'_1v', \dots, B'_mv']$ is of rank $<n$. Then take
any non-zero $u'\in \F^n$ in the left kernel of $[B'_1v', \dots, B'_mv']$, and we
have that for $i\in[m]$, $u'^tB'_iv'=0$. Now construct $u=\begin{bmatrix} u' \\
\zerovec\end{bmatrix}\in \F^{n+m}$, and $v=\begin{bmatrix} \zerovec \\
v'\end{bmatrix}\in \F^{n+m}$.
%$u=(u', 0, \dots, 0)^t\in
%\F^{n+m}$, and $v=(0, \dots, 0, v')^t\in \F^{n+m}$.
For $i\in[n+m]$, let the $i$th
component of $u$ (resp. $v$) be $u_i$ (resp. $v_i$). Then for $i\in \{n+1, \dots,
n+m\}$, $u_i=0$. For $i\in [n]$, $v_i=0$. Clearly, $u$ and $v$ are
linearly independent. Furthermore, it is easy to verify that (1)
$u^tA_iv=u'^tB'_iv'=0$, (2) $u^tC_{i,j}v=u_iv_j-u_jv_i=u_i\cdot 0-u_j\cdot 0=0$,
as $i, j\in[n]$,
and similarly (3) $u^tD_{k, \ell}v=0$. Then $u$ and $v$ spans a dimension-$2$
isotropic space of $\cA$.

For the if direction, suppose $u$ and $v$ are linearly independent vectors in
$\F^{n+m}$, and satisfy that for any $E=A_i$, or
$C_{i, j}$,
or $D_{k, \ell}$,
$u^tEv=0$. Write $u=\begin{bmatrix} u_1 \\ u_2\end{bmatrix}$, where $u_1\in \F^n$
and $u_2\in \F^m$.
Similarly write $v=\begin{bmatrix} v_1\\ v_2\end{bmatrix}$, where $v_1\in \F^n$
and $v_2\in \F^m$. By
$u^tC_{i,j}v=0$, we have that $u_1$ and $v_1$ are linearly dependent. By
$u^tD_{k,\ell}v=0$, we have that $u_2$ and $v_2$ are linearly dependent. We first
observe that it cannot be the case that $u_1=v_1=\zerovec$, nor
$u_2=v_2=\zerovec$.
As otherwise, say if $u_1=v_1=\zerovec$, then because $u_2$ and $v_2$ are linearly
dependent, we have $u$ and $v$ are linearly dependent, a contraction. Therefore,
without loss of generality,
we assume that $u_1\neq \zerovec$, so $v_1=\alpha_1 u_1$ for some $\alpha_1\in
\F$. We then have two cases. In the first case, if $u_2=\zerovec$, then $v_2\neq
\zerovec$, and $v'=v-\alpha_1 u$ and $u$ are linearly independent. In the second
case, if $u_2\neq \zerovec$, then $v_2\neq \alpha_1u_2$, as otherwise $u$ and $v$
would be linearly dependent. Then again, letting $v'=v-\alpha_1 u$, we have $u$
and $v'$ are linearly independent.
Clearly, $u$ and $v'$ still satisfy that for any $E=A_i$, or $C_{i, j}$, or $D_{k,
\ell}$, $u^tEv'=0$. Write $v'$ as $\begin{bmatrix} v'_1 \\ v'_2\end{bmatrix}$
where $v'_1\in \F^n$, and
$v'_2\in \F^m$, so
$v'_1=\zerovec$, and $v'_2\neq \zerovec$. We then have $u_2=\alpha_2v'_2$. Letting
$u'=u-\alpha_2v'$, we
have $u'$ and $v'$ are linearly independent, and for any $E=A_i$, or $C_{i, j}$,
or $D_{k,
\ell}$, $u'^tEv'=0$. Write $u'$ as $\begin{bmatrix}u'_1 \\ u'_2\end{bmatrix}$
where $u'_1\in \F^n$, and
$u'_2\in \F^m$, so
$u'_1\neq \zerovec$, and $u'_2= \zerovec$. It is then straightforward to verify
that the condition $u'^tA_iv'=0$ is equivalent to $u_1'^tB_i'v'_2\neq 0$. Recall
that neither $u'_1$ nor $v'_2$ is the zero vector; this just translates to say
that $v'_2$ is of right degree $<n$ w.r.t. $\vB'$.
\end{proof}

Theorem~\ref{thm:iso_dim_2} follows by combining Claim~\ref{claim:iso_dim_2} and
Corollary~\ref{cor:iso_dim_2}.

\subsection{The existential singularity problem for matrix
spaces}\label{subsec:exist_sing}

In this subsection, we discuss on Problem~\ref{prob:exist_sing}, which we believe
is a very interesting problem in its own right. We therefore examine this problem
over various fields, and prove Lemma~\ref{lem:factoring} over $\Q$.

The affine version of Problem~\ref{prob:exist_sing} has been studied in
\cite{BFS99}. More specifically, the $\exists$-singularity problem for affine
matrix spaces asks to decide whether an affine
matrix space contains a non-full-rank matrix (not necessarily nonzero). In
\cite{BFS99}, this problem was called the singularity problem. This may cause some
confusion, because in \cite{GGOW16,IQS17} the singularity problem for matrix
spaces is to decide whether all matrices in a matrix space are singular.
%Therefore
%we call it the $\exists$-singularity problem.
For clarification, we then call the problem in \cite{BFS99} the affine
$\exists$-singularity problem, and Problem~\ref{prob:exist_sing} the linear
$\exists$-singularity problem.
%We
%therefore refer to Problem~\ref{prob:exist_sing} as the \emph{existential}
%singularity problem for matrix spaces.

In \cite{BFS99}, it was shown that the affine $\exists$-singularity problem is
NP-hard over $\F_q$, $\Q$, or $\R$. We first note that the linear
$\exists$-singularity problem reduce to that for affine matrix spaces, but
the inverse direction is not clear.\footnote{To reduce the matrix space case to
the affine case is easy: if $\cB=\langle B_1, \dots, B_m\rangle$, then form $m$
affine spaces $B_i+\langle B_1, \dots, B_{i-1}, B_{i+1}, \dots, B_m\rangle$.}
Furthermore, the proof strategy of \cite{BFS99}
cannot be adapted directly to tackle Problem~\ref{prob:exist_sing}, because of the
introduction of field constants in the reduction. In particular, the use
of field constants in the transformation from algebraic branching programs to
symbolic determinants by Valiant \cite{Val79} seems particularly crucial.
Indeed, as we will see below, the $\exists$-singular problems for matrix spaces
and for affine matrix spaces demonstrate quite different behaviors.

Matrix spaces in which every nonzero matrix is of full-rank has been studied in
mathematics for a long time. More broadly, if $\cB\leq \M(n, \F)$ satisfies that
every nonzero
matrix in $\cB$ is of a fixed rank $r$, we say that $\cB$ satisfies the fixed rank
$r$ condition. Such matrix spaces are of interests in algebraic geometry (mostly
when over algebraically closed fields), differential topology (mostly when over
$\R$), number
theory (mostly when over $\Q$), and
algebra (mostly when over finite fields). It turns out that several results from
these
different
branches of mathematics will be useful for our algorithmic purposes as well.

To start with, the following quantity has been studied extensively in the
literature. Let $\rho(n, r, \F)$ be the
maximum dimension over those $\cB\leq \M(n, \F)$ satisfying the fixed rank $r$
condition. Also let $\tau(n, r, \F)$ be the maximum dimension over those affine
matrix spaces $\cC\subseteq \M(n, \F)$ satisfying the fixed rank $r$ condition.
Also let $\rho(n, \F):=\rho(n, n, \F)$, and $\tau(n, \F):=\tau(n, n, \F)$. As
pointed out in \cite{dSP13}, $\rho(n, \F)\leq n$, and $\tau(n, \F)=\binom{n}{2}$.
Two remarks are due here. First, $\rho(n, \F)$ can be much smaller than $n$ for
certain fields; see below. Second, the $\binom{n}{2}$ bound for $\tau(n, \F)$ can
be easily achieved at $I_n+\mathcal{U}$ where $\mathcal{U}$ is the linear space of
strictly upper triangular matrices. This distinction already suggests that the
difference between the linear and the affine cases can be significant.

In this following, we discuss on $\C$, $\R$, and $\Q$, comparing the linear and
affine settings, and presenting some algorithms for the linear case, including a
proof of Lemma~\ref{lem:factoring}. We refer the interested reader to \cite{She11}
for the
finite field case.
%Indeed,
%matrix spaces in $\M(n, \F)$ satisfying the fixed rank $n$ condition correspond
%to
%(possibily non-associative and non-unital) division algebras \cite{dSP10,She11},
%so they seem more algebraic in nature.

\paragraph{Over $\C$.} The affine $\exists$-singularity
problem over $\C$ is only known to be in RP \cite{BFS99}.

We then turn to the linear $\exists$-singularity problem over $\C$. Sylvester
showed that $\rho(n,
\C)\leq 3$ \cite{Syl86}, and Westwick generalized that to $\rho(n, r, \C)\leq
2n-2r+1$ \cite{Wes87}. Some
subsequent
developments include \cite{IL99,BFM13}.

Sylvester's
result immediately translates to a deterministic efficient algorithm for the
linear $\exists$-singularity problem over $\C$: if the input matrix space $\cB\leq
\M(n,
\C)$ is of dimension $\geq 4$, then return ``exists.'' Otherwise, $\cB=\langle
B_1, B_2, B_3\rangle$. We can then write out $\det(x_1B_1+x_2B_2+x_3B_3)$, and use 
algorithms from computational algebraic geometry \cite{CLO07} to decide whether 
this polynomial only has the trivial zero. 
%, and return ``exists'' if and only if $B$ is of full-rank.

\paragraph{Over $\R$.} The affine $\exists$-singularity problem over $\R$ is
NP-hard
\cite{BFS99}.

We then turn to the linear $\exists$-singularity problem over $\R$. Based on the
Radon-Hurwitz construction and Adams' vector
field theorem \cite{Ada62},
$\rho(n, \R)$ is equal to the so-called Hurwitz-Radon function (see
\cite{ALP65a}).
For
$n\in \N$,
write $n$ in the form of $2^{4a+b}\cdot(2c+1)$ where $b\in\{0, 1, 2, 3\}$, and the
Hurwitz-Radon function is $HR(n)=8a+2^b$. The significance of this result for our
algorithmic purpose is that $HR(n)\leq 2(\log n+4)$. Some subsequent developments
include \cite{Mes90,LY93,Cau07}.

Therefore, the linear
$\exists$-singularity problem over $\R$ admits the following quasipolynomial-time
algorithm. If the input matrix space $\cB\leq \M(n, \C)$ is of dimension $\geq
HR(n)$, then return ``exists.'' Otherwise, $\cB=\langle B_1, \dots, B_m\rangle$
where $m\leq HR(n)\leq 2(\log n+4)$. We form $m$ affine spaces,
$\cC_i:=B_i+\langle B_1, \dots, B_{i-1}, B_{i+1}, \dots, B_m\rangle$, for every
$i\in[m]$. The question then becomes whether any of the $\cC_i$'s contains a
singular matrix. This can be done by computing
$f_i:=\det(B_1x_1+\dots+B_{i-1}x_{i-1}+B_i+B_{i+1}x_{i+1}+\dots +B_mx_m)$
explicitly. Since the polynomial $f_i$, involving $O(\log n)$ variables, is of
degree $n$, $f_i$ has $n^{O(\log n)}$ monomials, and we can fully write out $f_i$
in time polynomial in $n^{O(\log n)}$.\footnote{There are several ways of doing
this. One approach is to transform the determinant expression into an arithmetic
circuit, and then compute along this circuit to get the final polynomial.} After
that, we can use the existential theory of reals \cite{Can88,Ren92} to determine
whether $f_i$ has a non-trivial
zero in time $n^{O(\log n)}$. Return ``exists'' if and only if one of these
$f_i$'s is solvable. This concludes the proof.

\paragraph{Over $\Q$.} The affine $\exists$-singularity problem over $\Q$ is
NP-hard
\cite{BFS99}.

We then turn to the linear $\exists$-singularity problem over $\Q$. To start with,
observe
that $\rho(n, \Q)\geq n$. This is because we can take a degree-$n$ extension field
$\K$ over $\Q$, and use the regular representation of $\K$. We now prove
Lemma~\ref{lem:factoring}, which suggests that the linear $\exists$-singularity
problem is not so easy either.
\begin{proof}[Proof of Lemma~\ref{lem:factoring}]
We consider a special case of Problem~\ref{prob:exist_sing} as follows. Assume
$\cB\leq \M(n, \Q)$ is closed under matrix multiplication, so $\cB$ forms an
algebra over $\Q$. In this setting, Problem~\ref{prob:exist_sing} just asks
whether $\cB$ is not a division algebra. We can even specialize further by
considering $\cB$ being a central simple algebra over $\Q$.

In \cite{Ron87}, R\'onyai considered the problem of testing whether a central
simple algebra over $\Q$ of dimension $4$ is isomorphic to $\M(2, \Q)$. He showed
that assuming the generalized Riemann hypothesis, there is a randomized efficient
reduction from deciding quadratic residuosity modulo squarefree composite
numbers
to this problem. In \cite{Ron87}, the algebras are represented by structural
constants, but these can be turned into matrix representations in $\M(4, \Q)$
(see e.g. \cite{IR99}). It follows that there is an analogous reduction for
matrix algebras.

We can then conclude the proof, because such an
algebra is either isomorphic to $\M(2, \Q)$ (in which there exists a nonzero
singular matrix) or a division algebra (in which every nonzero matrix is
full-rank).
\end{proof}

\section{Proof of Theorem~\ref{thm:iso_bound}}\label{sec:iso_bound}

\subsection{Some basic statistics}

All results in this subsection are either classical or straightforward. We collect
them here, and provide
proofs, partly for
completeness, and partly because we will use some of the arguments here in the
following.

We first recall the following bound on the number of subspaces of $\F_q^n$.

\begin{fact}\label{fact:gaussian}
\begin{enumerate}
\item For $d\leq \N$, $0\leq d\leq n$, the number of dimension-$d$ subspaces of
$\F_q^n$ is equal to
the Gaussian binomial
coefficient
$$\binom{n}{d}_q:=\frac{(q^n-1)\cdot(q^n-q)\cdot\ldots\cdot(q^n-q^{d-1})}
{(q^d-1)\cdot(q^d-q)\cdot\ldots\cdot(q^d-q^{d-1})}.
$$
\item The Gaussian binomial coefficient satisfies:
$$
q^{(n-d)d}\leq \binom{n}{d}_q\leq q^{(n-d)d+d}.
$$
\item The number of subspaces of $\F_q^n$ is $q^{\frac{1}{4}n^2+\Theta(n)}$.
\end{enumerate}
\end{fact}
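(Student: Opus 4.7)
The plan is to prove the three parts sequentially, each by a short counting or estimation argument.

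For part (1), I would count ordered bases in two ways. The number of ordered $d$-tuples of linearly independent vectors in $\F_q^n$ is $(q^n-1)(q^n-q)\cdots(q^n-q^{d-1})$, since the $(i{+}1)$-st vector must avoid the $q^i$ vectors spanned by the previous $i$. Every $d$-dimensional subspace $U$ contributes the same number of ordered bases, obtained by the same argument inside $U$: namely $(q^d-1)(q^d-q)\cdots(q^d-q^{d-1})$. Dividing gives $\binom{n}{d}_q$.

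For part (2), I would pull $q^i$ out of the $i$-th numerator and denominator to rewrite
$$\binom{n}{d}_q \;=\; \prod_{i=0}^{d-1} \frac{q^{n-i}-1}{q^{d-i}-1}.$$
For the lower bound, the elementary inequality $q^a-1 \geq q^{a-b}(q^b-1)$ for $a\ge b\ge 1$ (immediate from $q^{a-b}-1 \ge 0$) makes each factor at least $q^{n-d}$, so the product is at least $q^{(n-d)d}$. For the upper bound, since $q^{d-i}-1 \geq q^{d-i-1}$ for $q\geq 2$ and $d-i\geq 1$, each factor is at most $q^{n-i}/q^{d-i-1} = q^{n-d+1}$, and the product of $d$ such factors is at most $q^{(n-d+1)d} = q^{(n-d)d+d}$.

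For part (3), the total number of subspaces is $N(n,q) := \sum_{d=0}^{n} \binom{n}{d}_q$. The lower bound $N(n,q) \geq q^{n^2/4 - O(n)}$ comes from the single term $d=\lfloor n/2\rfloor$, which by part (2) is at least $q^{\lfloor n/2\rfloor \lceil n/2\rceil} \geq q^{n^2/4 - 1}$. The upper bound follows from $N(n,q) \leq (n+1)\max_d \binom{n}{d}_q \leq (n+1)\cdot q^{\lfloor n^2/4\rfloor + n}$, since $(n-d)d$ is maximized at $d = n/2$ with value $\lfloor n^2/4\rfloor$, and the factor $n+1$ is absorbed in the $\Theta(n)$ exponent. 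Combining these gives $N(n,q) = q^{n^2/4 + \Theta(n)}$.

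None of the three steps is truly an obstacle; the main care point is being explicit about how the low-order $q$-factors get absorbed into the $O(n)$ and $\Theta(n)$ terms, so the stated bounds hold uniformly for all $q\ge 2$ and all $n$. I would write out the arithmetic for part (2) carefully to make this transparent, and the rest follows by direct summation.
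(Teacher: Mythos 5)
Your proposal is correct and follows essentially the same route as the paper: part~(1) via the standard double-counting of ordered bases; part~(2) by bounding each factor $\frac{q^n-q^k}{q^d-q^k}$ (which you first simplify by cancelling $q^k$) between $q^{n-d}$ and $q^{n-d+1}$; and part~(3) by bounding the sum by $(n+1)$ times the largest summand and applying part~(2). The only cosmetic difference is that the paper invokes the well-known fact that $\binom{n}{d}_q$ is maximized at $d=\lfloor n/2\rfloor$, while you instead directly bound every summand by $q^{\lfloor n^2/4\rfloor+n}$ using that $(n-d)d\le n^2/4$, which sidesteps that appeal without changing the substance.
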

\begin{proof}
(1) is well-known. For (2),
%Let us first bound the Gaussian binomial coefficient $\binom{n}{d}_q$ for $0\leq
%d\leq n$. Recall that
%$$\binom{n}{d}_q=\frac{(q^n-1)\cdot(q^n-q)\cdot\dots\cdot(q^n-q^{d-1})}
%{(q^d-1)\cdot(q^d-q)\cdot\dots\cdot(q^d-q^{d-1})}.
%$$
it is enough to verify that for any prime power $q$, and $n, d, k\in \N$, $n\geq
d>k$, we have
$$
q^{n-d}\leq \frac{q^n-q^k}{q^d-q^k}\leq q^{n-d+1}.
$$
%We then have
%$$
%q^{(n-d)d}\leq \binom{n}{d}_q\leq q^{(n-d)d+d}.
%$$
For (3), it is well-known that $\binom{n}{d}_q$ achieves maximal over $d$ at
$d=\lfloor
n/2\rfloor$. So
we have
$$
q^{\frac{1}{4}n^2-\frac{1}{4}}\leq \binom{n}{\lfloor n/2\rfloor}_q\leq
\sum_{d=0}^n\binom{n}{d}_q\leq (n+1)\cdot \binom{n}{\lfloor n/2\rfloor}_q\leq
q^{\frac{1}{4}n^2+\lfloor n/2\rfloor+\log (n+1)},
$$
from which the result follows.
\end{proof}

Analogously, we consider the number of isotropic spaces of a non-degenerate
alternating form $A\in \Lambda(n, q)$.
\begin{proposition}\label{prop:iso_bd}
Let $A\in \Lambda(n, q)$, $n$ even, be a non-degenerate alternating matrix. Then
we
have the following.
\begin{enumerate}
\item For $d\in\N$, $0\leq d\leq n/2$, the number of dimension-$d$
isotropic spaces of $A$ is
$$
I(A, d):=\frac{(q^n-1)\cdot(q^{n-1}-q)\cdot\ldots\cdot(q^{n-(d-1)}-q^{d-1})}
{(q^d-1)\cdot(q^d-q)\cdot\ldots\cdot(q^d-q^{d-1})}.
$$
For $d\in \N$, $d> n/2$, there are no dimension-$d$ isotropic
spaces.
\item For $d\in\N$, $0\leq d\leq  n/2$, $I(A, d)$ is bounded as
follows:
$$
q^{nd-\frac{3}{2}d^2+\frac{1}{2}d}\leq I(A, d)\leq
q^{nd-\frac{3}{2}d^2+\frac{3}{2}d}.
$$
\item The number of isotropic spaces of $A$ is $q^{\frac{1}{6}n^2+\Theta(n)}$.
\item The number of maximal isotropic spaces of $A$ is
$q^{\frac{1}{8}n^2+\Theta(n)}$.
\end{enumerate}
\end{proposition}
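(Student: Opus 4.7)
For part~(1), I would count ordered bases of dimension-$d$ isotropic subspaces of $A$ and then divide by $|\GL(d, \F_q)| = \prod_{k=0}^{d-1}(q^d - q^k)$. The key fact is that for any isotropic subspace $U \leq \F_q^n$ of dimension $k$, its orthogonal complement $U^\perp := \{v \in \F_q^n : u^t A v = 0 \text{ for all } u \in U\}$ has dimension exactly $n - k$ (because $A$ is non-degenerate), and $U \subseteq U^\perp$ (because $U$ is isotropic). Thus an ordered basis $(v_1, \ldots, v_d)$ of a $d$-dimensional isotropic subspace can be constructed greedily: having chosen $v_1, \ldots, v_k$, pick $v_{k+1} \in U_k^\perp \setminus U_k$ where $U_k = \langle v_1, \ldots, v_k\rangle$, which gives $q^{n-k} - q^k$ valid choices. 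Multiplying for $k = 0, 1, \ldots, d-1$ yields the numerator in the formula. The bound $d \leq n/2$ follows from $U \subseteq U^\perp$, which forces $\dim U \leq n - \dim U$.

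For part~(2), I would factor each ratio in the product as
\[
\frac{q^{n-k} - q^k}{q^d - q^k} \;=\; \frac{q^{n-2k} - 1}{q^{d-k} - 1} \;=\; q^{n-k-d} \cdot \frac{1 - q^{-(n-2k)}}{1 - q^{-(d-k)}},
\]
so that
\[
I(A, d) \;=\; q^{nd - \frac{3}{2}d^2 + \frac{1}{2}d} \cdot \prod_{k=0}^{d-1} \frac{1 - q^{-(n-2k)}}{1 - q^{-(d-k)}},
\]
after computing $\sum_{k=0}^{d-1}(n - k - d) = nd - \frac{3}{2}d^2 + \frac{1}{2}d$. For the lower bound, I would observe that since $d \leq n/2$ and $k \leq d-1$, we have $n - 2k \geq n - 2(d-1) \geq d - k + 1 > d - k$, so each factor in the product is at least $1$. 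For the upper bound, I would bound each factor by $1/(1 - q^{-(d-k)})$ and check that $\prod_{j=1}^d 1/(1 - q^{-j}) \leq q^d$ for $q \geq 2$, e.g.\ by taking logarithms and using $-\log(1 - q^{-j}) \leq 2 q^{-j}$ together with the geometric series (with small values of $d$ verified separately). This yields the factor of $q^d$ absorbed into the exponent.

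For parts~(3) and~(4), I would apply (2). The exponent $\phi(d) := nd - \frac{3}{2}d^2 + \frac{1}{2}d$ is a concave quadratic in $d$ maximized at $d^* = (2n+1)/6 = n/3 + O(1)$, with $\phi(d^*) = n^2/6 + \Theta(n)$. Since $d$ ranges over only $O(n)$ integer values and each $I(A, d)$ is at most $q^{\phi(d) + d}$, summing gives an upper bound $q^{n^2/6 + O(n)}$; the lower bound $q^{n^2/6 + \Omega(n)}$ follows from a single term $I(A, \lfloor d^*\rfloor)$. For (4), I would note that every maximal isotropic subspace of a non-degenerate alternating form has dimension exactly $n/2$: if $\dim U < n/2$, then $U \subsetneq U^\perp$, and any $v \in U^\perp \setminus U$ extends $U$ to a strictly larger isotropic subspace (since $v^t A v = 0$ automatically by the alternating property). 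Thus the number of maximal isotropic subspaces equals $I(A, n/2)$, and substituting $d = n/2$ into the exponent yields $\phi(n/2) = n^2/2 - 3n^2/8 + n/4 = n^2/8 + n/4$, so this number is $q^{n^2/8 + \Theta(n)}$. The only technical step requiring care is the constant-factor bookkeeping in part~(2), which I would dispatch by separating out a correction factor that lies between $1$ and $q^d$.
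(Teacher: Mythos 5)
Your proposal is correct and follows essentially the same approach as the paper: greedily count ordered bases of a $d$-dimensional isotropic subspace by iteratively selecting $v_{k+1}$ from the form-perp of $U_k$ (dimension $n-k$ by non-degeneracy) minus $U_k$ itself, divide by the number of ordered bases per subspace, and then analyze the resulting ratio termwise; for (3) maximize the quadratic exponent near $d\approx n/3$, and for (4) observe that maximal isotropic spaces have dimension exactly $n/2$ because $\dim U < n/2$ forces $U\subsetneq U^\perp$. One small slip: in your lower-bound argument for (2), the intermediate inequality $n-2(d-1)\geq d-k+1$ is false in general (e.g. $n=4$, $d=2$, $k=0$), but the conclusion you want, namely $n-2k > d-k$, still holds and is immediate from $n-k > n-d \geq n/2 \geq d$ (using $k<d\leq n/2$), so the final bound is unaffected.
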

\begin{proof}
For (1), suppose we have chosen $u_1, \dots, u_i$ such that $\langle u_1, \dots,
u_i\rangle$ is an isotropic space. We then need to select the next eligible
$u_{i+1}$, such that $\langle u_1, \dots, u_{i+1}\rangle$ forms an isotropic
space. Since $u_{i+1}$ needs to satisfy $u_{i+1}^tAu_j=0$ for $1\leq j\leq i$, and
$A$ is non-degenerate, $u_{i+1}$ should be from a dimension-$(n-i)$ subspace,
namely the subspace orthogonal to $Au_j$, $1\leq j\leq i$. Furthermore, $u_{i+1}$
is not in $\langle u_1, \dots, u_i\rangle$. So there are $q^{n-i}-q^i$ choices of
$u_{i+1}$ in the $i$th step. This explains the numerator. The denominator is of
such form,
because for each isotropic space there are these many ordered bases.

For (2), it follows from the same argument as the proof for
Fact~\ref{fact:gaussian} (2).

For (3), note that $nd-\frac{3}{2}d^2$ achieves its maximum at $d=\frac{1}{3}n$.

For (4), note that maximal isotropic spaces are of dimension $n/2$. This is
because for any isotropic space $U$ of dimension $d<n/2$, we can choose an
eligible
$u_{d+1}$ as in the proof for (1), such that $\langle U, u_{d+1}\rangle$ is also
an isotropic space.
\end{proof}

\begin{proposition}\label{prop:enum}
Let $A\in \Lambda(n, q)$, $n$ even, be a non-degenerate alternating matrix. Then
all isotropic spaces of $A$ can be enumerated in time $q^{\frac{1}{6}n^2+O(n)}$.
\end{proposition}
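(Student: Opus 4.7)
The plan is to enumerate isotropic subspaces by growing them one dimension at a time via DFS, using a canonical basis form to ensure that each subspace is visited exactly once. Since Proposition~\ref{prop:iso_bd}(3) already tells us the total number of isotropic subspaces is $q^{\frac{1}{6}n^2+\Theta(n)}$, the whole task reduces to designing an enumeration that visits each subspace once and spends $\poly(n,\log q)$ time per visit.

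Concretely, I would represent every subspace $U\leq \F_q^n$ by the unique $n\times d$ matrix in reduced column echelon form whose columns $c_1,\dots,c_d$ have strictly increasing leading positions $r_1<r_2<\dots<r_d$, with a $1$ in row $r_i$ of column $i$ and $0$ in rows $r_j$ ($j\neq i$). The recursive step takes an isotropic $U$ (given by this canonical basis) and does the following: (a)~output $U$, (b)~if $d=n/2$ return, (c)~compute $W:=\rad_A(U)=\bigcap_{i=1}^d \ker((Ac_i)^t)$, an $(n-d)$-dimensional subspace, (d)~for each candidate pivot $r\in\{r_d+1,\dots,n\}$ form the affine subspace $S_r\subseteq W$ of vectors whose entries at positions $r_1,\dots,r_d,r_d+1,\dots,r-1$ are $0$ and whose entry at position $r$ is $1$, and (e)~for every $c\in S_r$ recurse on $U\oplus\langle c\rangle$ with extended canonical basis $(c_1,\dots,c_d,c)$. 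Each step uses only standard linear algebra.

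Two points make the analysis go through. First, any $c\in W$ yields a $(d{+}1)$-dimensional isotropic extension \emph{for free}: $c^t A c_i=0$ by definition of $W$, and $c^t A c=0$ because $A$ is alternating; so no candidate enumerated is wasted. Second, because reduced column echelon form is unique, each isotropic subspace arises from exactly one parent (the one obtained by discarding its final canonical column), so no duplicates occur. The total running time is therefore $\poly(n,\log q)$ times the number of (parent,\,child) pairs, which is one less than the number of isotropic subspaces, namely $q^{\frac{1}{6}n^2+O(n)}$.

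The only genuinely nontrivial piece is ensuring the duplicate-free enumeration with polynomial per-node overhead; this is handled cleanly by the canonical form, and the affine subspaces $S_r$ can be computed and iterated over in time proportional to $|S_r|$ plus $\poly(n,\log q)$, which is exactly what is needed for the accounting above. Everything else is routine linear algebra combined with the count from Proposition~\ref{prop:iso_bd}(3).
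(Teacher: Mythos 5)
Your idea of growing isotropic spaces by DFS while maintaining a reduced column echelon basis, so that each subspace has a unique ``canonical parent,'' is a nice route and could in principle give a cleaner and slightly tighter algorithm than the paper's, which does BFS by dimension and explicitly maintains for each $d$-dimensional isotropic $U$ the list of $(d{+}1)$-dimensional isotropic spaces already seen that contain $U$, at a $q^{O(n)}$ overhead per space. But steps (d)--(e) of your recursion do not actually preserve the canonical form, and the duplicate-free claim breaks.

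For $(c_1,\dots,c_d,c)$ to be the reduced column echelon basis of $U'=U\oplus\langle c\rangle$, two conditions beyond those enforced by your $S_r$ are required. (i) $c$ must vanish at \emph{all} positions $1,\dots,r-1$; your $S_r$ leaves the non-pivot positions below $r_d$ unconstrained, so such a $c$ may have leading position strictly less than $r$, and $U'$ then has a different pivot set than claimed. (ii) More seriously, each existing column $c_i$ must vanish at the new pivot row $r$; otherwise the true canonical basis of $U'$ replaces $c_i$ by $c_i-c_i[r]\,c$, so the parent obtained by discarding the last canonical column of $U'$ is $\mathrm{span}(c_1-c_1[r]c,\dots,c_d-c_d[r]c)$, which is not $U$. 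Since (ii) is never checked, a fixed $(d{+}1)$-dimensional isotropic $U'$ with top canonical column $c$ is generated as a ``child'' by all $q^d$ of the subspaces $\mathrm{span}(c'_1+\lambda_1 c,\dots,c'_d+\lambda_d c)$ with $\lambda_i\in\F_q$, not only the one with all $\lambda_i=0$. Chaining this over a depth-$d$ DFS yields on the order of $q^{\binom{d}{2}}$ visits to each $d$-dimensional isotropic space; weighted against the count in Proposition~\ref{prop:iso_bd}(2), the total cost peaks near $d\approx n/2$ at roughly $q^{\frac{1}{4}n^2+O(n)}$, which erases the gain over brute-force subspace enumeration. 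The fix is local: set $S_r$ to enforce vanishing on all of $\{1,\dots,r-1\}$, and before looping over $S_r$ check that $c_i[r]=0$ for every $i\in[d]$ (skipping $r$ otherwise). With those two changes $(c_1,\dots,c_d,c)$ really is the canonical basis of $U'$, each isotropic space has a unique parent, and the accounting via Proposition~\ref{prop:iso_bd}(3) goes through as you intend.
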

\begin{proof}
We enumerate isotropic spaces according to dimensions in an increasing order. Each
subspace of $\F_q^n$ is represented by an ordered basis. We will maintain a list
$L$ of all
isotropic subspaces, and for each isotropic space $U$ of dimension $d$, maintain a
list of isotropic spaces of dimension $d+1$ that contain $U$, denoted as $L(U)$.
Note that for a fixed $U$, there are
at most $q^{n-d}$ such spaces. In other words, we will record the lattice of
isotropic spaces.

Suppose
we have enumerated all isotropic spaces of dimensions $\leq d$. To enumerate
isotropic spaces of dimension $d+1$, we maintain a list of such spaces. Then for
each isotropic space $U$ of dimension $d$, and for each $u\in \rad(U)\setminus U$,
we
form $U'=\langle u, U\rangle$, and test whether $U'$ is in $L(U)$. If not, then we
add
$U'$ to $L$. We also add it to $L(U)$, and for every dimension $d$-subspace
$\tilde U$ of $U'$, add $U'$ to
$L(\tilde U)$. Otherwise we move on.

Clearly, in the above procedure, each isotropic space will added, and only added
to $L$ once. This
procedure runs in
time $N\cdot q^{O(n)}$, where $N$ denotes the number
of isotropic spaces of $A$. We can then conclude by resorting to
Proposition~\ref{prop:iso_bd} (3).
\end{proof}

When working with maximal isotropic spaces, it is enough to restrict our attention
to just non-degenerate matrix spaces.
\begin{observation}\label{obs:mis_nondeg}
For $\cA\leq \Lambda(n, \F)$, any maximal isotropic space of $\cA$ contains
$\rad(\cA)$.
\end{observation}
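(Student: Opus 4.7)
The plan is to deduce Observation~\ref{obs:mis_nondeg} directly from Observation~\ref{obs:isotropic}(2), which characterizes maximality of an isotropic space $U$ by the equation $U = \rad_\cA(U)$. So it suffices to show that $\rad(\cA) \subseteq \rad_\cA(U)$ for any subspace $U \leq \F^n$, and then specialize to an isotropic $U$ that is maximal.

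First I would unfold the definitions: a vector $v$ lies in $\rad(\cA)$ precisely when $Av = \zerovec$ for every $A \in \cA$. For such a $v$ and any $u \in U$ and any $A \in \cA$, we trivially have $u^t A v = u^t \zerovec = 0$. By the definition of $\rad_\cA(U) = \{v \in \F^n : \forall u \in U, \forall A \in \cA,\ u^t A v = 0\}$, this places $v$ in $\rad_\cA(U)$, proving the inclusion $\rad(\cA) \subseteq \rad_\cA(U)$.

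Now let $U$ be a maximal isotropic space of $\cA$. By Observation~\ref{obs:isotropic}(2), maximality is equivalent to $U = \rad_\cA(U)$. Combining with the inclusion established above yields $\rad(\cA) \subseteq \rad_\cA(U) = U$, which is exactly the claim.

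There is no real obstacle here; the only subtle point is keeping straight the distinction between $\rad(\cA)$ (the isolated vectors of $\cA$) and $\rad_\cA(U)$ (the vectors orthogonal to $U$ with respect to every form in $\cA$), and using Observation~\ref{obs:isotropic}(2) to turn maximality into the equality $U = \rad_\cA(U)$. One could equivalently argue by contradiction: if $v \in \rad(\cA) \setminus U$, then $\langle U, v\rangle$ is still isotropic (for any $u_1, u_2 \in U$ and scalars, cross terms involving $v$ vanish because $Av = \zerovec$ for all $A \in \cA$), contradicting maximality. Either phrasing fits neatly in a few lines.
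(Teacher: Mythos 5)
Your proof is correct. The paper states Observation~\ref{obs:mis_nondeg} without proof, treating it as immediate; your argument (showing $\rad(\cA)\subseteq\rad_\cA(U)$ for any $U$ and then invoking Observation~\ref{obs:isotropic}(2) to get $\rad_\cA(U)=U$ for maximal isotropic $U$) is exactly the natural justification the authors had in mind, and the alternative contradiction phrasing you offer is equally sound.
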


\subsection{A non-trivial upper bound on the number of maximal isotropic spaces}

For $\cA\leq \Lambda(n, q)$, let $\MISet(\cA)$ be the set of maximal isotropic
spaces of $\cA$, and $\MINum(\cA)$ be the number of maximal isotropic spaces of
$\cA$, e.g. the size of $\MISet(\cA)$. Let $\MMINum(n,
q)$ be the maximum of $\MINum(\cA)$ over all $\cA\leq \Lambda(n, q)$. By
Fact~\ref{fact:gaussian} (3) and
Proposition~\ref{prop:iso_bd} (4),
$$q^{\frac{1}{4}n^2+O(n)}\geq \MMINum(n,
q)\geq q^{\frac{1}{8}n^2+\Omega(n)}.$$

\paragraph{Theorem~\ref{thm:iso_bound}, slightly reformulated.}
Let $\MMINum(n, q)$ be as above. Then $\MMINum(n, q)\leq
q^{\frac{1}{6}n^2+C\cdot n}$ for some large enough absolute constant $C$.

Let us illustrate the proof strategy for Theorem~\ref{thm:iso_bound}, before we
enter the details.

The starting point of our proof is the alternative proof bounding the number of
maximal independent sets by Wood \cite{Wood11}.

The core of Wood's argument is the
following. Let $G=(V, E)$ be a graph on $n$ vertices. Recall that we want to prove
that the number of maximal independent sets in $G$ is no more than
$g(n)=3^{\frac{n}{3}}$. We shall do an induction on $n$. Let $v\in V$ be a vertex
of minimal
degree $d$. Let $N(v)$ be the set of neighbours of $v$ together with $v$ (e.g. the
closed neighbourhood of $v$). Then
any maximal independent set $I$ contains some $w\in N(v)$, as
otherwise $I\cup\{v\}$ would be a larger independent set. If $I$ contains $w\in
N(v)$, then $I\setminus \{w\}$ would be a maximal independent set of
$G|_{V\setminus N(w)}$, the induced subgraph of $G$ on $V\setminus N(w)$. Since
$G|_{V\setminus N(w)}$ is of size $\leq n-d-1$, we then have
\begin{equation}\label{eqn:wood}
g(n)\leq (d+1)\cdot g(n-d-1).
\end{equation}
From this relation and the induction hypothesis, the result follows in a rather
straightforward fashion.

In the following, we will develop a linear algebraic analogue of
Equation~\ref{eqn:wood}. However, just applying this does not suffice, when there
are many vectors of degree $1$.

We remedy this by showing that in this setting,
the maximum rank is large, which allows us to use an argument similar to one in
Proposition~\ref{prop:iso_bd}. More specifically,
recall that in
Proposition~\ref{prop:iso_bd} (3), we showed that the number of
isotropic spaces of a non-degenerate alternating matrix is bounded from above by
$q^{\frac{1}{6}n^2+O(n)}$. Note that any maximal isotropic space of $\cA$ is an
isotropic space of any $A\in \cA$. So if $\cA$ contains a non-degenerate $A$, we
can immediately obtain Theorem~\ref{thm:iso_bound} in this case. However, there
are non-degenerate matrix spaces that do not contain non-degenerate alternating
matrices. For example, the following is an alternating matrix space of maximum
rank $2$, written in a parametrized form:
$$
A=\begin{bmatrix}
0 & x_1 & \dots & x_n \\
-x_1 & 0 & \dots & 0 \\
\vdots & \vdots & \ddots & \vdots \\
-x_n & 0 & \dots & 0
\end{bmatrix}.
$$
For our purpose, we will need to bound the number of isotropic spaces for
matrix spaces of rank $> \frac{2}{3}n$. So the following lemma is required;
its proof is postponed to
Section~\ref{subsubsec:technical}.
\begin{lemma}\label{lem:technical}
Let $A\in \Lambda(n, q)$ be of rank $>\frac{2}{3}n$. Then the number of isotropic
spaces of $A$ is bounded from above by $q^{\frac{1}{6}n^2+Dn}$ for some large
enough absolute constant $D$.
%If $\cA\leq \Lambda(n, q)$ satisfies $\rk(\cA)>\frac{2}{3}n$, then the number of
%isotropic spaces of $\cA$ is bounded from above by $q^{\frac{1}{6}n^2+Dn}$ for
%some large
%enough absolute constant $D$.
\end{lemma}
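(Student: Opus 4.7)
The plan is to reduce to the non-degenerate case by passing to the quotient of $\F_q^n$ by the radical of $A$, and then to stratify isotropic subspaces by their images in the quotient. Write $\rk(A) = 2r$, so the hypothesis gives $r > n/3$, and let $K := \ker(A) \leq \F_q^n$, which has dimension $k = n - 2r < n/3$. Since $A$ vanishes whenever either argument lies in $K$, it descends to a non-degenerate alternating form $A' \in \Lambda(2r, q)$ on $\F_q^n/K$. Writing $\pi : \F_q^n \to \F_q^n/K$ for the canonical projection, a subspace $U \leq \F_q^n$ is isotropic for $A$ if and only if $\pi(U)$ is isotropic for $A'$, so I will count isotropic $U$ by grouping them according to the isotropic image $V := \pi(U)$ in the quotient.

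For a fixed isotropic $V$ of dimension $s$, I claim each lift $U$ with $\pi(U) = V$ is uniquely determined by the pair $(W, \sigma)$, where $W := U \cap K \leq K$ and $\sigma$ is a linear section of the quotient map $\pi^{-1}(V)/W \to V$. Fixing a linear complement of $K$ inside $\pi^{-1}(V)$ identifies the sections with $\hom(V, K/W)$, of size $q^{s(k - \dim W)}$. Hence the number of lifts of $V$ equals
\[
L(s) \;=\; \sum_{j=0}^{k} \binom{k}{j}_q \, q^{s(k-j)}.
\]
Applying the estimate $\binom{k}{j}_q \leq q^{j(k-j)+j}$ from Fact~\ref{fact:gaussian}(2), each summand is at most $q^{(s+j)(k-j)+j}$, and optimizing over $j$ gives $L(s) \leq q^{sk + O(n)}$ when $s \geq k$ and $L(s) \leq q^{(s+k)^2/4 + O(n)}$ when $s \leq k$.

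Combining these lift counts with Proposition~\ref{prop:iso_bd}(2) applied to $A'$, which bounds the number of dimension-$s$ isotropic subspaces of $A'$ by $q^{2rs - \tfrac{3}{2}s^2 + O(n)}$, the total count is at most $\sum_{s=0}^{r} q^{f(s) + O(n)}$ for an exponent $f(s)$ that I will maximize in two regimes. In the regime $s \geq k$, the identity $2r + k = n$ gives $f(s) = sn - \tfrac{3}{2}s^2 + O(n)$, maximized at $s^* = n/3$ with value $n^2/6 + O(n)$; the hypothesis $r > n/3$ is precisely what simultaneously guarantees $k \leq s^* \leq r$, so that $s^*$ is both feasible in $\F_q^{2r}$ and lies in this regime. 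In the complementary regime $s \leq k$, a short calculation shows $f'(s) = 2r + k/2 - 5s/2 > 0$ throughout $[0,k]$ (the key inequality $2r \geq 2k$ again uses $r > n/3$), so the maximum there is attained at the boundary $s = k$, where $f(k) = 2rk - k^2/2 + O(n)$ agrees with the value from the first regime. The main subtlety is this two-regime optimization: without the hypothesis $\rk(A) > \tfrac{2}{3}n$ the critical point $s^* = n/3$ would fall outside $[k, r]$, the derivative $f'$ could change sign inside $[0,k]$, and the exponent would degrade; it is precisely the rank assumption that pins the optimum at $n^2/6$.
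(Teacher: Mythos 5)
Your proof is correct but takes a genuinely different route from the paper's. The paper argues directly with ordered bases of isotropic spaces: as basis vectors $u_1, \dots, u_d$ are adjoined, it tracks ``jump indices'' $b_i$ at which $\dim(A(U_i))$ increases, separately bounds the number of choices for $u_{i+1}$ according to whether $u_{i+1}$ lies inside or outside $\langle U_i, \rad(A)\rangle$, sums over all types $(b_1, \dots, b_a)$, and divides by the number of ordered bases per subspace. You instead quotient out the radical once and for all, observe that $A$ descends to a non-degenerate $A'$ on $\F_q^n/K$ with isotropy of $U$ equivalent to isotropy of $\pi(U)$, and stratify isotropic $U$ by the pair $(\pi(U),\, U\cap K)$. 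This cleanly factors the count into the number of isotropic subspaces of the non-degenerate quotient --- which you import directly from Proposition~\ref{prop:iso_bd}(2) rather than re-deriving --- times a Gaussian-binomial lift count $L(s)$. The two arguments are morally parallel (the paper's source~1/source~2 dichotomy is precisely whether a new basis vector changes $\pi(U)$ or only enlarges $U$ within $\pi^{-1}(\pi(U))$), but your quotient-and-lift organization is more modular, replaces the ordered-basis bookkeeping with a two-parameter stratification, and makes the role of the rank hypothesis in pinning the optimum at $n^2/6$ more transparent. Your lift count, the bound $\binom{k}{j}_q \leq q^{j(k-j)+j}$, the two-regime optimization of the exponent, and the use of $2r+k=n$ all check out; the only implicit point worth noting is that the $\leq n+1$ values of $s$ and $\leq k+1$ values of $j$ contribute only $\poly(n)$ factors, which your $q^{O(n)}$ error terms already absorb.
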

We are now ready to prove Theorem~\ref{thm:iso_bound}.

\begin{proof}[Proof of Theorem~\ref{thm:iso_bound}]
Let $\cA\leq \Lambda(n, q)$. By Observation~\ref{obs:mis_nondeg}, for our
purpose, we can assume that $\cA$ is non-degenerate. Let
$g_q(n)=q^{\frac{1}{6}n^2+Cn}$. We prove by an induction on $n$. Assume
$\MMINum(\ell, q)\leq g_q(\ell)$ holds for any $\ell<n$. Our goal is to show that
$\MINum(\cA)\leq g_q(n)$.

%Given $\cA\leq \Lambda(n, q)$, by Observation~\ref{obs:mis_nondeg}, for our
%purpose, we can assume that $\cA$ is non-degenerate.

Let $d=\min\{\deg_\cA(v) :
v\in\F_q^n, v\neq \zerovec\}$. As $\cA$ is non-degenerate, $d\geq 1$. Take any
$v\in \F_q^n$ of degree $d$, and let $c=n-d$ be the codegree of $v$. Let
$N(v):=(\F_q^n\setminus \rad_\cA(v))\cup \{v\}=\{u\in \F_q^n : \exists A\in \cA,
u^tAv\neq 0\}\cup \{v\}$. We call $N(v)$ the closed neighbourhood of $v$. Note
that $|N(v)|=q^n-q^c+1$.
%Let
%$g_q(n)=q^{\frac{1}{6}n^2+Cn}$.

Let $U\leq \F_q^n$ be a maximal isotropic space of $\cA$. Clearly, $U\cap N(v)\neq
\emptyset$. As otherwise, we have $U\subseteq \rad(v)$ and $v\not\in U$. This is
equivalent to that $v\in \rad(U)$ and $v\not\in U$. It follows that $U\subsetneq
\rad(U)$, so by Observation~\ref{obs:isotropic}, $U$ is not maximal, a
contradiction.

Therefore there exists some $w\in N(v)\cap U$. It follows that $U\subseteq
\rad(w)$. Since
$U$ is maximal isotropic in $\cA$, $U$
is also a maximal isotropic space of $\cA|_{\rad(w)}$. As $\deg(w)\geq \deg(v)=d$,
$\dim(\rad(w))\leq c$. Furthermore, note $w$ is an isolated vector in
$\cA|_{\rad(w)}$.
We then have
\begin{eqnarray}
\MINum(\cA) & \leq & \sum_{w\in N(v)}\MINum(\cA|_{\rad(w)}) \label{eqn:minum} \\
& \leq & (q^n-q^c+1)\cdot g_q(c-1), \label{eqn:wood_lin}
\end{eqnarray}
%Since this relation holds for any $\cA\leq \Lambda(n, q)$, we obtain
%\begin{equation}\label{eqn:wood_lin}
%g_q(n)\leq (q^n-q^c+1)\cdot g_q(c-1).
%\end{equation}
where the second inequality is due to the induction hypothesis.
Note that on the right hand side, we have $g_q(c-1)$ instead of $g_q(c)$, because
$w$ is an isolated vector in $\cA|_{\rad(w)}$, and
Observation~\ref{obs:mis_nondeg}. The reader may want to compare this
with
Equation~\ref{eqn:wood}.

Now suppose $d\geq 2$, that is, $c\leq n-2$. We then have
\begin{eqnarray*}
\MINum(\cA) & \leq & q^n\cdot g_q(n-3)\\
 & \leq & q^n\cdot q^{\frac{1}{6}(n-3)^2+C(n-3)} \\
 & = & q^{\frac{1}{6}n^2+Cn+(\frac{3}{2}-3C)} \\
 & \leq & q^{\frac{1}{6}n^2+Cn}.
\end{eqnarray*}
Note that the second inequality is by the induction hypothesis, and the last
inequality holds as long as $C\geq 1$.

Now suppose $d=1$. In this case, Equation~\ref{eqn:wood_lin} is not enough
for our purpose. We then need the following refinement. Partition $N(v)$ as
$N_1(v)\cup N_{\geq 2}(v)$, where $N_1(v)=\{w\in \F_q^n : w\in N(v), \deg(w)=1\}$,
and $N_{\geq 2}(v)=N(v)\setminus N_1(v)$. A refinement of Equation~\ref{eqn:minum}
gives
that
\begin{equation}\label{eqn:wood_refine}
\MINum(\cA)\leq |N_1(v)|\cdot g_q(n-2)+|N_{\geq 2}(v)|\cdot g_q(n-3).
\end{equation}

If $|N_1(v)|\leq q^{\frac{2}{3}n}$, then we have
\begin{eqnarray*}
\MINum(\cA) & \leq & q^{\frac{2}{3}n}\cdot g_q(n-2) + q^n\cdot g_q(n-3) \\
 & \leq & q^{\frac{2}{3}n}\cdot q^{\frac{1}{6}(n-2)^2+C(n-2)}+q^n\cdot
 q^{\frac{1}{6}(n-3)^2+C(n-3)} \\
 & \leq &
 q^{\frac{1}{6}n^2+Cn+(\frac{2}{3}-2C)}+q^{\frac{1}{6}n^2+Cn+(\frac{3}{2}-3C)} \\
 & \leq &
 q^{\frac{1}{6}n^2+Cn-1}+q^{\frac{1}{6}n^2+Cn-1}\\
 & \leq & q^{\frac{1}{6}n^2+Cn}.
\end{eqnarray*}
Note that the second inequality is by the induction hypothesis, the second to the
last
inequality holds as long as $C\geq 1$.

If $|N_1(v)|> q^{\frac{2}{3}n}$, then we first prove the following.
\begin{claim}\label{claim:rank}
We have $\rk(\cA)>\frac{2}{3}n$.
\end{claim}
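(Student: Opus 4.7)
The plan is to exploit the rigidity imposed by the hypothesis $d=1$. Every vector $w \in N_1(v)$ has a one-dimensional image $\{Aw : A \in \cA\} = \langle z_w\rangle$ for some nonzero $z_w \in \F_q^n$, so there is a nonzero linear functional $f_w : \cA \to \F_q$ with $Aw = f_w(A)\, z_w$, and the kernel $\cA_w := \ker f_w$ is a hyperplane in $\cA$. In particular this applies to $v$ itself, giving $f_v$, $z_v$, and the hyperplane $\cA_v$.

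The first step I would carry out is to show that $\cA_w = \cA_v$ for every $w \in N_1(v) \setminus \{v\}$. Alternation gives $w^t A v = -v^t A w$ for each $A \in \cA$; substituting $Av = f_v(A) z_v$ and $Aw = f_w(A) z_w$ yields
\[
f_v(A)\cdot (w^t z_v) \;=\; -f_w(A)\cdot (v^t z_w) \qquad \text{for all } A \in \cA.
\]
The hypothesis $w \in N(v)$ together with $Av \in \langle z_v\rangle$ forces $w^t z_v \neq 0$, so combined with $f_v \not\equiv 0$ the displayed identity forces $v^t z_w \neq 0$ and $f_w$ to be a nonzero scalar multiple of $f_v$; in particular $\cA_w = \cA_v$. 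Setting $\mathcal{K} := \cA_v$, every $A \in \mathcal{K}$ satisfies $Av = \zerovec$ and $Aw = \zerovec$ for all $w \in N_1(v)$.

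Next I set $W := \linspan(N_1(v))$. Since $N_1(v) \subseteq W$ and $|W| = q^{\dim W}$, the hypothesis $|N_1(v)| > q^{2n/3}$ forces $\dim W > \tfrac{2}{3}n$. By the previous step, $AW = \zerovec$ for every $A \in \mathcal{K}$. Fix any $A^* \in \cA \setminus \mathcal{K}$, so $\cA = \mathcal{K} + \langle A^*\rangle$. For each $w \in W$ the image $\{Bw : B \in \cA\}$ reduces to $\langle A^* w\rangle$, so $\deg_\cA(w) \leq 1$; non-degeneracy of $\cA$ then forces $A^* w \neq \zerovec$ for every nonzero $w \in W$. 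Hence $A^*|_W$ is injective, which gives $\rk(\cA) \geq \rk(A^*) \geq \dim W > \tfrac{2}{3}n$.

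The main obstacle is the first step: recognizing that alternation rigidly pins the codimension-one kernels $\cA_w$ to each other through the pairing with $v$. Once this is in hand, the bound follows by a clean dimension count that exhibits a single element of $\cA$ of rank exceeding $\tfrac{2}{3}n$.
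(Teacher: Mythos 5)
Your proof is correct, and it hits the same key mechanism as the paper's, but it packages the argument in a cleaner, coordinate-free way. The paper proceeds by an explicit change of basis: it picks $s = \lfloor 2n/3\rfloor + 1$ linearly independent vectors $u_1 = v, u_2, \dots, u_s \in N_1(v)$, passes to an isometric copy $\tilde{\cA}$, and by hands-on computation with matrix entries shows that one can choose a basis $\tilde A_1, \dots, \tilde A_m$ of $\tilde\cA$ in which $\tilde A_2, \dots, \tilde A_m$ all vanish on the first $s$ coordinates (the block form of Equation~(7) in the paper); non-degeneracy then forces the first $s$ columns of $\tilde A_1$ to have full rank $s$. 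Your proof extracts the same phenomenon without choosing coordinates: the degree-$1$ condition at each $w\in N_1(v)$ defines a hyperplane $\cA_w=\ker f_w\leq\cA$, and alternation together with $w^t z_v\neq 0$ pins all these hyperplanes to $\mathcal{K}:=\cA_v$. This is exactly what the uniform scalars $\lambda_j$ encode in the paper's computation, and $\langle\tilde A_2,\dots,\tilde A_m\rangle$ is your $\mathcal{K}$, but you arrive at it via the identity $f_v(A)(w^tz_v)=-f_w(A)(v^tz_w)$ rather than by row/column bookkeeping. Your counting step ($\dim W>\tfrac23 n$ directly from $|N_1(v)|>q^{2n/3}$) is also slightly slicker than the paper's greedy extraction of a linearly independent family. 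What the paper's version buys is constructivity: the explicit basis and block form are reused to make Theorem~\ref{thm:iso_bound} algorithmic (Corollary~\ref{cor:iso_bound}); your argument could be made constructive too, but you'd need to recover a basis of $\mathcal{K}$ and $W$ explicitly. What your version buys is clarity: the alternation/proportionality step makes it transparent why all the degree-$1$ constraints collapse to one hyperplane.
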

\begin{proof}[Proof for Claim~\ref{claim:rank}]
Suppose $\dim(\cA)=m$. Let $s=\lfloor\frac{2}{3}n\rfloor+1$, the smallest integer
larger than $\frac{2}{3}n$. We will show that there exists a linear basis of
some $\tilde \cA$ that is isometric to $\cA$, $\tilde A_1, \dots, \tilde A_m\in
\Lambda(n, q)$, such that
\begin{equation}\label{eqn:hope}
\tilde A_1=\begin{bmatrix}
B_1 & -C_1^t \\
C_1 & D_1
\end{bmatrix}, \tilde A_2=\begin{bmatrix}
\zerovec & \zerovec \\
\zerovec & D_2
\end{bmatrix}, \dots, \tilde A_m=\begin{bmatrix}
\zerovec & \zerovec \\
\zerovec & D_m
\end{bmatrix},\end{equation}
where $B_1\in \Lambda(s, q)$, $C_1\in M((n-s)\times s, q)$, and
$D_i\in \Lambda(n-s, q)$. From this linear basis, it is clear that $\begin{bmatrix}
B_1 \\ C_1
\end{bmatrix}$ is of rank $s>\frac{2}{3}n$, as otherwise $\tilde \cA$ would be
degenerate. It would follow then that $\rk(\cA)=\rk(\tilde\cA)>\frac{2}{3}n$.

We first start with an arbitrary linear
basis of $\cA$, say $ A_1, \dots,  A_m\in \Lambda(n, q)$. Recall that
$v$ is
of degree $1$, and $|N_1(v)|>q^{\frac{2}{3}n}$. For later convenience, rename $v$
as $u_1$. Then there
exist $u_2, u_3, \dots, u_s\in \F_q^n$, such that for $i\geq 2$, $u_i\in N_1(v)$,
and $u_1, \dots, u_s$ are linearly independent. As otherwise, suppose the maximum
number of linearly independent $u_i$'s from $N_1(v)$ we can find is $t<s$. Then
since
$N_1(v)>q^{\frac{2}{3}n}\geq q^t=|\langle u_1, \dots, u_t\rangle|$, we can find
$u_{t+1}\in N_1(v)\setminus \langle u_1, \dots, u_t\rangle$, a contradiction.

We then can arrange a change of basis matrix $T$ whose first $s$ columns are $u_1,
\dots, u_s$. Apply this change of basis matrix $T$ (by $T^t\cdot T$) to $
A_1,
\dots,  A_m$ to get a linear basis $\bar A_1, \dots, \bar A_m$ for $\tilde
\cA=T^t\cA T$. Recall that $e_i$ denotes the $i$th standard basis vector. Since
$u_i$'s are of degree $1$, for any
$i\in [s]$, we have $\tilde\cA(e_i)$ is of dimension $1$.
For $2\leq i\leq s$,
since $u_i\in N(v)$, we have
\begin{equation}\label{eqn:eie1}
e_i^t(\bar A_1, \dots, \bar A_m)e_1\neq (0, \dots,
0).
\end{equation}

Without loss of generality, assume $\bar A_1e_1\neq \zerovec$. As $\tilde\cA(e_1)$
is of dimension $1$, we have for any
$2\leq j\leq m$, $\bar A_je_1=\lambda_j\bar A_1e_1$ for some $\lambda_j\in
\F_q$.
We claim that for
any $2\leq
i\leq s$, the $i$th entry of $\bar A_1e_1$, $\bar A_1e_1(i)\neq 0$. If not, then
for any $2\leq j\leq m$,
$\bar A_je_1(i)=\lambda_j\bar A_1e_1(i)=0$. This is equivalent to say that
$e_i^t(\bar A_1,
\dots, \bar A_m)e_1=0$, contradicting Equation~\ref{eqn:eie1}.

As $\bar A_i$'s are alternating matrices, we have for any $i, j, k$, $(\bar
A_ie_j)(k)=-(\bar
A_ie_k)(j)$. It follows that for $2\leq i\leq s$, $\bar A_1e_i(1)=-\bar
A_1e_1(i)\neq 0$, and for $2\leq j\leq m$, $\bar A_je_i(1)=-\bar
A_je_1(i)=-\lambda_j\bar A_1e_1(i)=\lambda_j\bar A_1e_i(1)$. Since $\cA(e_i)$ is
of dimension $1$ for $2\leq i\leq s$, we infer that for $2\leq j\leq m$ and $2\leq
i\leq s$, $\bar A_je_i=\lambda_j\bar A_1e_i$. We then let $\tilde A_1=\bar A_1$,
and for $2\leq j\leq m$, $\tilde A_j=\bar A_j-\lambda_j\bar A_1$. Clearly, $\tilde
A_1, \dots, \tilde A_m$ still form a basis of $\tilde \cA$, and they are of the
form in Equation~\ref{eqn:hope}. The claim then follows.
\end{proof}

Combining Claim~\ref{claim:rank} and Lemma~\ref{lem:technical}, the proof is
concluded.
\end{proof}

\subsubsection{Proof of Lemma~\ref{lem:technical}}\label{subsubsec:technical}

Let $c$ be the corank of $A$. We then have $c<\frac{1}{3}n$.
%For any $U\leq
%\F_q^n$, let $A^{-1}(U)=\{v\in \F_q^n : A(v)\in U\}$. We then have
%$\dim(A^{-1}(U))=\dim(U)+c$.

%We first consider a simple situation. Suppose we want to form a dimension-$2$
%isotropic space of $A$, $U\leq \F_q^n$. We can first choose a nonzero vector
%$u_1\in \F_q^n$, for which we have $q^n-1$ possible choices. For the second
%vector
%$u_2$, there are two cases.
%\begin{enumerate}
%\item If $\langle u_1, u_2\rangle\cap \rad(A)=0$, then $u_2$ needs to satisfy
%$u_2^tAu_1=0$, so the number of choices of $u_2$ is upper bounded by $q^{n-1}-q$.
%\item If $\langle u_1, u_2\rangle\cap \rad(A)\neq 0$, then $u_2\in \langle u_1,
%\rad(A)\rangle$, so the number of choices of $u_2$ is upper bounded by
%$q^{c+1}-q$.
%\end{enumerate}
%Each ordered basis of $U$ falls into one of these two cases. Therefore, the
%number
%of dimension-$2$ isotropic spaces is upper bounded by
%$$
%\frac{(q^n-1)(q^{n-1}-q)}{(q^2-1)(q^2-q)}+\frac{(q^n-1)(q^{c+1}-q)}{(q^2-1)(q^2-q)}.
%$$

%We now generalise this type of reasoning to deal with general
%dimensions.

%Let $U$ be a dimension-$d$ isotropic space of $A$.
Let $(u_1, \dots,
u_d)$ be an ordered basis of an isotropic
space $U$ of $A$ of dimension $d$. For $i\in [d]$, let $U_i=\langle u_1,
\dots, u_i\rangle$, and let $a_i=\dim(A(U_i))$. We also let $U_0=\zerovec$, and
$a_0=0$. Note that $U=U_d$, and we also let
$a=a_d$. We call such an isotropic space of $(d, a)$ type.
Note that $\dim(\langle U_i,
\rad(A)\rangle)=\dim(U_i)+\dim(\rad(A))-\dim(U_i\cap
\rad(A))=\dim(A(U_i))+\dim(\rad(A))=a_i+c$.

After fixing $u_1, \dots, u_i$, a valid $u_{i+1}$ can come from two sources.
\begin{enumerate}
\item If $u_{i+1}\not\in \langle U_i, \rad(A)\rangle$, then since $u_{i+1}$ needs
to
satisfy
$u_{i+1}^tAu_j=0$ for $j=1, \dots, i$, the number of choices of $u_{i+1}$ is upper
bounded by $q^{n-a_i}-q^i$.
\item If $u_{i+1}\in \langle U_i, \rad(A)\rangle$, then the number of choices of
$u_{i+1}$ is upper
bounded by $q^{c+a_i}-q^i$.
\end{enumerate}
So the following indices are important: for $i\in[a]$, let $b_i$ be the smallest
$j\in [d]$ such
that $a_j=\dim(A(U_j))=i$. We then have $0<b_1<b_2<\dots<b_a\leq d$. We also let
$b_0=0$ and $b_{a+1}=d$.
We
call
such an
ordered basis of $(b_1, \dots, b_a)$ type of an isotropic space of $(d, a)$ type.

The number of possible types of an isotropic space is trivially upper bounded by
$n^2$, and
the number of possible types of ordered bases of isotropic spaces of type $(d, a)$
is upper
bounded by
$\binom{d}{a}\leq 2^d\leq 2^n$. So by a multiplicative factor of $n^2\cdot 2^n$,
we can restrict to consider ordered basis $(u_1, \dots, u_d)$ of a fixed
type $\vec{b}=(b_1, \dots,
b_a)$. By the discussion above, if $j=b_i$, then the number of choices for $u_j$
is upper bounded by $t_{d, a, \vec{b}}(j):=q^{n-(i-1)}-q^{j-1}$. If
$b_i<j<b_{i+1}$, the
number of
choices for $u_j$ is upper bounded by $t_{d, a, \vec{b}}(j):=q^{c+i}-q^{j-1}$.
Recall that
$(q^n-q^i)/(q^d-q^i)\leq q^{n-d+1}$, for any $q$
and $i< d\leq n$.  If $j=b_i$, we have
\begin{equation}\label{eq:jump}
t_{d, a, \vec{b}}(j)/(q^d-q^{j-1})\leq
q\cdot q^{n-(i-1)-d}.
\end{equation}
If $b_i<j<b_{i+1}$, we have
\begin{equation}\label{eq:keep}
t_{d, a,
\vec{b}}(j)/(q^d-q^{j-1})\leq q\cdot q^{(c+i)-d}\leq q\cdot q^{(c+a)-d}
\end{equation}

Each dimension-$d$ subspace of $\F_q^n$ has $(q^d-1)(q^d-q)\dots(q^d-q^{d-1})$
ordered bases, and each ordered basis of an isotropic space of type $(d, a)$ is of
a particular type.
The number of dimension-$d$ isotropic spaces of type $(d,
a)$ can be upper bounded by
\begin{eqnarray*}
& & \sum_{\text{type }\vec{b}=(b_1, \dots, b_a)}\frac{t_{d, a,
\vec{b}}(1)\cdot\ldots\cdot
t_{d, a, \vec{b}}(d)}{(q^d-1)\cdot\ldots\cdot (q^d-q^{d-1})} \\
&= & \sum_{\text{type }\vec{b}=(b_1, \dots, b_a)}\frac{t_{d, a,
\vec{b}}(1)}{q^d-1}\cdot\ldots \cdot \frac{t_{d, a,
\vec{b}}(j)}{q^d-q^{j-1}}\cdot\ldots\cdot \frac{t_{d, a, \vec{b}}(d)}{q^d-q^{d-1}}
\\
&\leq & \sum_{\text{type }\vec{b}=(b_1, \dots, b_a)} q^d\cdot
q^{na-\sum_{i=1}^a(i-1)-da}\cdot q^{(c+a)(d-a)-d(d-a)}\\
%&\leq & n^2\cdot 2^n\cdot q^{na-a(a-1)/2+(c+a)(d-a)-d^2+d}.
&\leq & 2^n\cdot q^{na-a^2/2+(c+a)(d-a)-d^2+d+a/2}.
\end{eqnarray*}
Let us explain the first inequality. The $q^d$ term is because of the $q$ terms on
the right hand sides of Equations~\ref{eq:jump} and~\ref{eq:keep}. The
$q^{na-\sum_{i=1}^a(i-1)-da}$ is by collecting those terms from
Equation~\ref{eq:jump}, and the $q^{(c+a)(d-a)-d(d-a)}$ term is by collecting
those terms from Equation~\ref{eq:keep}.

It is then clear that we need to bound $f(n, d, a)=na-a^2/2+(c+a)(d-a)-d^2$ for
$1\leq
a\leq d\leq n$. After some arrangement, we have
$$
f(n, d, a)=-\frac{3}{2}(a-\frac{1}{3}(n+d-c))^2+\frac{1}{6}(n+d-c)^2-d^2+dc.
$$
We then distinguish between two cases.
\begin{enumerate}
\item Case (i): when $\frac{1}{3}(n+d-c)\leq d$ holds, namely $d\geq (n-c)/2$.
Only in this case, $a$ can be set to
$\frac{1}{3}(n+d-c)$, and the maximum can be set to $g(n,
d):=\frac{1}{6}(n+d-c)^2-d^2+dc$. After some arrangement, we have
$$g(n,
d)=-\frac{5}{6}(d-\frac{1}{5}(n+2c))^2+\frac{1}{30}(n+2c)^2+\frac{1}{6}(n-c)^2.$$
Since $c<n/3$, we have $(n-c)/2>(n+2c)/5$. Recall that $d\geq (n-c)/2$. So $g(n,
d)$ achieves maximal at $d=(n-c)/2$. Plugging this in, the maximal value is
$$
h(n):=g(n, (n-c)/2)=-\frac{3}{8}(c-\frac{1}{3}n)^2+\frac{1}{6}n^2<\frac{1}{6}n^2.$$
\item Case (ii): when $\frac{1}{3}(n+d-c)>d$ holds, namely $d<(n-c)/2$. In this
case, $f(n, d, a)$ achieves the maximal value at $a=d$, and
$$
f(n, d, d)=-\frac{3}{2}(d-\frac{1}{3}n)^2+\frac{1}{6}n^2\leq \frac{1}{6}n^2,$$
where the inequality becomes an equality at $d=n/3$.
\end{enumerate}
Since in both cases, the maximal value is no more than $\frac{1}{6}n^2$, we can
then conclude the proof.

\subsection{Turning Theorem~\ref{thm:iso_bound} into an algorithm}

The proof of Theorem~\ref{thm:iso_bound} can be turned into an algorithm for
enumerating all maximal isotropic spaces in time $q^{\frac{1}{6}n^2+O(n)}$. We
briefly indicate some algorithmic issues for doing this. Firstly, note that in
time $q^{O(n)}$, one can compute $\deg_\cA(v)$ for all $v\in \F_q^n$. Secondly,
the Equation~\ref{eqn:minum} naturally suggests a recursive algorithm structure.
In the cases when $d\geq 2$, or $d=1$ and $|N_1(v)|\leq q^{\frac{2}{3}n}$, this
recursive structure readily gives the desired algorithm. If $d=1$ and $|N_1(v)|>
q^{\frac{2}{3}n}$, we need to make the proofs of Claim~\ref{claim:rank} and
Lemma~\ref{lem:technical} constructive. Then for each isotropic space of some
$A\in \Lambda(n, \F)$, $A$ of rank $>\frac{2}{3}n$, test whether it is maximal
using Observation~\ref{obs:isotropic}.

For Claim~\ref{claim:rank}, note that the selection of $u_i$'s from $N_1(v)$ can
be done easily
in a greedy way. Other steps are readily constructive. For
Lemma~\ref{lem:technical}, we use the same procedure as in
Proposition~\ref{prop:enum}, whose running time is bounded in time
$q^{\frac{1}{6}n^2+O(n)}$ by Lemma~\ref{lem:technical}.

We then have the following.
\begin{corollary}\label{cor:iso_bound}
Given $\cA\leq \Lambda(n, q)$, all maximal isotropic spaces can be enumerated in
time $q^{\frac{1}{6}n^2+O(n)}$.
\end{corollary}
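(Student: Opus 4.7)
The plan is to make each step in the proof of Theorem~\ref{thm:iso_bound} algorithmically constructive, so that the branching tree produced by the counting argument is literally traversed by the algorithm, with total work bounded by the same recurrence that yields the $q^{\frac{1}{6}n^2+O(n)}$ bound. First I would precompute, in time $q^{O(n)}$, the codegree $\codeg_\cA(v)$ and the radical $\rad_\cA(v)$ for every $v\in\F_q^n$; this lets me (i) find a minimum-degree nonzero vector $v$, (ii) assemble the closed neighborhood $N(v)$, and (iii) test maximality of any candidate $U$ by checking $U=\rad_\cA(U)$ via Observation~\ref{obs:isotropic}.

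The main recursion mirrors Equation~\ref{eqn:minum}: branch over $w\in N(v)$, recurse on $\cA|_{\rad_\cA(w)}$ (where $w$ becomes isolated and can be stripped off, reducing the effective dimension as in the proof), and prepend $w$ to the returned bases. To avoid outputting the same maximal $U$ twice from different choices of $w$, I would canonicalize each enumerated subspace by reduced column-echelon form and filter via a hash table; the cost per output is only $q^{O(n)}$, and the global cost is controlled by exactly the sum on the right of Equation~\ref{eqn:wood_refine}, which the proof of Theorem~\ref{thm:iso_bound} already bounded by $q^{\frac{1}{6}n^2+O(n)}$ in the cases $d\geq 2$ and $d=1$ with $|N_1(v)|\leq q^{\frac{2}{3}n}$.

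The delicate case is $d=1$ with $|N_1(v)|>q^{\frac{2}{3}n}$, where the proof of Theorem~\ref{thm:iso_bound} stops recursing and invokes Claim~\ref{claim:rank} plus Lemma~\ref{lem:technical}. To make Claim~\ref{claim:rank} constructive, I would greedily extend $u_1=v$ by picking $u_2,\dots,u_s\in N_1(v)$ that stay linearly independent (feasible because $|N_1(v)|>q^{\frac{2}{3}n}\geq q^t$ for $t<s=\lfloor 2n/3\rfloor+1$), change basis by the matrix $T$ whose leading columns are the $u_i$, and then solve for the scalars $\lambda_j$ that eliminate the $e_1$-column off-diagonal blocks, producing an explicit $\tilde A_1\in \cA$ with $\rk(\tilde A_1)>\tfrac{2}{3}n$ in $\poly(n,\log q)$ arithmetic steps. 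With $\tilde A_1$ in hand, I would run the layered enumeration of Proposition~\ref{prop:enum} on $\tilde A_1$; by Lemma~\ref{lem:technical} this visits at most $q^{\frac{1}{6}n^2+O(n)}$ isotropic spaces, and filtering via $U=\rad_\cA(U)$ picks out those that are maximal in $\cA$ (a superset of the true maximal isotropics, since any maximal isotropic of $\cA$ is in particular isotropic for $\tilde A_1$).

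The main obstacle I expect is the duplicate-avoidance bookkeeping: the recursion in Equation~\ref{eqn:minum} can rediscover the same $U$ from several $w\in N(v)\cap U$, and naively multiplying the branching factor by the recursion depth would inflate the running time. This is resolved by the canonical-form hashing mentioned above, since the counting bound in the proof of Theorem~\ref{thm:iso_bound} already charges each $U$ once per $w\in N(v)\cap U$ and the total remains $q^{\frac{1}{6}n^2+O(n)}$; the per-node linear-algebra (computing $\rad_\cA(w)$, restricting $\cA$ to it, canonicalizing bases) contributes only a $q^{O(n)}$ overhead, which is absorbed in the $O(n)$ in the exponent.
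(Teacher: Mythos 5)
Your proposal is correct and follows essentially the same route as the paper's own proof: precompute degrees in $q^{O(n)}$ time, recurse along Equation~\ref{eqn:minum} in the cases $d\geq 2$ or ($d=1$ with $|N_1(v)|\leq q^{\frac{2}{3}n}$), and in the remaining case make Claim~\ref{claim:rank} constructive to obtain a rank-$>\frac{2}{3}n$ matrix $\tilde A_1$, run the layered enumeration of Proposition~\ref{prop:enum} on it with the count controlled by Lemma~\ref{lem:technical}, then filter for maximality via Observation~\ref{obs:isotropic}. The only thing you spell out more explicitly than the paper is the canonical-form deduplication of recursively produced subspaces, which is a reasonable implementation detail the paper leaves implicit.
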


\section{Proof of Theorem~\ref{thm:lawler}}\label{sec:lawler}

\paragraph{Review of Lawler's algorithm.} We first review Lawler's dynamic
programming idea for computing the chromatic
number \cite{Law76}, and then adapt that idea to our problem.

Given a graph $G=(V, E)$, Lawler's algorithm for computing $\chi(G)$ goes as
follows. The idea is to build a table storing $\chi(H)$ for every induced subgraph
$H$ of $G$. Note that this table is of size $2^n$. To fill in this table,
the starting point is the empty graph with
chromatic number $0$. Suppose we have computed the chromatic numbers of those
induced subgraphs of size $<\ell$. Let $H=(U, F)$ be an induced subgraph of size
$\ell$. Then the chromatic number of $H$ can be computed by the following formula:
$$
\chi(H)=1+\min_{I\subseteq U}\{\chi(H[U\setminus
I])\},
$$
where $I$ goes over all maximal independent sets of $H$, and $H[U\setminus I]$ is
the induced subgraph of $H$ restricting to vertex set
$U\setminus I$. Since there are at most $3^{\ell/3}$ maximal independent sets of
$H$ and they can be enumerated in time $O(3^{\ell/3}\cdot n)$, the exponential
part of the time complexity of this algorithm is
$\sum_{\ell=0}^n\binom{n}{\ell}\cdot 3^{\ell/3}=(1+\sqrt[3]{3})^n$.

\paragraph{Directly applying Lawler's idea to isotropic numbers.} The above idea
can be adapted to compute $\chi(\cA)$ for $\cA\leq \Lambda(n, q)$
as follows. To start with, recall that in the above algorithm we used the
following simple
fact: if a graph $G$ admits a vertex $c$-coloring, then there is a vertex
$c$-coloring in which one part is a maximal independent set. We leave the
reader to check that the analogue of this fact in the alternating matrix space
setting also holds.

Given $\cA\leq \Lambda(n, q)$, we also store a table storing $\chi(\cB)$ for every
induced alternating matrix spaces $\cB$ of $\cA$. Note that this table is of size
$q^{\frac{1}{4}n^2+O(n)}$. To fill in this table, the starting point is the zero
space with isotropic decomposition number $0$. Suppose we have computed the
isotropic decomposition numbers of those induced alternating matrix spaces of
dimension $<\ell$. Let $\cB\leq \Lambda(\ell, q)$ be an induced alternating matrix
space corresponding to $U\leq \F_q^n$ of dimension $\ell$. Then the isotropic
decomposition number of $\cB$ can be computed by the following formula
$$
\chi(\cB)=1+\min_{V\leq U, W\leq U}\{\chi(\cB|_W)\},
$$
where $V$ goes over all maximal isotropic spaces of $\cB$, and $W$ goes over all
complement subspaces of $V$ in $U$. Note that here we also need to enumerate all
complements of $V$, while in the graph setting, the complement set is unique.
Recall that by Theorem~\ref{thm:iso_bound}, there are at most
$q^{\frac{1}{6}\ell^2+O(\ell)}$ maximal isotropic spaces of $\cB$, and they can be
enumerated in time $q^{\frac{1}{6}\ell^2+O(\ell)}$. This gives a bound on the
number of $V$. We bound the number of $W$ using the trivial
$q^{\frac{1}{4}\ell^2+O(\ell)}$ bound. Note that we will need to test whether $W$
is a complement of $V$, which can be done easily. Another more efficient approach
would be to
enumerate all complements of $U$ in time $q^{d(n-d)}\cdot \poly(n, \log q)$ (see
\cite[Proposition 17 in the arXiv version]{LQ17}).

So to fill in those entries corresponding
to alternating matrix spaces induced by $\ell$-dimensional subspaces, the
time complexity is bounded by
\begin{eqnarray*}
& & \binom{n}{\ell}_q\cdot q^{\frac{1}{6}\ell^2+O(\ell)}\cdot
q^{\frac{1}{4}\ell^2+O(\ell)} \\
& \leq & q^{\ell(n-\ell)+\ell}\cdot q^{\frac{1}{6}\ell^2+O(\ell)}\cdot
q^{\frac{1}{4}\ell^2+O(\ell)} \\
& = & q^{\ell n-\frac{7}{12}\ell^2+O(\ell)}\\
& = & q^{-\frac{7}{12}(\ell-\frac{6}{7}n)^2+\frac{3}{7}n^2+O(\ell)}\\
& \leq & q^{\frac{3}{7}n^2+O(n)}.
\end{eqnarray*}
Summing over $\ell\in\{0, 1, \dots, n\}$, we see that the total time complexity is
also bounded by $q^{\frac{3}{7}n^2+O(n)}$.

\paragraph{A new dynamic programming scheme.} In the above, we see that directly
following Lawler's dynamic programming scheme does lead to an improved algorithm
for computing the isotropic decomposition number. However, a key difference with
the classical setting, namely the magnitude of complement subspaces, impacts the
analysis. In the following, we shall use another dynamic programming scheme, still
combined with the $q^{\frac{1}{6}n^2+O(n)}$ upper bound on the number of maximal
isotropic spaces, to achieve the $q^{\frac{5}{12}n^2+O(n)}$ running time as
promised in Theorem~\ref{thm:lawler}.

To do that, we first make a simple observation.
\begin{observation}\label{obs:maximal_chi}
Let $\cA\leq \Lambda(n, \F)$. Then $\chi(\cA)\leq k$, if and only if, there exist
$k$
maximal isotropic spaces $U_1, \dots, U_k$, such that $\F^n=\langle
\cup_{i\in[k]}U_i\rangle$.
\end{observation}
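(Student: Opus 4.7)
The plan is to prove the two directions of the equivalence separately using only elementary extensions and restrictions of isotropic spaces.

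For the forward direction, assume $\chi(\cA)\leq k$, so there is an isotropic $c$-decomposition $\F^n = W_1\oplus \cdots\oplus W_c$ with $c\leq k$ and each $W_i$ a nonzero isotropic space. The plan is to extend each $W_i$ to a maximal isotropic space $U_i\supseteq W_i$ using the greedy extension of Proposition~\ref{prop:maximal_is} (which guarantees that every isotropic space is contained in some maximal one). Since $W_i\subseteq U_i$, we have $\F^n = \langle \cup_{i=1}^c W_i\rangle \subseteq \langle \cup_{i=1}^c U_i\rangle$. If $c<k$, pad the list by repeating $U_1$ (or any maximal isotropic space of $\cA$, which exists by Proposition~\ref{prop:maximal_is}) until the list has length exactly $k$; this does not affect the spanning condition.

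For the reverse direction, assume maximal isotropic spaces $U_1,\dots,U_k$ satisfy $\F^n = \langle \cup_{i\in[k]}U_i\rangle$. The plan is to extract a direct-sum isotropic decomposition from this covering via a standard greedy basis-extraction. Set $W_0 = \zerovec$, and for $i=1,\dots,k$ choose $V_i\leq U_i$ to be any complement of $W_{i-1}\cap U_i$ inside $U_i$, so that $W_i := W_{i-1} + U_i = W_{i-1}\oplus V_i$. By construction, $W_k = \sum_{i=1}^k U_i = \F^n$ and $W_k = V_1\oplus\cdots\oplus V_k$. Each $V_i$ is a subspace of the isotropic space $U_i$ and hence is itself isotropic. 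After discarding any $V_i$ that happens to be zero, we obtain an isotropic $c$-decomposition of $\F^n$ with $c\leq k$, which yields $\chi(\cA)\leq k$.

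Both directions are routine linear algebra once one has Proposition~\ref{prop:maximal_is} available, and the only minor bookkeeping is that padding (forward) and dropping zero summands (reverse) let us freely trade between ``at most $k$'' and ``exactly $k$'' in the count of maximal isotropic spaces. There is no real obstacle here; the observation is essentially a reformulation of the definition of $\chi(\cA)$ in terms of coverings by maximal isotropic spaces, in direct analogy with the graph-theoretic fact that $\chi(G)\leq k$ iff $V(G)$ can be covered by $k$ maximal independent sets.
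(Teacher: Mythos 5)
Your proof is correct and follows the same approach as the paper: the forward direction extends each summand of an isotropic decomposition to a maximal isotropic space (and pads if needed), and the reverse direction greedily carves a direct-sum refinement out of the covering by $U_1,\dots,U_k$. You simply spell out the greedy extraction that the paper states in one line; the substance is identical.
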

\begin{proof}
For the only if direction, recall that every isotropic space is contained in a
maximal one.
For the if direction, note that from $U_1, \dots, U_k$, we can construct $U_1',
\dots, U_k'$, such that $U_i'\leq U_i$, and $U_1', \dots, U_k'$ form a direct sum
decomposition of
$\F^n$. This shows that $\chi(\cA)\leq k$.
\end{proof}

The key to our algorithm is the following function. For $k\in[n]$ and $W\leq
\F_q^n$, let $f(k, U)$ be the boolean function such that $f(k, W)=1$ if and only
if $W=\langle \cup_{i\in[k]} U_i : U_i\text{ maximal isotropic}\rangle$. For
example, $f(1, W)=1$ if and only if $W$ is a maximal isotropic space.

The following is then a dynamic programming scheme computing $f(k, W)$ for every
$k\in[n]$ and $W\leq \F_q^n$. Let $\cA\leq \Lambda(n, q)$ be an alternating matrix
space. We assume that $\chi(\cA)>1$, as $\chi(\cA)=1$ if and only if $\cA$ is the
zero space.
\begin{enumerate}
\item Use Corollary~\ref{cor:iso_bound} to compute the set of maximal isotropic
spaces of $\cA$, and let $MI$ be this set.
\item Build a table $f$, indexed by $(k, W)$ where $k\in[n]$ and $W\leq \F_q^n$,
and initiate $f(k, W)=0$ for every $k$ and $W$.
\item For every $W\leq \F_q^n$, do:
\begin{enumerate}
\item If $W$ is maximal isotropic, then $f(1, W)=1$.
\end{enumerate}
\item For $k=2, \dots, n$, do:
\begin{enumerate}
\item For every $W\leq \F_q^n$ and every $T\in MI$, do:
\begin{enumerate}
\item If $f(k-1, W)=1$, then let $U=\langle W\cup T\rangle$, and set $f(k, U)=1$.
\item If $U=\F_q^n$, then return ``$\chi(\cA)=k$.''
\end{enumerate}
\end{enumerate}
\end{enumerate}

To prove the correctness of the algorithm, we first note that by induction, the
algorithm correctly computes $f(k, W)$ for every $k$ and $W$. Then
suppose the algorithm
returns with reporting that $\chi(\cA)=k$. Note that in this case, it does find
$k$ maximal isotropic subspaces covering the
whole space $\F_q^n$. So by Observation~\ref{obs:maximal_chi}, $\chi(\cA)\leq k$.
So we are left to show that $\chi(\cA)\geq k$. By way of contradiction, suppose
$\chi(\cA)=k'<k$, so by Observation~\ref{obs:maximal_chi}, there exist maximal
isotropic subspaces
$U_1, \dots, U_{k'}$ that cover $\F_q^n$. Let $W=\langle U_2\cup\dots \cup
U_{k'}\rangle$. Then $W$ is a proper subspace of $\F_q^n$, as otherwise by
Observation~\ref{obs:maximal_chi}
$\chi(\cA)\leq k'-1<k'$, contradicting that $\chi(\cA)=k'$. But this means that
$f(k'-1, W)=1$, so in Step (4.a), when enumerating $W$ and $T=U_1$, the algorithm
would go through steps (4.a.i) and (4.a.ii), and outputs that $\chi(\cA)=k'$. This
gives us the desired contradiction.

To estimate the running time of the algorithm, note that Step (1) costs
$q^{\frac{1}{6}n^2+O(n)}$ by Corollary~\ref{cor:iso_bound}. All subspaces can be
enumerated in time $q^{\frac{1}{4}n^2+O(n)}$ by the same technique as in the proof
of Proposition~\ref{prop:enum}. The total running time is then dominated by the
loop in steps (4) and (4.a), which is $n\cdot q^{\frac{1}{6}n^2+O(n)}\cdot
q^{\frac{1}{4}n^2+O(n)}=q^{\frac{5}{12}n^2+O(n)}$. This concludes the proof of
Theorem~\ref{thm:lawler}.

\section{Proofs for Propositions~\ref{thm:maximum_isC} 
and~\ref{prop:bounded_deg}}\label{sec:maximum_isC}

%When $\F$ is a finite field, the problem is obviously in NP.
\begin{proof}[Proof of Proposition~\ref{thm:maximum_isC}]
When $\F=\C$ and the input instance is over $\Z$, we shall formulate the maximum
isotropic space problem as a problem about the solvability of a system of
integeral polynomial equations over $\C$. The result would follow then by using
Koiran's
result that the Hilbert Nullstellensatz is in PH, assuming the generalized Riemann
hypothesis \cite{Koi96}. We first cite
Koiran's result as follows, following the formulation of \cite[Theorem
2.10]{Mul17}.

\begin{theorem}[{\cite{Koi96}}]\label{thm:koiran}
The problem Hilbert's Nullstellensatz of deciding whether a given system of
multivariate integral polynomials, specified as arithmetic circuits, has a
solution over $\C$
is in PSPACE unconditionally, and in %$AM\subseteq
$RP^{NP}\subseteq \Pi_2$ assuming
the generalized Riemann hypothesis.
\end{theorem}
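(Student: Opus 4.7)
The plan is to establish both bounds via effective reductions of the existential theory of $\C$ to solvability modulo primes. For the unconditional PSPACE bound, I would invoke the effective Nullstellensatz of Brownawell/Koll\'ar: if $f_1,\dots,f_m$ in $n$ variables of maximum degree $d$ have no common zero over $\C$, then there exist polynomials $g_1,\dots,g_m$ with $\sum_i g_i f_i = 1$ and $\deg g_i \le D := \max(3,d)^n$. Existence of such a certificate is a linear feasibility question over $\Q$ with doubly exponentially many unknowns (the coefficients of the $g_i$ in a monomial basis), each entry of the coefficient matrix being computable from the circuit for $f_i$ in polynomial space. PSPACE-bounded implicit Gaussian elimination on such matrices then decides existence, giving a PSPACE algorithm for the NO instances, and hence for HN by complementation (PSPACE is closed under complement).

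For the conditional $\mathrm{RP}^{\mathrm{NP}} \subseteq \Pi_2$ bound under GRH, the algorithm I would propose samples a prime $p$ uniformly at random from $[2^N, 2^{N+1}]$ for $N = \poly(\text{input size})$, and then asks the NP oracle whether the system has a common zero in some small-degree extension $\F_{p^k}$, the NP witness being an irreducible polynomial defining $\F_{p^k}$ together with coordinates of the purported zero. Soundness: if the system has no zero over $\C$, clearing denominators in the Nullstellensatz certificate yields a relation $c = \sum_i g_i f_i$ in $\Z[x_1,\dots,x_n]$ with $c \ne 0$ of at most singly exponential bit length, so for $N$ chosen large enough every sampled $p$ is coprime to $c$, forcing the reduction modulo $p$ to have no zero in $\overline{\F_p}$. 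Completeness: if a complex zero exists, an effective Bertini argument reduces to a zero-dimensional system whose solution field $L$ is an algebraic number field of polynomially bounded degree; effective Chebotarev under GRH (Lagarias-Odlyzko) then yields a density at least $1/\poly$ of primes $p$ in $[2^N, 2^{N+1}]$ whose Frobenius in $\mathrm{Gal}(L/\Q)$ lies in a conjugacy class fixing some root, ensuring a zero in $\F_{p^k}$ for $k$ polynomially bounded. This inverse-polynomial success probability is amplified to constant by a polynomial number of independent samples, placing HN in $\mathrm{RP}^{\mathrm{NP}}$.

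The main obstacle is the quantitative density statement underpinning completeness. Ineffective Chebotarev only provides asymptotic density, whereas the algorithm needs a polynomially bounded smallest good prime; this is precisely the role of GRH, entering through the Lagarias-Odlyzko bound on the least prime with a prescribed Frobenius class in terms of the discriminant and degree of $L$. These Galois-theoretic parameters of $L$ must in turn be bounded polynomially in the circuit description, which requires controlling the degree and the field of definition of the zero-dimensional variety produced by the Bertini reduction; standard B\'ezout-type degree bounds suffice, but one must verify that Bertini's theorem can be applied effectively so that a generic linear section over $\Q$ preserves the existence of a rational point without blowing up $k$ beyond polynomial. A secondary issue is ensuring the NP verification runs in polynomial time despite the circuit representation of the $f_i$, which is routine since circuit evaluation over $\F_{p^k}$ is in P once $k$ and $\log p$ are polynomial.
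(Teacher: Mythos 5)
The paper does not prove this statement at all---it is imported verbatim from Koiran \cite{Koi96} (in the formulation of \cite{Mul17})---so your attempt has to be measured against Koiran's own proof, and there it has a genuine gap in the completeness half of the GRH argument. The field of definition $L$ of a point of the zero-dimensional system you produce is \emph{not} of polynomially bounded degree: B\'ezout only gives $[L:\Q]\le d^{\,n}$, and with circuit inputs $d$ itself can be exponential; this is unavoidable, e.g.\ a small circuit system forcing a variable to be a primitive $2^n$-th root of unity has a solution in $\F_{p^k}$ with $k\le\poly$ only for an exponentially small fraction of primes $p$ (those whose order modulo $2^n$ is small), and no generic Bertini section changes this. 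Consequently the density of ``good'' primes in your window can be exponentially small, there is no inverse-polynomial success probability to amplify, and the proposed ``sample a random prime, query the NP oracle, repeat polynomially often'' algorithm simply does not decide the solvable case. Koiran's proof is organized around exactly this obstacle: he shows that if the system is solvable over $\C$ then the number of primes $p\le x$ modulo which it is solvable exceeds a threshold $T$ (effective Chebotarev under GRH, applied to a splitting field of exponential degree, so $T$ is an exponentially small \emph{fraction} but still exponentially many primes), while unsolvability over $\C$ forces this count well below $T$ via the effective arithmetic Nullstellensatz; the two cases are then separated not by naive sampling but by approximate set-size estimation with an NP oracle (Goldwasser--Sipser-type random hashing), and it is this protocol that gives the $\mathrm{AM}$/$\mathrm{RP}^{\mathrm{NP}}\subseteq\Pi_2$ upper bound.

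Two secondary points. For soundness, the certificate constant $c$ has exponential bit length and hence may have exponentially many prime divisors, so you cannot claim every sampled $p$ is coprime to $c$, only that most primes in a large enough window are; this matters for which side of the one-sided error you can actually guarantee. For the unconditional bound, with circuit inputs the number of unknown coefficients is $2^{\poly}$, not doubly exponential (if it really were doubly exponential, $O(\log^2)$-space linear algebra would use exponential space), and the matrix entries have exponential bit length, so they are not ``computable in polynomial space'' as written---one needs implicit bit-by-bit access together with Borodin-style space-efficient linear algebra, or, far more simply, the standard reduction that introduces one variable per gate to obtain a polynomial-size quadratic system and then invokes the known PSPACE decision procedures for the existential theory of $\C$, which is how this bound is usually obtained.
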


Therefore, to put the maximum isotropic space problem over $\C$ in PSPACE
unconditionally, and in PH assuming the generalized Riemann hypothesis, for 
instances
given by integral alternating matrices, we only need to formulate this problem as
deciding the solvability of a system of integral polynomials represented by
arithmetic circuits. This can be done as follows. Suppose we are given
$\cA=\langle A_1, \dots, A_m\rangle\leq \Lambda(n, \C)$ where $A_i$'s are integral
matrices, and we want to know whether there exists an isotropic space of dimension
$d$ for $\cA$. Then $\cA$ has an isotropic space of dimension $d$ if and only if
there exists an invertible matrix $T$ such that for any $i\in[m]$, the left-upper
$d\times d$ submatrix of $T^tA_iT$ consists of all zero entries. We then set up an
$n\times n$ variable matrix $X=(x_{i,j})_{i,j\in[n]}$, and a variable $y$. For
every $i\in[m]$, set the entries of the left-upper $d\times d$ submatrix of
$X^tA_iX$ to be zero. This gives $md^2$ integral quadratic polynomials in
$x_{i,j}$'s.
To enforce that the valid solutions are from invertible matrices, we set up the
equation $\det(X)\cdot y=1$, which is also an integral polynomial. Note that the
polynomial $\det(X)$ can be expressed
as a small arithmetic circuit. It is straightforward to verify that these 
$(md^2+1)$
equations in $x_{i,j}$ and $y$ have a non-trivial solution if and only if $\cA$
has a dimension-$d$ isotropic space.

For isotropic $3$-decomposition problem, the idea is basically the same. The only 
small complication is that we need to specify the dimensions of the three 
isotropic spaces in a $3$-isotropic decomposition. But the number of possibilities 
is at most $n^3$, which we can enumerate. After fixing some $(d_1, d_2, d_3)$, 
where $d_i\in\Z^+$, $n\geq d_1\geq d_2\geq d_3\geq 1$, and $d_1+d_2+d_3=n$, we can 
construct a system of integral polynomial equations to express the condition that 
there exists a $3$-isotropic decomposition with these dimensions, just as in the 
case of the maximum isotropic space problem. This concludes the proof.
\end{proof}
%We then turn to the isotropic $c$-decomposition problem. The idea is basically the
%same. The only small complication is that we need to ``guess'' the dimensions of
%the
%subspaces in the direct sum decomposition, that is a sequence of integers $n\geq
%d_1\geq d_2\geq \dots\geq d_c\geq 1$ such that $\sum_{i=1}^cd_i=n$. After fixing
%such a sequence of integers, we can form a system of integral polynomials
%similarly as the maximum isotropic space problem case, and then use
%Theorem~\ref{thm:koiran} to put the solvability of this polynomial system in PH.
%It is easy to see that the whole
%procedure gives an algorithm in PH.

%\section{Proof of Proposition~\ref{prop:bounded_deg}}\label{sec:bounded_deg}
\begin{proof}[Proof of Proposition~\ref{prop:bounded_deg}]
Recall that we have $\cA\leq \Lambda(n, \F)$, and our goal is to prove
$\chi(\cA)\leq O(\Delta(\cA)\cdot \log n)$. Here, $\Delta(\cA):=\max\{\deg_\cA(v)
: v\in \F^n\}$, and $\deg_\cA(v):=\dim(\langle Av : A\in \cA\rangle)$. We will
also use a greedy algorithm to construct an isotropic $C$-decomposition with
$C\leq O(\Delta(\cA)\cdot \log n)$.

For $v\in \F^n$, recall that $\rad_\cA(v)=\{ u\in \F^n : u^tAv=0\}$. Consider the
following algorithm.
\begin{enumerate}
\item Set $k=0$, and $U=\zerovec$;
\item While $\dim(U)<n$, do:
\begin{enumerate}
\item Set $k=k+1$;
\item Let $W$ be any complementary subspace of $U$;
\item Let $S=\emptyset$;
\item While $\dim(W)>|S|$, do:
\begin{enumerate}
\item Take any $w\in W\setminus \langle S\rangle$; // $\langle
\emptyset\rangle:=\zerovec$
\item $S=S\cup w$;
\item $W=W\cap \rad_\cA(w)$;
\end{enumerate}
\item Let $U_k=\langle S\rangle$;
\item $U=\langle U\cup U_k\rangle$;
\end{enumerate}
\item Return $U_1\oplus U_2\oplus \dots\oplus U_k$.
\end{enumerate}

We first argue that $U_1\oplus U_2\oplus \dots\oplus U_k$ is an isotropic
$k$-decomposition of $\cA$. To see this, we note that because of Step (2.d.iii),
the condition $W\subseteq \rad(\langle S\rangle)$ holds in the loop of Step (d),
so
$\langle S\rangle$ maintains as an isotropic space by
Observation~\ref{obs:isotropic}.

We then show that $k=O(\Delta(\cA)\cdot \log n)$ when the algorithm terminates.
For this, let $d_i=\dim(U_i)$, and
$D_i=d_1+\dots+d_i$. Observe that $\dim(\rad_\cA(w))\geq n-\Delta(\cA)$. It
follows that
$\dim(W\cap \rad_\cA(w))=\dim(W)+\dim(\rad_\cA(w))-\dim(\langle W\cup
\rad_\cA(w)\rangle)\geq \dim(W)+(n-\Delta(\cA))-\dim(\langle W\cup
\rad_\cA(w)\rangle)\geq \dim(W)-\Delta(\cA)$. Therefore, in the computing
procedure of $U_i$, we have $d_i=\dim(U_i)\geq \frac{n-D_{i-1}}{\Delta(\cA)}$.
This implies that $n-D_i=n-D_{i-1}-d_i\leq (n-D_{i-1})(1-1/\Delta(\cA))$.
Therefore, adding a new $U_i$ to the direct sum decomposition decreases the value
of $n-D_i$ by a factor of at least $1-1/\Delta(\cA)$. Therefore the algorithm
terminates in at most $\log_{1-1/\Delta(\cA)}\frac{1}{n}=O(\Delta(\cA)\cdot \log
n)$ steps.
\end{proof}

\section{Proofs of theorems~\ref{thm:large_abelian} and~\ref{thm:maximal_abel}
}\label{sec:application}

In this section we prove theorems~\ref{thm:large_abelian} 
and~\ref{thm:maximal_abel}. While the proofs are straightforward for experts, we 
include details 
for completeness. We shall refer to some facts in Section~\ref{app:origin} from 
time to time. 

\begin{proof}[Proof of Theorem~\ref{thm:large_abelian}]
Recall that the goal is to show that deciding whether a matrix group contains an
abelian subgroup of order $\geq s$ is NP-hard for some $s\in\N$. We shall reduce 
the 
maximum
isotropic
space problem, which is NP-hard by Corollary~\ref{cor:reduction}, to this problem.

For this we shall need the following classical construction\footnote{We thank
James B. Wilson for communicated this construction to us.}.
Let $p$ be an odd prime, and let $\cA\leq \Lambda(n, p)$ be given by an ordered 
linear
basis
$\bA=(A_1, \dots, A_m)$. Recall that $e_i$ denotes the $i$th standard 
basis vector. From $\bA$, for $i\in[n]$, construct $B_i=[A_1e_i, \dots,
A_me_i]\in\M(n\times m, p)$. That is, the $j$th column of $B_i$ is the
$i$th
column of $A_j$. Then for $i\in[n]$, construct
$$\tilde B_i
=
\begin{bmatrix}
1 & e_i^t & 0 \\
0 & I_n & B_i \\
0 & 0 &  I_m
\end{bmatrix}\in \GL(1+n+m, p),
$$
and for $j\in[m]$, construct
$$
\tilde C_j
=
\begin{bmatrix}
1 & 0 & e_j^t \\
0 & I_n & 0 \\
0 & 0 &  I_m
\end{bmatrix}\in \GL(1+n+m, p).
$$

Let $G_\bA$ be the matrix group generated by $\tilde B_i$ and $\tilde C_j$. Then
it can be verified easily that,
$G_\bA$ is isomorphic to the Baer group (see Section~\ref{app:origin})
corresponding to the alternating bilinear map defined by $\bA$ (see
Appendix~\ref{app:spaces_maps}). In particular, $[G, G]\cong \Z_p^m$, and $G/[G, 
G]\cong \Z_p^n$. 
%We also note that an abelian subgroup of
%$G_\bA$ of the largest size necessarily contains the center, which in turn 
%contains the commutator subgroup by the class $2$ condition, so it is also 
%normal. 
By the correspondence between isotropic
spaces
of $\cA$ and abelian normal subgroups of $G_\bA$ containing the commutator
subgroup (see Section~\ref{app:origin}),
deciding whether $\cA$ has an isotropic
space of dimension $\geq d$ is equivalent to deciding whether $G_\bA$ has an
abelian subgroup of order $\geq s=p^{m+d}$. This completes the reduction.
\end{proof}

\begin{proof}[Proof of Theorem~\ref{thm:maximal_abel}]
Let $P$ be a $p$-group of class $2$ and exponent $p$, and let $\phi:P/[P,P]\times 
P/[P,P]\to [P,P]$ be the commutator map. 
The proof of Theorem~\ref{thm:maximal_abel} basically follows from the 
correspondence 
between abelian subgroup containing $[P,P]$ and isotropic spaces of $\phi$ as 
described in Section~\ref{app:origin}. The 
only small caveat here is that we need a bound on the dimension of $P/Z(P)$ 
instead of the dimension of $P/[P,P]$. To overcome this, we first observe that a 
maximal abelian subgroup of 
$P$ necessarily contains the center $Z(P)$, which in turn contains $[P,P]$ by the 
class-$2$ condition. Then we only need to note that 
$Z(P)/[P,P]$ corresponds to the radical of $\phi$, and recall that the number of 
maximal isotropic spaces only depends on the non-degenerate part of $\phi$ by 
Observation~\ref{obs:mis_nondeg}. The proof then can be concluded.
\end{proof}

\section{A quantum variant of the theory}\label{sec:quantum}

%\section{Quantum isotropic decomposition}
One way to extend isotropic spaces and isotropic decompositions to the 
quantum information setting is as follows. Briefly speaking, firstly we restrict 
to the complex number field $\C$. Secondly, 
instead of tuples of alternating matrices, we will consider tuples 
of matrices which represent a so-called irreducible quantum channels. Thirdly, 
instead of general linear groups, we will consider unitary groups. 

For detailed explanations, we need some notation. For $A\in \M(n, \C)$ we use 
$A\succeq 0$ to denote that $A$ is positive semi-definite, and $A\succ 0$ to 
denote that $A$ is positive definite. For $\sB=\{B_1, \dots, B_m\}\subseteq \M(n, 
\C)$, we 
let $\tilde{\sB}:\M(n, \C)\to \M(n, \C)$ be the function sending $A\in \M(n, \C)$ 
to $\sum_{i=1}^mB_iAB_i^\dagger$. It is clear that $\tilde{\sB}$ can be 
represented as an $n^2\times n^2$ matrix $\sum_{i=1}^mB_i\otimes B_i^*$, where 
$B_i^*$ stands for the entry-wise complex conjugation of $B_i$. 
%, and $\overline{\sB}:\M(n, \C)\to \M(n, \C)$ be 
%the function sending $A\in \M(n, \C)$ to $\sum_{i=1}^mB_i^\dagger AB_i$. 

Let 
$\D(n, \C)\subseteq \M(n, \C)$ be the set of 
$n\times n$ semi-positive definite matrices with unit trace over $\C$, and let 
$\D^+(n, \C)\subseteq \D(n, \C)$ consist of those positive definite matrices in 
$\D(n, \C)$. Elements from $\D(n, \C)$ are known as quantum states. 

Let $\QC(n, 
\C)$ be the set of sets of matrices $\sB=\{B_1, \dots, B_m\}\subseteq  
\M(n, 
\C)$ satisfying $\sum_{i=1}^mB_i^\dagger B_i=I$. Functions of the form 
$\tilde{\sB}$ 
for $\sB\in \QC(n, \C)$ are known as quantum channels, as they are completely 
positive and trace preserving. 

We then define isotropic spaces and decompositions in the quantum setting. To 
define 
isotropic spaces, we essentially follow the same pattern as in the alternating 
matrix 
space 
setting. For isotropic decompositions, we shall require that the direct sum 
decomposition is also an orthogonal one, as the underlying spaces of quantum 
channels are Hilbert 
spaces which come with a norm. 
\begin{definition}
Let $\sB=\{B_1, \dots, B_m\}\in\QC(n, 
\C)$. An isotropic space of $\sB$ is a subspace $U\leq \C^n$, such that for any 
$u, u'\in U$, and any $B_i$, we have $u^\dagger B_iu'=0$. An isotropic 
$c$-decomposition 
of $\sB$ is an orthogonal direct sum decomposition of $\C^n=U_1\oplus U_2\oplus 
\dots\oplus U_c$ such that each $U_i$ is a non-zero isotropic space of $\sB$. 
\end{definition}

%Let $\FQC(n, \C)\subseteq \QC(n, \C)$ consist of those $\sB\in \QC(n, 
%\C)$ satisfying the following: there exists $\rho\in \D^+(n, \C)$ such that 
%$\tilde{\sB}(\rho)=\rho$, e.g. $\rho$ is fixed by $\tilde{\sB}$. Such $\sB$ is 
%calle \emph{faithful}. This is because, functions of the 
%form $\tilde{\sB}$ for $\sB\in 
%\FQC(n, \C)$ are known as faithful quantum channels. Faithful quantum channels 
%generalize unital channels, e.g. those channels sending the identity matrix $I$ 
%to 
%itself. For more 
%details of quantum 
%faithful channels, we refer to 
%\cite{burgarth2013ergodic,albert2018asymptotics}. Elements of $\FQC(n, \C)$ are 
%then the Kraus operators of faithful quantum channels. 

\subsection{From connected graphs to irreducible quantum channels}

In this subsection, we establish a connection between independent sets and vertex 
colorings of connected graphs, and isotropic spaces and decompositions of a 
particular type of quantum channels, called irreducible channels (defined below). 

We obtain two main results. The first 
result, Proposition~\ref{prop:reduction}, reduces certain problems for connected 
graphs to the corresponding ones 
for irreducible quantum channels. This result corresponds to 
Theorem~\ref{thm:reduction}. 
The second result, Theorem~\ref{thm:q_2_decomp}, gives an efficient algorithm for 
isotropic $2$-decomposition in this setting. This result corresponds to 
Theorem~\ref{thm:2-decomp}, but the techniques are completely different. 

Let $\IQC(n, \C)\subseteq \QC(n, \C)$ consist of those $\sB\in \QC(n, \C)$ 
satisfying the following: there exists a unique fixed $\rho\in \D(n, \C)$ of $\tilde{\sB}$, and further $\rho\in \D^+(n, \C)$, where $\rho$ is said to be fixed of $\tilde{\sB}$ if 
$\tilde{\sB}(\rho)=\rho$. Such $\sB$ and 
$\tilde{\sB}$ are called \emph{irreducible}. Irreducible quantum channels have 
been studied in e.g. \cite{davies1970quantum} and \cite[Sec. 
6.2]{wolf2012quantum}. In 
particular, the definition of irreducible quantum channels follows from 
\cite[Theorem 13]{davies1970quantum}. Furthermore, given $\sB\in \QC(n, \C)$, let 
$M$ be the $n^2\times n^2$ matrix representation of $\tilde\sB$. Then $\sB\in 
\IQC(n, \C)$ if 
and only if the algebraic and geometric multiplicities of the eigenvalue $1$ of 
$M$ are both $1$, and any $1$-eigenvector is of full-rank. 

We first observe that a simple and connected graph can be realized as an 
irreducible 
quantum 
channel as follows. This is classical, but for completeness we spell out the 
details. Let $G=([n], E)$ be a connected graph. For each $i\in [n]$, 
let $d_i$ be the degree of $i$. We construct the following set of matrices 
$\sB_G=\{\frac{1}{\sqrt{d_j}}\cdot E_{i,j}, \frac{1}{\sqrt{d_i}}\cdot 
E_{j,i} : \{i, j\}\in E\}$. Note that $|\sB_G|=2|E|$. 
\begin{proposition}
Let $G$ and $\sB_G$ be as above. Then $\sB_G\in 
\IQC(n, \C)$.
\end{proposition}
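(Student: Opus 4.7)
The plan is to verify the three defining conditions for $\sB_G \in \IQC(n, \C)$: (i) the trace-preservation identity $\sum_{B \in \sB_G} B^\dagger B = I_n$, which places $\sB_G$ in $\QC(n, \C)$; (ii) uniqueness of the fixed point of $\tilde{\sB_G}$ in $\D(n, \C)$; and (iii) positive-definiteness of that fixed point. I assume $n \geq 2$, so that $G$ being connected forces every vertex degree $d_k \geq 1$.

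For (i), I would proceed by direct computation. Using $E_{i,j}^\dagger = E_{j,i}$, $E_{j,i} E_{i,j} = E_{j,j}$, and $E_{i,j} E_{j,i} = E_{i,i}$, each edge $\{i,j\} \in E$ contributes $\frac{1}{d_j} E_{j,j} + \frac{1}{d_i} E_{i,i}$ to the sum. For each vertex $k$, the matrix $E_{k,k}$ then appears with total coefficient $\sum_{j\,:\,\{j,k\} \in E} \frac{1}{d_k} = d_k \cdot \frac{1}{d_k} = 1$, so $\sum_{B \in \sB_G} B^\dagger B = \sum_k E_{k,k} = I_n$.

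For (ii) and (iii), the key observation is that the image of $\tilde{\sB_G}$ consists of diagonal matrices, since $E_{i,j}\, \rho\, E_{j,i} = \rho_{jj}\, E_{i,i}$. Any fixed state of $\tilde{\sB_G}$ is thus diagonal. Writing $\rho = \mathrm{diag}(\rho_1, \ldots, \rho_n)$, the equation $\tilde{\sB_G}(\rho) = \rho$ becomes $\rho_k = \sum_{j\,:\,\{j,k\} \in E} \rho_j / d_j$ for each $k \in [n]$. This is precisely the stationary distribution equation for the simple random walk on $G$ (viewed as a left eigenvector equation for the transition matrix $Q$ defined by $Q_{jk} = 1/d_j$ for $\{j,k\} \in E$). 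The candidate $\rho_k = d_k/(2|E|)$ is immediate to verify and has all entries strictly positive since $d_k \geq 1$, so it belongs to $\D^+(n, \C)$ and has unit trace.

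The main technical point will be uniqueness. I will invoke standard Markov chain theory: since $G$ is connected, the simple random walk on $G$ is irreducible, so by Perron--Frobenius the transition matrix $Q$ has a one-dimensional $1$-eigenspace, forcing the trace-one diagonal fixed state of $\tilde{\sB_G}$ to be unique. Combined with the diagonal-image observation above, this yields uniqueness of the fixed point within all of $\D(n, \C)$, not merely within the diagonal states. Together with (i) and the positivity noted above, this completes the verification that $\sB_G \in \IQC(n, \C)$.
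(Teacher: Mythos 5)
Your proof is correct and takes essentially the same route as the paper: both establish trace preservation by the same direct computation, both reduce the fixed-point structure to the classical random walk on $G$ (you phrase this as ``the image of $\tilde{\sB}_G$ consists of diagonal matrices,'' the paper as a block structure $\begin{bmatrix}P & \zerovec \\ \zerovec & \zerovec\end{bmatrix}$ of the $n^2\times n^2$ representation of $\tilde{\sB}_G$ after conjugation by a permutation), and both then invoke irreducibility of that walk and Perron--Frobenius to get uniqueness and positive-definiteness of the stationary state. The only cosmetic difference is that you write the stationary state $\rho_k = d_k/(2|E|)$ explicitly, while the paper simply cites the existence of a unique stationary distribution.
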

\begin{proof}
We first verify that $\tilde\sB_G$ 
is a
quantum channel. For this, observe that $(\frac{1}{\sqrt{d_j}}\cdot 
E_{i,j})^\dagger \frac{1}{\sqrt{d_j}}\cdot E_{i,j}=\frac{1}{d_j}E_{j,i}E_{i,j}=\frac{1}{d_j}E_{j,j}$. Since
each vertex $i$ connecting to $j$ contributes one such term, and it follows that
$\sum_{E\in \sB_G}E^\dagger E=I$. We then verify that $\tilde\sB_G$ is 
irreducible. 
For this, consider $P=(p_{i,j})$ where $p_{i,j}=1/d_i$, which represents the 
transition matrix of the Markov chain naturally associated with $G$. Since $G$ is 
connected, this Markov chain is irreducible, so there exists a unique probability 
distribution, e.g. a row vector $s=(s_1, \dots, s_n)$ satisfying $s_i>0$, 
$\sum_is_i=1$, such that $sP=s$ (see e.g. \cite[Corollary 1.17]{levin2017markov}). 
It can then be verified that the matrix 
$S=\mathrm{diag}(s_1, \dots, s_n)\in \D^+(n, \C)$ is fixed by $\tilde\sB_G$. 
To see that this is the unique fixed state, we represent $\tilde\sB_G$ as an 
$n^2\times n^2$ matrix $M_G$. It is not hard to see that by conjugating with a 
permutation matrix, $M_G$ is of the form $\begin{bmatrix}P & \zerovec \\ \zerovec 
& \zerovec\end{bmatrix}$. Therefore, the algebraic and geometric 
multiplicities of the eigenvalue $1$ of $M_G$ are the same as those for $P$, which 
are $1$ by the Perron-Frobenius theory. It follows that $\sB$ is 
irreducible. 
\end{proof}

\begin{proposition}\label{prop:reduction}
Let $G=([n], E)$ be a connected graph, and let $\sB_G\in \IQC(n, \C)$ be as above. 
\begin{enumerate}
\item $G$ has a size-$s$ independent set if and only if $\sB_G$ has a 
dimension-$s$ isotropic space;
\item $G$ has a vertex $c$-coloring if and only if $\sB_G$ has an isotropic 
$c$-decomposition.
\end{enumerate}
\end{proposition}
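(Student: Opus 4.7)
The plan is to adapt the proof of Theorem~\ref{thm:reduction} to the quantum setting, where the role of $A_{k,\ell} = e_k e_\ell^t - e_\ell e_k^t$ is played by the pair $E_{k,\ell}, E_{\ell,k}$ (up to the harmless positive scalars $1/\sqrt{d_j}, 1/\sqrt{d_i}$), and the bilinear expression $u^t A_{k,\ell} u'$ is replaced by the sesquilinear expressions $u^\dagger E_{k,\ell} u' = \overline{u_k}\, u'_\ell$ and $u^\dagger E_{\ell,k} u' = \overline{u_\ell}\, u'_k$. The key point is that the isotropic condition for $\sB_G$ on a subspace $U \leq \C^n$ is equivalent to requiring $\overline{u_k}\, u'_\ell = 0$ for all $u, u' \in U$ and every $\{k, \ell\} \in E$.

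For part (1), the easy direction: given a size-$s$ independent set $\{i_1, \dots, i_s\}$ in $G$, take $U := \langle e_{i_1}, \dots, e_{i_s}\rangle$; for every $u, u' \in U$ and every edge $\{k,\ell\}$, at least one of $k, \ell$ lies outside the independent set, so at least one of $u_k, u'_\ell$ vanishes. For the converse, let $U$ be a dimension-$s$ isotropic space, represented by an $n \times s$ matrix whose rows are $r_1^t, \dots, r_n^t \in \C^s$. Letting $u, u'$ range over the column basis of $U$, the isotropic condition yields $\overline{r_k}\, r_\ell^t = \zerovec$ as an $s \times s$ matrix for every $\{k, \ell\} \in E$, so at least one of $r_k, r_\ell$ is the zero row. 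Set $T := \{i : r_i \neq \zerovec\}$; since $\rk(U) = s$, at least $s$ of the rows are linearly independent and therefore nonzero, so $|T| \geq s$ and $T$ is an independent set of $G$.

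For part (2), the forward direction: from a $c$-coloring $[n] = T_1 \uplus \dots \uplus T_c$, define $U_i := \langle e_j : j \in T_i\rangle$. These subspaces are pairwise orthogonal, $\C^n = U_1 \oplus \dots \oplus U_c$ is then an orthogonal decomposition, and each $U_i$ is isotropic by part (1). For the converse, given an isotropic $c$-decomposition $\C^n = U_1 \oplus \dots \oplus U_c$, realize each $U_i$ as an $n \times \dim(U_i)$ matrix of basis vectors and form the invertible matrix $P = [U_1, \dots, U_c]$. Apply Lemma~\ref{lem:laplace} to obtain a partition $[n] = T_1 \uplus \dots \uplus T_c$ with $|T_i| = \dim(U_i)$ such that the rows of $U_i$ indexed by $T_i$ are linearly independent, hence nonzero. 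By the row-vanishing property established in part (1), each $T_i$ must be an independent set of $G$, yielding the desired $c$-coloring.

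There is no substantial obstacle: the argument is essentially a translation of Theorem~\ref{thm:reduction} with transpose replaced by conjugate transpose, and the quantum-channel normalization factors $1/\sqrt{d_i}$ are irrelevant because we only track whether the relevant expressions vanish. The orthogonality requirement built into the definition of an isotropic $c$-decomposition is automatically satisfied in the forward direction of (2) (standard basis vectors are orthonormal), and plays no role in the converse, which uses only the direct-sum structure via Lemma~\ref{lem:laplace}.
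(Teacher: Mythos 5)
Your proof is correct, but it takes a genuinely different (and longer) route from the paper. The paper's proof is a three-line reduction: for part (1) it observes that the alternating matrix $E_{i,j}-E_{j,i}$ is a linear combination of $\frac{1}{\sqrt{d_j}}E_{i,j}$ and $\frac{1}{\sqrt{d_i}}E_{j,i}$, so $\cA_G \leq \langle\sB_G\rangle$ and any $\sB_G$-isotropic subspace is $\cA_G$-isotropic; for part (2) it notes an orthogonal direct sum is in particular a direct sum, and the coloring-to-decomposition map already produces orthogonal summands. Both parts then follow by quoting Theorem~\ref{thm:reduction} as a black box. You instead rerun the entire proof of Theorem~\ref{thm:reduction} from scratch with $\dagger$ in place of $t$, including re-invoking Lemma~\ref{lem:laplace}. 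What your version buys is transparency on a point the paper's justification glosses over: $\sB_G$-isotropy is $u^\dagger B u' = 0$ while $\cA_G$-isotropy is $u^t A u' = 0$, and ``$\cA_G$ is a subspace of $\langle\sB_G\rangle$'' does not by itself bridge the conjugate. Your explicit computation $\overline{u_k}\,u'_\ell = 0 \iff u_k u'_\ell = 0$ (entrywise vanishing, no conjugation needed) makes the bridge concrete, and also shows the sesquilinear condition forces one of the two rows to be zero outright, which is stronger than the linear-dependence conclusion in the purely alternating setting. So your argument is self-contained and pedagogically cleaner; the paper's is shorter and better factored through the existing theorem.
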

\begin{proof}
(1) The only if direction is trivial. For the if direction, let $U$ be a 
dimension-$s$
isotropic space of $\sB_G$. Then $U$ is also a dimension-$s$ isotropic space of 
the 
alternating matrix space $\langle 
E_{i,j}-E_{j,i} : \{i,j\}\in E\rangle$, because it is a subspace of $\langle 
\sB_G\rangle$. We can then conclude by resorting to  
Theorem~\ref{thm:reduction}. 

(2) The only if direction is trivial; observe that the direct sum decomposition 
obtained from a vertex coloring as in Theorem~\ref{thm:reduction} is also an 
orthogonal direct sum decomposition. For the if direction, we observe that, an 
orthogonal direct sum decomposition into isotropic spaces for $\sB_G$ is also one 
for the alternating matrix space $\langle 
E_{i,j}-E_{j,i} : \{i,j\}\in E\rangle$. We can then conclude by resorting to  
Theorem~\ref{thm:reduction}. 
\end{proof}

Since the maximum independent set problem and the vertex $3$-coloring problem on 
connected graphs are also NP-hard, we have the following. 
\begin{corollary}
The maximum isotropic space problem and the isotropic $3$-decomposition problem 
for $\sB\in \IQC(n, \C)$
are NP-hard. 
\end{corollary}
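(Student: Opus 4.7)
The plan is to deduce this corollary directly from Proposition~\ref{prop:reduction} together with the fact that the source graph-theoretic problems remain NP-hard when restricted to connected graphs. So the first thing I would do is justify the connectedness reductions: maximum independent set is NP-hard on connected graphs (e.g., start from an arbitrary instance $G$, add a single fresh vertex $v^\star$ joined to every vertex of $G$; any independent set avoids $v^\star$ and the original optimum is preserved), and $3$-coloring is NP-hard on connected graphs for essentially the same reason or by a standard reduction from $3$-SAT whose output is already connected. These are both standard and I would state them as known.

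Second, given a connected graph $G = ([n], E)$, the construction of $\sB_G \in \IQC(n, \C)$ from the preceding paragraph is clearly polynomial-time: one writes down the $2|E|$ scaled elementary matrices $\tfrac{1}{\sqrt{d_j}} E_{i,j}$ and $\tfrac{1}{\sqrt{d_i}} E_{j,i}$ for $\{i,j\}\in E$. The fact that this set lies in $\IQC(n,\C)$ has already been established above.

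Third, I would invoke Proposition~\ref{prop:reduction}: part~(1) gives $\alpha(G) = \alpha(\sB_G)$ in the appropriate sense, so deciding whether $\sB_G$ has an isotropic space of dimension $\geq s$ is equivalent to deciding whether $G$ has an independent set of size $\geq s$; part~(2) gives that $\sB_G$ admits an isotropic $3$-decomposition if and only if $G$ is $3$-colorable. Combined with the connected-graph NP-hardness results of the first step, this yields NP-hardness of both the maximum isotropic space problem and the isotropic $3$-decomposition problem over $\IQC(n, \C)$.

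There is essentially no hard step here; the only thing worth double-checking is that the reduction in Proposition~\ref{prop:reduction}~(2) really does preserve the chromatic threshold $c=3$ (the ``if'' direction uses orthogonal direct sums, but any vertex $3$-coloring yields a coordinate-aligned, hence orthogonal, decomposition in the standard basis, so this is immediate), and that the constructed $\sB_G$ is output in polynomial bit-complexity under the computational model of Section~\ref{subsec:algo_model}. Both are routine.
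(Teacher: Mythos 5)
Your proposal is correct and coincides with the paper's own argument, which simply notes that the maximum independent set and vertex $3$-coloring problems remain NP-hard on connected graphs and then invokes Proposition~\ref{prop:reduction}. You spell out a few details the paper leaves implicit (the fresh-vertex trick for connectedness and the bit-complexity of writing down $\sB_G$), but the reduction and the key cited result are the same.
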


This also leaves the isotropic $2$-decomposition problem an interesting question. 
%It turns out that we cannot use 
%the $*$-algebra based approach in Section~\ref{sec:star}, at least not in a 
%straightforward way, due to the additional orthogonal requirement. 
For this, we can resort to the techniques developed for quantum Markov chains, 
mostly notably, based on recent works of periodicity 
of quantum channels~\cite{guan2018decomposition}. 
%For this, we first recall some 
%notions originated from the study of quantum Markov chains as in 
%\cite{ying2013reachability,guan2018decomposition}.  

%Let $\sB\in \FQC(n, \C)$. Recall that this means that $\tilde\sB$ admits some 
%$\rho\in\D^+(n, \C)$ such that $\tilde\sB(\rho)=\rho$. If such $\rho$ is unique, 
%then we say that $\sB$ is \emph{irreducible}. 
%Let $\IQC(n, \C)\subseteq \FQC(n, 
%\C)$ consist of such irreducible $\sB$'s. 
%By the structure theory of quantum 
%Markov chains \cite{ying2013reachability}, for any $\sB\in \FQC(n, \C)$, there 
%exists an orthogonal direct sum 
%decomposition of $\C^n=U_1\oplus \dots\oplus U_b$, such that for any $u\in U_i$, 
%$u'\in U_j$, $i\neq j$, and any $B\in \sB$, we have $u^tBv'=0$. In fact, this 
%property characterizes faithful quantum channels. 

%In the following, we first treat the case when $\sB$ is irreducible. We then 
%reduce the faithful 
%case to the irreducible case. 

\begin{theorem}\label{thm:q_2_decomp}
Suppose we are given $\sB\in \IQC(n, \C)$ such that every matrix in $\sB$ are 
over $\Q$. There exists an algorithm that decides whether $\sB$ 
admits an isotropic $2$-decomposition in polynomial time. 
\end{theorem}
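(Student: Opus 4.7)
The plan is to show that an isotropic $2$-decomposition of $\sB$ corresponds exactly to the irreducible quantum channel $\tilde\sB$ having period divisible by two, a condition detectable by checking whether $-1$ lies in the spectrum of the natural matrix representation $M := \sum_i B_i \otimes B_i^* \in \M(n^2, \C)$ of $\tilde\sB$. The algorithm and its correctness then follow from the Perron--Frobenius theory for irreducible quantum channels developed in~\cite{guan2018decomposition}.

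The first step is to establish the structural equivalence. I would note that the isotropy of a subspace $U$ for $\sB$ is equivalent to $P_U B P_U = 0$ for every $B \in \sB$, where $P_U$ denotes the orthogonal projection onto $U$; in particular, it depends only on the linear span $\langle \sB \rangle$. For the forward direction, given an orthogonal isotropic $2$-decomposition $\C^n = U_1 \oplus U_2$, each $B \in \sB$ is block off-diagonal with respect to $(P_1, P_2)$, so $\tilde\sB$ sends states supported on $U_i$ to states supported on $U_{3-i}$. Using irreducibility (uniqueness of the fixed state), I would deduce that the fixed state $\rho_0$ is itself block diagonal, namely $\rho_0 = \rho_{0,1} + \rho_{0,2}$ with $\rho_{0,i} := P_i \rho_0 P_i$; the key observation is that the diagonal part $P_1 \rho_0 P_1 + P_2 \rho_0 P_2$ is again a unit-trace fixed state, forcing the off-diagonal part to vanish. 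The Hermitian operator $\hat\rho_{-1} := \rho_{0,1} - \rho_{0,2}$ then satisfies $\tilde\sB(\hat\rho_{-1}) = -\hat\rho_{-1}$, so $-1 \in \sigma(M)$.

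For the converse, assuming $-1 \in \sigma(M)$, I would invoke the Perron--Frobenius theory of irreducible quantum channels (see~\cite[Sec.~6.2]{wolf2012quantum} and~\cite{guan2018decomposition}): the peripheral spectrum of $\tilde\sB$ forms the group of $d$-th roots of unity for some period $d$, each peripheral eigenvalue is algebraically simple, and the associated eigenvectors induce a cyclic orthogonal decomposition $\C^n = V_0 \oplus \cdots \oplus V_{d-1}$ along which every Kraus operator $B \in \sB$ satisfies $B(V_j) \subseteq V_{j+1 \bmod d}$. Since $-1 \in \sigma(M)$ forces $2 \mid d$, grouping the $V_j$ by parity of $j$ yields the desired orthogonal isotropic $2$-decomposition. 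Concretely, I would pick a Hermitian eigenvector $\hat\rho_{-1} \in \ker(M + I)$, rescale so that both $\rho_0 \pm \hat\rho_{-1}$ are positive semidefinite (the unique positive rescaling exists because $\rho_0^{-1/2} \hat\rho_{-1} \rho_0^{-1/2}$ is Hermitian with square a positive scalar multiple of $I$), and take $U_1 := \im(\rho_0 + \hat\rho_{-1})$ and $U_2 := \im(\rho_0 - \hat\rho_{-1})$.

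The decision algorithm is then: (i) form $M$; (ii) compute the unique fixed state $\rho_0$ from $\ker(M - I)$ normalized to unit trace; (iii) test whether $\det(M + I) = 0$, returning ``no'' if not; (iv) else compute a Hermitian $\hat\rho_{-1} \in \ker(M + I)$, find the correct positive rescaling $t$ as the smallest positive real root of the univariate rational polynomial $\det(\rho_0 + t \hat\rho_{-1})$, and output $U_1, U_2$ as above. Each step reduces to rational linear algebra together with root isolation for a univariate polynomial over $\Q$, and is polynomial time in the exact model of Section~\ref{subsec:algo_model}, incurring at most a quadratic extension when $t$ is irrational. The main obstacle will be justifying the converse direction rigorously: I must rule out that $-1 \in \sigma(M)$ arises without producing an isotropic decomposition, which requires the classification of peripheral eigenvectors as $\rho_0^{1/2} V \rho_0^{1/2}$ for a Hermitian unitary $V$ with $V^2 = I$. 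This classification simultaneously guarantees orthogonality of $U_1, U_2$ in the standard inner product and isotropy for every individual Kraus operator, not merely for the averaged channel~$\tilde\sB$; this is the technical content we borrow from~\cite{guan2018decomposition}.
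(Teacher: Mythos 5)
Your proposal is correct, and it is closely related to but genuinely more streamlined than the paper's argument. The paper proves a lemma (Lemma~\ref{lem:iso_period}) that an isotropic $2$-decomposition exists iff the period of the irreducible channel is even; cites \cite[Lemma~13]{guan2018decomposition} that the period equals the number of magnitude-one eigenvalues of $\tilde\sB$; and then computes all $n^2$ eigenvalues of the matrix $M=\sum_i B_i\otimes B_i^*$, decides which have unit modulus, and checks the count for parity. Your observation that, by the cyclic Perron--Frobenius structure of irreducible channels (the peripheral spectrum is the full group of $d$-th roots of unity, with $d$ the period), evenness of the period is equivalent to $-1\in\sigma(M)$ replaces that whole counting step with the single rational test $\det(M+I)=0$, which is cleaner and sidesteps the need to isolate roots and decide exact magnitude. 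Your forward direction also differs: instead of deducing $P_2B=BP_1$ and invoking \cite[Lemma~4.2]{fagnola2009irreducible}, you directly exhibit the Hermitian $(-1)$-eigenvector $\hat\rho_{-1}=\rho_{0,1}-\rho_{0,2}$ from the block-diagonalizing of the unique fixed state, which is a self-contained and slightly more transparent argument. Both routes rely on the same structural theory of irreducible channels for the converse direction; your version additionally gives an explicit construction of $(U_1,U_2)$ from $\hat\rho_{-1}$, though of course the theorem only asks for a decision, so your steps (i)--(iii) already suffice.
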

\begin{proof}
The key observation is to characterize isotropic 
$2$-decompositions using  the \emph{periodicity} of 
irreducible quantum channels.
\begin{definition}[\cite{fagnola2009irreducible}]
Given $\sB\in \IQC(n, \C)$, the \emph{period} of $\sB$ is the maximum integer $m$ 
for 
which there exists an orthogonal direct sum decomposition $\C^n=U_1\oplus 
\dots\oplus U_m$ such that for any $i\in[m]$, and any $B\in \sB$, we have 
$B(U_{i\boxminus 1})\leq U_i$, where $\boxminus$ indicates subtraction modulo $m$ 
in the range of $[m]$. 
\end{definition}
%\begin{definition}[\cite{fagnola2009irreducible}]
%  Let $\e=\{E_{i}\in M(n,\C)\}_{i}$ be an irreducible  quantum channel. The 
%period 
%  of $\e$ 
%  is  
%the maximum integer $m$ for which there exists a set of mutually orthogonal 
%projectors $\{P_{j}\}_{j=0}^{m-1}$ such that $\sum_{j}P_{j}=I$ and for any $i$ 
%and 
%$j$, $P_{j}E_i=E_{i}P_{j\boxminus1}$, $\boxminus$ denotes subtraction modulo $m$.
%\end{definition}

The following lemma relates isotropic $2$-decompositions with periodicity.
\begin{lemma}\label{lem:iso_period}
Given $\sB\in\IQC(n, \C)$, $\sB$ admits an isotropic $2$-decomposition if and only 
if the period of $\sB$ is $2k$ for some integer $k$.
\end{lemma}
\begin{proof}
For the if direction, let $\C^n=U_1\oplus U_2\oplus \dots \oplus U_{2k}$ be the 
orthogonal direct sum decomposition corresponding to the period of $\sB$. Let 
$V_1=\langle U_i : i=2j-1, j\in[k]\rangle$, and $V_2=\langle U_i : i=2j, 
j\in[k]\rangle$. Then $V_1\oplus V_2$ is an orthogonal direct sum decomposition, 
and for any $B\in \sB$, $B(V_1)\leq V_2$, and $B(V_2)\leq V_1$. By the orthogonal 
condition, $v_1^\dagger v_2=0$ for any $v_1\in V_1, v_2\in V_2$. We then have for 
any 
$i=1, 2$, any $v_i, v_i'\in V_i$, and any $B\in \sB$, we have $v_i^\dagger 
Bv_i'=0$. That 
is, $V_1$ and $V_2$ are isotropic spaces. 

For the only if direction, let $\C^n=V_1\oplus V_2$ be an isotropic 
$2$-decomposition. Let $P_1$ be the projection into $V_1$ along $V_2$, and $P_2$ 
the projection into $V_2$ along $V_1$. We have $P_1+P_2=I$, and $P_i^\dagger=P_i$. 
Since $V_1$ and $V_2$ are isotropic spaces, for any $B\in \sB$, and any $i=1, 2$, 
$P_iBP_i=\zerovec$. Using $P_1+P_2=I$, it follows that $P_2B=BP_1$, and 
$P_1B=BP_2$. We are then in the position to apply \cite[Lemma 
4.2]{fagnola2009irreducible}, to conclude that the period of $\sB$ is $2k$ for 
some integer $k$.
\end{proof}
%\begin{proposition}
%  Let $\e=\{E_{i}\leq M(n,\C)\}_{i}$ be an irreducible  quantum channel. $\e$ 
%  admits isotropic
%  2-decomposition if and only if the period of $\e$ is $2k$ for some integer $k$.
%\end{proposition}
%\begin{proof}
%   First, we check the if direction. There is a set of mutually 
%orthogonal projectors $\{P_{j}\}_{j=0}^{2k-1}$ such that $\sum_{j}P_{j}=I$,  and 
%for any $i$ and $j$,  $P_{j}E_{i}=E_iP_{j\boxminus1}$. Then we have that 
%$(\sum_{j=0}^{k-1}P_{2j+1})E_{i}=E_{i}(\sum_{j=0}^{k-1}P_{2j})$  for all $i$. 
%Therefore, $\e$ admits  an isotropic $2$-decomposition.
%
%Now we prove the only if direction. There exists two mutually orthogonal 
%projectors $\{V_{0}, V_1\}$ such that $V_{0}+V_1=I$ and 
%$V_0E_{i}V_{0}=V_{1}E_{i}V_{1}=0$ $\forall i$. Furthermore, we obtain that 
%$V_{0}E_{i}=E_{i}V_{1}$ and $V_{1}E_{i}=E_{i}V_{0}$ for all $i$. Following 
%\cite[Lemma 4.2]{fagnola2009irreducible}, we finish the proof by 2 dividing $p$, 
%where $p$ is 
%the period of $\e$.
%\end{proof} 

Given Lemma~\ref{lem:iso_period}, it is enough to compute the period of $\sB$, and 
this can be done by resorting to the algorithm in \cite{guan2018decomposition}. 
For completeness, we give a brief sketch of the idea. 
 By Lemma 13 of \cite{guan2018decomposition},
 the period of irreducible quantum channel is equivalent to be the
number of eigenvalues with magnitude one of the quantum channel. Using the 
terminologies in the present article, we have the following lemma.
\begin{lemma}[{\cite[Lemma 13]{guan2018decomposition}}]  Given $\sB\in \IQC(n, 
\C)$, the 
period of $\sB$ is equal to
  the number of eigenvalues of $\tilde\sB$ with magnitude one.
\end{lemma}
%\begin{proposition}[\cite{guan2018decomposition}]  Let $\e=\{E_{i}\}_{i}$ be an 
%irreducible channel on $\cH$. The period of $\e$ is 
%  the number of eigenvalues of $\e$ with magnitude one.
%\end{proposition}
Given this lemma, we can explicitly write out the form of $\tilde\sB$ as 
an $n^2\times n^2$ matrix, and compute its eigenvalues using e.g. \cite{Cai94} in 
the exact model (Section~\ref{subsec:algo_model}). 
Therefore, each eigenvalue $\alpha$ is represented by an irreducible polynomial 
$f(x)$ and a separating rectangle in the complex plane. To decide whether $\alpha$ 
has 
magnitude $1$ can be done efficiently by resorting to techniques from 
\cite{lovasz1986algorithmic}.
%This 
%is a standard linear 
%algebraic computation which can be performed in polynomially many arithmetic 
%steps. 
\end{proof}

Finally, we remark that the investigation in this subsection is not completely 
satisfactory. It would be more satisfying to consider isotropic spaces and 
isotropic 
decompositions for arbitrary quantum channels, not just the irreducible ones. We 
adopt the current strategy, partly because for irreducible channels, the 
periodicity is well-studied and well-connected with isotropic $2$-decomposition. 
We leave it a future work to study isotropic spaces 
and decompositions in the general setting. 
%\begin{remark}
%If $\sB$ consists of matrices over integers, 
%
%The above reasoning works for 
%\end{remark}
\subsection{Quantum gate subspace-fidelity and isotropic spaces}
We provide one quantum information theoretic interpretation for 
isotropic spaces, by relating it to quantum gate (state) fidelity~\cite[Section 
9]{nielsen2002quantum} and noiseless subspaces in quantum error 
correction~\cite{KL97,lidar2012review}. For 
the sake of readers who have little quantum information knowledge, we shall 
proceed by introducing all the necessary notions from quantum information, even 
though most of them are standard. 

In quantum information theory, the fidelity is a measure of the ``closeness'' of 
two quantum states, generalizing the fidelity of two 
distributions over finite events. It expresses the probability that one state will 
pass a test (quantum measurements) to identify as the other. Formally, the 
fidelity of two quantum states $\rho,\sigma\in \D(n, \C)$ is defined by
$$F(\rho,\sigma)=[\tr(\sqrt{\sqrt{\rho}\sigma \sqrt{\rho}})]^2.$$
It is worth noting that $0\leq F(\rho,\sigma)\leq 1$. Furthermore,
\begin{itemize}
	\item $F(\rho,\sigma)=0$ if and only if $\rho$ and $\sigma$ are orthogonal, i.e., $\tr(\rho\sigma)=0$;
	\item  $F(\rho,\sigma)=1$ if and only if $\rho=\sigma$.
\end{itemize}

Quantum state fidelity induces quantum gate fidelity. Unitary channels (i.e. 
channels of the form $\tilde{V}(A)=VAV^\dagger$ for some unitary matrix $V\in 
M(n,\C)$) are exactly the 
channels that do not introduce mixedness (i.e., decoherence) into states. 
Therefore,  
in experimental settings, they are considered to be the ideal type of channels  to 
be 
implemented~\cite[Section 8]{nielsen2002quantum}. However, no implementation of a 
channel is perfect, as there is no closed (isolated) system, so environment errors 
are unavoidable, which cause the channel actually implemented to be not
unitary. 
The gate fidelity is a tool for comparing how well the implemented quantum channel 
$\tilde{\sB}$ approximates the desired unitary channel $\tilde{V}$. Specifically, 
the gate fidelity on a pure state ($uu^\dagger$ for a normalized vector $u\in 
\C^n$) is a function defined as follows:
$$F_{\tilde{\sB},\tilde{V}}(u)=F(\tilde{\sB}(uu^\dagger),Vuu^\dagger V^\dagger)=u^\dagger V^\dagger\tilde{\sB}(uu^\dagger)Vu=u^\dagger[\tilde{V}^\dagger\circ\tilde{\sB}(uu^\dagger)]u,$$
where $\tilde{V}^\dagger(A)=V^\dagger AV$. In particular,  
$F_{\tilde{\sB},\tilde{V}}(u)=F_{\tilde{V}^
\dagger\circ \tilde{\sB},\tilde{I}}(u)$, where $\tilde{I}$ is the identity channel.
Then the gate fidelity on all states is defined as follows:
$$F(\tilde{\sB})=\min_{u\in \C^n} F_{\tilde{\sB},\tilde{I}}(u)=\min_{\rho\in D(n,\C)}F_{\tilde{\sB},\tilde{I}}(\rho)$$
The second equation in the above follows from the joint concavity property of the 
state fidelity $F$ (see \cite[Equation 9.121]{nielsen2002quantum}).

As we can see, quantum gate fidelity is a global property over $\C^n$. But in some 
cases, we only need a subspace of $\C^n$ as the state space of quantum information 
processing. This consideration motivates the following notions, which we 
call them quantum gate maximum and minimum subspace-fidelities, respectively.
For a subspace $U\subseteq\C^n$, 
$$F_{U}^{min}(\tilde{\sB})=\min_{u\in U}F_{\tilde{\sB},\tilde{I}}(u), \qquad 
\text{ and } 
F_{U}^{max}(\tilde{\sB})=\max_{u\in U}F_{\tilde{\sB},\tilde{I}}(u). $$
Note that
$F_{U}^{min}(\tilde{\sB})$ and $F_{U}^{max}(\tilde{\sB})$ quantify the worst-case 
and best-case behavior of the system by minimizing and maximizing over all 
possible initial states, respectively. Obviously, $0\leq 
F_{U}^{min}(\tilde{\sB})\leq F_{U}^{max}(\tilde{\sB})\leq1$ as $0\leq 
F(\rho,\sigma)\leq 1$. We also have 
$F(\tilde{\sB})=\min_{U\subseteq\C^n} 
F^{min}_U(\tilde{\sB})=\min_{U\subseteq\C^n} F^{max}_U(\tilde{\sB})$, where the 
second equation follows by examining one-dimensional subspaces.
%\ynote{Is the second equation clear?}

One key notion in quantum error 
correction is that of noiseless subspaces, which have been intensively discussed 
in the setting where $\tilde{\sB}$ as a noise 
model~\cite{KL97,lidar2012review}. Intuitively, noiseless subspaces are shelters 
under quantum noise 
$\tilde{\sB}$, as they perfectly preserve quantum states under $\tilde{\sB}$. 
\begin{definition}
Let $\sB=\{B_1, \dots, B_m\}\in\QC(n, 
\C)$. An noiseless subspace of $\sB$ is a non-zero subspace $U\leq \C^n$, such that for any 
$u\in U$, and any $B_i$, we have $ B_iu=u$. %\ynote{$\lambda u$?} 
\end{definition}
%From the above definition, 

%Now, let us consider a subspace $U$ satisfying $F_U^{min}(\tilde{\sB})=0$ and  
%$F_U^{max}(\tilde{\sB})=1$. 
%Let us now restrict $\sB$ to be irreducible, i.e., $\sB\in \IQC(n,\C)$. In this 
%setting, 

The following result formally shows that noiseless subspaces and isotropic 
subspaces are totally opposite from the viewpoint of the two quantum gate 
subspace-fidelities.

\begin{proposition}\label{prop:quantum}
Let $\sB=\{B_1, \dots, B_m\}\in\QC(n, 
\C)$.
\begin{itemize}
	\item $F_U^{max}(\tilde{\sB})=0$ if and only if $U$ is an isotropic space;
	\item $F_U^{max}(\tilde{\sB})=1$ if and only if there is a noiseless subspace in $U$;
	\item $F_U^{min}(\tilde{\sB})=1$ if and only if $U$ is a noiseless subspace;
	\item $F_U^{min}(\tilde{\sB})=0$ if and only if there is an isotropic space in $U$.
\end{itemize}
\end{proposition}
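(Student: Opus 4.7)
The plan is to derive all four equivalences from one explicit formula for the gate fidelity on pure states. For a unit vector $u\in\C^n$, expanding $F_{\tilde\sB,\tilde{I}}(u)$ yields
$$
F_{\tilde\sB,\tilde{I}}(u) \;=\; u^\dagger \tilde\sB(uu^\dagger)\,u \;=\; \sum_{i=1}^{m} |u^\dagger B_i u|^2,
$$
while the Cauchy--Schwarz inequality $|u^\dagger B_i u|\le \|B_iu\|$ combined with the trace-preservation identity $\sum_i \|B_iu\|^2 = u^\dagger(\sum_i B_i^\dagger B_i)u = 1$ supplies the uniform bound $0\le F_{\tilde\sB,\tilde{I}}(u)\le 1$. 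All four statements would then reduce to analyzing when this sum, extremized over unit vectors of $U$, attains its endpoints $0$ and $1$.

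For the two ``$=0$'' items, the displayed formula gives $F_{\tilde\sB,\tilde{I}}(u)=0$ iff $u^\dagger B_i u=0$ for every $i$. Item 4 would then be immediate: $F_U^{min}(\tilde\sB)=0$ iff some unit $u\in U$ satisfies this condition, iff $\langle u\rangle\subseteq U$ is a one-dimensional isotropic subspace. Item 1 needs a small extra step, since $F_U^{max}(\tilde\sB)=0$ only supplies the diagonal vanishing $u^\dagger B_i u=0$ for all $u\in U$, whereas the definition of an isotropic space demands the off-diagonal vanishing $u^\dagger B_i u'=0$ for all $u,u'\in U$. I would close this gap using the polarization identity over $\C$: expanding $(u+u')^\dagger B_i(u+u')$ and $(u+iu')^\dagger B_i(u+iu')$ and applying the diagonal hypothesis lets one solve linearly for $u^\dagger B_i u'$.

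For the two ``$=1$'' items, the equality case of Cauchy--Schwarz is the key device. Tightness $\sum_i |u^\dagger B_iu|^2 = \sum_i \|B_iu\|^2 = 1$ at a unit $u$ forces $|u^\dagger B_iu|=\|B_iu\|$ for every $i$, hence $B_iu=\mu_iu$ for scalars $\mu_i$ with $\sum_i|\mu_i|^2=1$; a direct expansion then gives $\tilde\sB(uu^\dagger)=(\sum_i|\mu_i|^2)\,uu^\dagger=uu^\dagger$, so $\langle u\rangle$ is a noiseless subspace, and item 2 would follow in both directions. For item 3, $F_U^{min}(\tilde\sB)=1$ makes every unit $u\in U$ a simultaneous eigenvector of all $B_i$, and the main (still small) obstacle is upgrading this pointwise condition to the statement that each $B_i$ acts as a single scalar on all of $U$; I would handle this by the elementary observation that eigenvectors with distinct eigenvalues cannot span a subspace of common eigenvectors, forcing $B_i|_U = \lambda_i I_U$ with $\sum_i|\lambda_i|^2=1$ and hence that $U$ itself is noiseless. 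Beyond this quantifier-flip, the polarization identity and the Cauchy--Schwarz equality case are standard, so I anticipate no further difficulty.
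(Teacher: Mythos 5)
Your derivation is mathematically sound in its core steps and supplies exactly the content that the paper's one-line proof (``directly from the definitions'') elides: the expansion $F_{\tilde{\sB},\tilde{I}}(u)=\sum_i|u^\dagger B_iu|^2$, the Cauchy--Schwarz bound against $\sum_i\|B_iu\|^2=u^\dagger\bigl(\sum_iB_i^\dagger B_i\bigr)u=1$, the sesquilinear polarization step needed in item 1 to pass from vanishing of the diagonal values $u^\dagger B_iu$ to the full vanishing $u^\dagger B_iu'$ demanded by the definition of isotropic space, and the common-scalar upgrade in item 3 are all the right tools and are executed correctly.

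There is, however, a real mismatch with the paper's stated definition of a noiseless subspace in items 2 and 3 that you should flag rather than silently absorb. The paper's definition reads $B_iu=u$ for all $u\in U$ and all $i$. Your Cauchy--Schwarz equality analysis delivers $B_iu=\mu_iu$ with $\sum_i|\mu_i|^2=1$, equivalently $\tilde{\sB}(uu^\dagger)=uu^\dagger$, and this does \emph{not} force $\mu_i=1$: a single Kraus operator $B_1=\mathrm{diag}(i,1)$ acting on $u=e_1$ already gives $F=1$ while $B_1u\neq u$. Worse, under the paper's literal definition, plugging $B_iu=u$ into the identity $\sum_i\|B_iu\|^2=1$ forces $m=1$, so for $m>1$ no noiseless subspace can exist at all and items 2 and 3 become false as stated. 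The reading you implicitly adopt, namely $B_i|_U=\mu_iI_U$ with $\sum_i|\mu_i|^2=1$ (equivalently, $\tilde{\sB}$ fixes every pure state supported in $U$), is the standard one and the one under which the proposition is true; your argument is complete under that corrected definition, but the step ``so $\langle u\rangle$ is a noiseless subspace'' does not follow from the paper's own definition, and the correction should be made explicit.
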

\begin{proof}
	These four claims are directly from the definitions of isotropic spaces, noiseless subspaces, quantum gate minimum subspace-fidelity and maximum subspace-fidelity.
\end{proof}

Knill devised an efficient algorithm to find all noiseless 
subspaces for a given $\tilde{\sB}$~\cite{knill2006protected}. So we have a quite 
good  
understanding on 
$F_U^{min}(\tilde{\sB})=1$. On the other hand, isotropic subspaces fully 
characterize  $F_U^{max}(\tilde{\sB})=0$. Therefore, isotropic 
spaces 
reveal the structure of the worst-case behavior of the channel.   

Let us further point out another potential application of isotropic spaces  in 
quantum control. %We note that, if $U$ is an isotropic space of 
%$\sB$, then the 
%corresponding quantum channel $\tilde\sB$ completely 
%transfers quantum information (quantum states) residing in the space into the orthogonal complement.
% as in the 
%orthogonal decomposition $\C^n=U\oplus \bar{U}$,
% $$\forall \rho=\left[\begin{matrix}
% 	\rho_U& 0\\
% 	0& 0
% \end{matrix}\right]\ \
%P\tilde{\sB}(\rho)P=0$$
%where $P$ is the projection onto $U$ along $\bar{U}$. 
%This is opposite to 
%noiseless subsystems and 
%decoherence-free subspaces~\cite{lidar2012review} which perfectly preserve quantum 
%information. So for saving resources, under some quantum noise modelled by $\sB$,  
%we should avoid 
%initializing quantum states in isotropic spaces in quantum computation. On the 
%other hand, 
%These spaces may be useful in quantum 
%control setting.
A basic task of controlling quantum systems is to transfer all 
unknown quantum states into some targeting 
subspace~\cite{ticozzi2008quantum,cirillo2015decompositions}. So designing a 
control scheme as a quantum channel with a 
non-trivial isotropic space (the dimension greater than 1) can turn all quantum 
states residing in the isotropic space into the orthogonal complement of it.

\appendix

%\section{Some discussions}

\section{Breadth-first search in the alternating matrix space 
setting}\label{app:breadth}
%\ynote{Put this to appendix.} 

Indeed, suppose $\cA\leq 
\Lambda(n, \F)$ admits an isotropic $2$-decomposition as $\F^n=U_1\oplus U_2$. 
Note that $U_1$ and $U_2$ are not known to us. To 
follow the idea of breadth-first search, we would start from a vector $v\in\F^n$, 
and then find its neighbours, and then its neighbours' neighbours, etc.. 
Intuitively, for $v\in \F^n$, we can view the linear span of $Av$, $A\in \cA$ as 
those neighbours of $v$, denoted by $V_1\leq \F^n$. Then the linear span of 
$AV_1$, $A\in \cA$, may be considered as the neighbours of $V_1$. Continuing this 
way, if $v\in U_1$, we do see that $V_i$'s alternates between subspaces of $U_1$ 
and $U_2$. It follows that $V_i\cap V_{i+1}=\mathbf{0}$, from which we can compute 
$U_1$ and $U_2$ after this sequence stabilizes. However, if $v$ is neither in 
$U_1$ nor in $U_2$, it is not clear how to read any information about $U_1$ and 
$U_2$. In fact, it is possible that the linear span of $Av$ is the hyperplane 
orthogonal to $v$, so 
it is impossible to tell whether such $U_1$ and $U_2$ exist.

%\section{On the definition of non-commutative ranks}\label{app:ncrk}
%
%Let $\cB\leq \M(n, \F)$. In \cite{GGOW16,IQS17}, the non-commutative rank of 
%$\cB$ 
%is 
%defined as $\ncrk(\cB):=n-\max\{\dim(V)-\dim(\cB(V)) : V\leq \F^n\}$. In 
%Section~\ref{sec:maximum_bipartite}, it is defined as 
%$\widetilde\ncrk(\cB):=2n-\max\{\dim(U)+\dim(V) : 
%U, V\leq \F^n, \forall u\in U, v\in V, B\in B, u^tBv=0\}$. For $V\leq \F^n$, 
%let 
%$V^\perp$ be the space orthogonal to 
%$V$. To see the equivalence 
%of these two definitions, we note that $\widetilde\ncrk(\cB)$ can be defined 
%alternatively as $2n-\max\{\dim(U)+\dim(V) : V\leq \F^n, U=\cB(V)^\perp\}$. Then 
%we 
%show that $\ncrk(\cB)=\widetilde\ncrk(\cB)$. 
%%To see $\ncrk(\cB)\leq \widetilde\ncrk(\cB)$, 
%Since $\dim(\cB(V)^\perp)=n-\dim(\cB(V))$, we have 
%$\dim(V)-\dim(\cB(V))=\dim(V)+\dim(\cB(V)^\perp)-n$. Therefore 
%$\ncrk(\cB)=n-\max\{\dim(V)-\dim(\cB(V)) : V\leq 
%\F^n\}=2n-\max\{\dim(V)+\dim(\cB(V)^\perp) : V\leq \F^n\}=\widetilde\ncrk(\cB)$.

\section{The relation between alternating bilinear maps and alternating matrix 
spaces}\label{app:spaces_maps}

We first recall the relation between alternating bilinear maps and alternating 
matrix tuples. Let $\phi:U\times U\to V$ be an alternating bilinear map, that is, 
for any $u\in U$, $\phi(u, u)=\zerovec$. Fix bases of $U$ and $V$, so that $U\cong 
\F^n$ and $V\cong \F^m$. Then $\phi$ can be represented by an $m$-tuple of 
alternating matrices $(A_1, \dots, A_m)\in \Lambda(n, \F)^m$, such that $\phi(u, 
u')=(u^tA_1u', \dots, u^tA_mu')^t$. Conversely, given an $m$-tuple of alternating 
matrices, one can define an alternating bilinear map as such. Two alternating 
bilinear maps $\phi, \psi:U\times U\to V$ are isometric, if there exist $A\in 
\GL(U)$, $B\in \GL(V)$, such that $\phi=B\circ \psi\circ A$. (Some authors prefer 
to call this isometric as pseudo-isometric \cite{BW12}.)
%\footnote{This is called pseudo-isometric in \cite{BW12}. Their isometry refers 
%to 
%the existence of $A\in \GL(U)$, such that $\phi=\psi\circ A$. It is just a matter 
%of terminology, though we feel that it is a bit more natural to have both 
%$\GL(U)$ 
%and $\GL(V)$ involved in defining isometry for alternating bilinear maps.}

Let $\cA\leq \Lambda(n, \F)$. Let $\bA=(A_1, \dots, A_m)\in \Lambda(n, \F)^m$ be 
an ordered basis of $\cA$. Then $\bA$ defines an alternating bilinear map 
$\phi_\bA:\F^n\times \F^n\to \F^m$ as above. While difference choices of ordered 
bases give different alternating bilinear maps, it is easy to see that ordered 
bases from isometric alternating matrix spaces give isometric alternating bilinear 
maps.

\paragraph{Acknowledgement.} The authors would like to thank Nengkun Yu, James B. 
Wilson, and G\'abor
Ivanyos for discussions related to this paper. They would also like to thank 
George 
Glauberman and L\'aszl\'o Pyber for answering their questions on groups, including 
pointing out the reference \cite{Ols78} and clarifying the field characteristic 
issue in \cite{BGH87}.

Y. Q. was partly supported by the Australian Research
Council DECRA DE150100720. Y. Q. would like to thank Nengkun Yu, James B. Wilson,
and G\'abor
Ivanyos for discussions related to this paper. Y. Q. would also like to thank 
George 
Glauberman and L\'aszl\'o Pyber for answering his questions on groups, including 
pointing out the reference \cite{Ols78} and clarifying the field characteristic 
issue in \cite{BGH87}. J. 
G. was partly supported by the
National Key R\&D Program of China (Grant No: 2018YFA0306701), the National
Natural Science Foundation of China (Grant No: 61832015). S. C. was partly 
supported by National Natural Science Foundation of China (Grant No:61702489).

\bibliographystyle{alpha}
\bibliography{references}

\newcommand{\etalchar}[1]{$^{#1}$}
\begin{thebibliography}{GGdOW17}

\bibitem[AB95]{AB95}
J.~L. Alperin and R.~B. Bell.
\newblock {\em Groups and representations}.
\newblock Number 162 in Graduate texts in mathematics. Springer, 1995.

\bibitem[Ada62]{Ada62}
J.~F. Adams.
\newblock Vector fields on spheres.
\newblock {\em Annals of Mathematics}, pages 603--632, 1962.

\bibitem[Alb39]{Alb39}
A.~A. Albert.
\newblock {\em Structure of Algebras}.
\newblock Number v. 24 in American Mathematical Society colloquium
  publications. American Mathematical Society, 1939.

\bibitem[ALP65a]{ALP65a}
J.~F. Adams, Peter~D. Lax, and Ralph~S. Phillips.
\newblock On matrices whose real linear combinations are nonsingular.
\newblock {\em Proceedings of the American Mathematical Society},
  16(2):318--322, 1965.

\bibitem[Alp65b]{Alp65}
J.~L. Alperin.
\newblock Large abelian subgroups of p-groups.
\newblock {\em Transactions of the American Mathematical Society}, 117:10--20,
  1965.

\bibitem[AM70]{AM70}
J.~Gary Augustson and Jack Minker.
\newblock An analysis of some graph theoretical cluster techniques.
\newblock {\em Journal of the {ACM}}, 17(4):571--588, 1970.

\bibitem[Ant87]{Ant87}
VA~Antonov.
\newblock Finite groups with a modular lattice of centralizers.
\newblock {\em Algebra and Logic}, 26(6):403--422, 1987.

\bibitem[Bab85]{Bab85}
L{\'{a}}szl{\'{o}} Babai.
\newblock Trading group theory for randomness.
\newblock In Robert Sedgewick, editor, {\em Proceedings of the 17th Annual
  {ACM} Symposium on Theory of Computing, May 6-8, 1985, Providence, Rhode
  Island, {USA}}, pages 421--429. {ACM}, 1985.

\bibitem[Bab16]{Bab16}
L{\'{a}}szl{\'{o}} Babai.
\newblock Graph isomorphism in quasipolynomial time [extended abstract].
\newblock In {\em Proceedings of the 48th Annual {ACM} {SIGACT} Symposium on
  Theory of Computing, {STOC} 2016, Cambridge, MA, USA, June 18-21, 2016},
  pages 684--697, 2016.
\newblock arXiv:1512.03547, version 2.

\bibitem[Bae38]{Bae38}
Reinhold Baer.
\newblock Groups with abelian central quotient group.
\newblock {\em Transactions of the American Mathematical Society},
  44(3):357--386, 1938.

\bibitem[BBS09]{BBS09}
L{\'{a}}szl{\'{o}} Babai, Robert Beals, and {\'{A}}kos Seress.
\newblock Polynomial-time theory of matrix groups.
\newblock In {\em Proceedings of the 41st Annual {ACM} Symposium on Theory of
  Computing, {STOC} 2009, Bethesda, MD, USA, May 31 - June 2, 2009}, pages
  55--64, 2009.

\bibitem[Bea95]{Bea95}
Robert Beals.
\newblock Towards polynomial time algorithms for matrix groups.
\newblock In Larry Finkelstein and William~M. Kantor, editors, {\em Groups and
  Computation, Proceedings of a {DIMACS} Workshop, New Brunswick, New Jersey,
  USA, June 7-10, 1995}, volume~28 of {\em {DIMACS} Series in Discrete
  Mathematics and Theoretical Computer Science}, pages 31--54. {DIMACS/AMS},
  1995.

\bibitem[Ber84]{Ber84}
Stuart~J. Berkowitz.
\newblock On computing the determinant in small parallel time using a small
  number of processors.
\newblock {\em Inf. Process. Lett.}, 18(3):147--150, 1984.

\bibitem[BFM13]{BFM13}
Ada Boralevi, Daniele Faenzi, and Emilia Mezzetti.
\newblock Linear spaces of matrices of constant rank and instanton bundles.
\newblock {\em Advances in Mathematics}, 248:895--920, 2013.

\bibitem[BFS99]{BFS99}
Jonathan~F Buss, Gudmund~S Frandsen, and Jeffrey~O Shallit.
\newblock The computational complexity of some problems of linear algebra.
\newblock {\em Journal of Computer and System Sciences}, 58(3):572--596, 1999.

\bibitem[BGH87]{BGH87}
Joe Buhler, Ranee Gupta, and Joe Harris.
\newblock Isotropic subspaces for skewforms and maximal abelian subgroups of
  $p$-groups.
\newblock {\em Journal of Algebra}, 108(1):269--279, 1987.

\bibitem[BH08]{bjorklund2008exact}
Andreas Bj{\"o}rklund and Thore Husfeldt.
\newblock Exact algorithms for exact satisfiability and number of perfect
  matchings.
\newblock {\em Algorithmica}, 52(2):226--249, 2008.

\bibitem[BHK09]{BHK09}
Andreas Bj{\"o}rklund, Thore Husfeldt, and Mikko Koivisto.
\newblock Set partitioning via inclusion-exclusion.
\newblock {\em SIAM Journal on Computing}, 39(2):546--563, 2009.

\bibitem[BHK{\etalchar{+}}16]{BHK+16}
Andreas Bj{\"{o}}rklund, Thore Husfeldt, Petteri Kaski, Mikko Koivisto, Jesper
  Nederlof, and Pekka Parviainen.
\newblock Fast zeta transforms for lattices with few irreducibles.
\newblock {\em {ACM} Trans. Algorithms}, 12(1):4:1--4:19, 2016.

\bibitem[BMW17]{BMW17}
Peter~A. Brooksbank, Joshua Maglione, and James~B. Wilson.
\newblock A fast isomorphism test for groups whose lie algebra has genus 2.
\newblock {\em Journal of Algebra}, 473:545--590, 2017.

\bibitem[Bol01]{bollobas2001random}
B{\'e}la Bollob{\'a}s.
\newblock {\em Random Graphs}.
\newblock Number~73 in Cambridge Studies in Advanced Mathematics. Cambridge
  University Press, second edition, 2001.

\bibitem[BS84]{BS84}
L{\'{a}}szl{\'{o}} Babai and Endre Szemer{\'{e}}di.
\newblock On the complexity of matrix group problems {I}.
\newblock In {\em 25th Annual Symposium on Foundations of Computer Science,
  West Palm Beach, Florida, USA, 24-26 October 1984}, pages 229--240, 1984.

\bibitem[Bur13]{Bur13}
W.~Burnside.
\newblock On some properties of groups whose orders are powers of primes.
\newblock {\em Proceedings of the London Mathematical Society}, 2(1):225--245,
  1913.

\bibitem[BW12]{BW12}
Peter~A. Brooksbank and James~B. Wilson.
\newblock Computing isometry groups of {Hermitian} maps.
\newblock {\em Trans. Amer. Math. Soc.}, 364:1975--1996, 2012.

\bibitem[Bys04]{byskov2004enumerating}
Jesper~Makholm Byskov.
\newblock Enumerating maximal independent sets with applications to graph
  colouring.
\newblock {\em Operations Research Letters}, 32(6):547--556, 2004.

\bibitem[Cai94]{Cai94}
Jin{-}yi Cai.
\newblock Computing {Jordan} normal forms exactly for commuting matrices in
  polynomial time.
\newblock {\em Int. J. Found. Comput. Sci.}, 5(3/4):293--302, 1994.

\bibitem[Can88]{Can88}
John~F. Canny.
\newblock Some algebraic and geometric computations in {PSPACE}.
\newblock In {\em Proceedings of the 20th Annual {ACM} Symposium on Theory of
  Computing, May 2-4, 1988, Chicago, Illinois, {USA}}, pages 460--467, 1988.

\bibitem[Cat91]{Cat91}
Fabrizio Catanese.
\newblock Moduli and classification of irregular kaehler manifolds (and
  algebraic varieties) with albanese general type fibrations.
\newblock {\em Inventiones mathematicae}, 104(1):263--289, 1991.

\bibitem[Chi85]{Chi85}
Alexander~L. Chistov.
\newblock Fast parallel calculation of the rank of matrices over a field of
  arbitrary characteristic.
\newblock In Lothar Budach, editor, {\em Fundamentals of Computation Theory,
  {FCT} '85, Cottbus, GDR, September 9-13, 1985}, volume 199 of {\em Lecture
  Notes in Computer Science}, pages 63--69. Springer, 1985.

\bibitem[CIKK15]{CIKK15}
Marco Carmosino, Russell Impagliazzo, Valentine Kabanets, and Antonina
  Kolokolova.
\newblock Tighter connections between derandomization and circuit lower bounds.
\newblock In {\em Approximation, Randomization, and Combinatorial Optimization.
  Algorithms and Techniques, {APPROX/RANDOM} 2015, August 24-26, 2015,
  Princeton, NJ, {USA}}, pages 645--658, 2015.

\bibitem[CKSU05]{CKSU05}
Henry Cohn, Robert~D. Kleinberg, Bal{\'{a}}zs Szegedy, and Christopher Umans.
\newblock Group-theoretic algorithms for matrix multiplication.
\newblock In {\em 46th Annual {IEEE} Symposium on Foundations of Computer
  Science {(FOCS} 2005), 23-25 October 2005, Pittsburgh, PA, USA, Proceedings},
  pages 379--388. {IEEE} Computer Society, 2005.

\bibitem[CLO07]{CLO07}
D.~A. Cox, J.~B. Little, and D.~O'Shea.
\newblock {\em Ideals, Varieties and Algorithms}.
\newblock Undergraduate texts in mathematics. Springer, 2007.

\bibitem[Coh75]{Cohn75}
P.~M. Cohn.
\newblock The word problem for free fields: A correction and an addendum.
\newblock {\em J. Symbolic Logic}, 40(1):69--74, 03 1975.

\bibitem[CP07]{Cau07}
Andrea Causin and Gian~Pietro Pirola.
\newblock A note on spaces of symmetric matrices.
\newblock {\em Linear Algebra and Its Applications}, 2(426):533--539, 2007.

\bibitem[CR99]{CR99}
P.~M. Cohn and C.~Reutenauer.
\newblock On the construction of the free field.
\newblock {\em International Journal of Algebra and Computation},
  9(3-4):307--323, 1999.

\bibitem[CT15]{cirillo2015decompositions}
Giuseppe~Ilario Cirillo and Francesco Ticozzi.
\newblock Decompositions of {H}ilbert spaces, stability analysis and
  convergence probabilities for discrete-time quantum dynamical semigroups.
\newblock {\em Journal of Physics A: Mathematical and Theoretical},
  48(8):085302, 2015.

\bibitem[Dav70]{davies1970quantum}
E.~B. Davies.
\newblock Quantum stochastic processes ii.
\newblock {\em Communications in Mathematical Physics}, 19(2):83--105, 1970.

\bibitem[Der01]{derksen_bound}
H.~Derksen.
\newblock Polynomial bounds for rings of invariants.
\newblock {\em Proceedings of the American Mathematical Society},
  129(4):955--964, 2001.

\bibitem[Die17]{diestel}
Reinhard Diestel.
\newblock {\em Graph Theory}.
\newblock Number 173 in Springer Graduate Texts in Mathematics. Springer, 5th
  edition, 2017.

\bibitem[Dim08]{Dim08}
Alexandru Dimca.
\newblock On the isotropic subspace theorems.
\newblock {\em Bulletin math{\'e}matique de la Soci{\'e}t{\'e} des Sciences
  Math{\'e}matiques de Roumanie}, pages 307--324, 2008.

\bibitem[Dix71]{Dix71}
John~D. Dixon.
\newblock Maximal abelian subgroups of the symmetric groups.
\newblock {\em Canadian Journal of Mathematics}, 23(3):426–438, 1971.

\bibitem[DM17]{DM2}
H.~Derksen and V.~Makam.
\newblock Polynomial degree bounds for matrix semi-invariants.
\newblock {\em Advances in Mathematics}, 310:44--63, 2017.

\bibitem[dSP13]{dSP13}
Cl{\'e}ment de~Seguins~Pazzis.
\newblock Large affine spaces of non-singular matrices.
\newblock {\em Transactions of the American Mathematical Society},
  365(5):2569--2596, 2013.

\bibitem[Ebe91]{Ebe91}
Wayne Eberly.
\newblock Decompositions of algebras over {R} and {C}.
\newblock {\em Computational Complexity}, 1:211--234, 1991.

\bibitem[Epp03]{eppstein2003small}
David Eppstein.
\newblock Small maximal independent sets and faster exact graph coloring.
\newblock {\em J. Graph Algorithms Appl.}, 7(2):131--140, 2003.

\bibitem[FGT16]{FGT16}
Stephen~A. Fenner, Rohit Gurjar, and Thomas Thierauf.
\newblock Bipartite perfect matching is in quasi-nc.
\newblock In {\em Proceedings of the 48th Annual {ACM} {SIGACT} Symposium on
  Theory of Computing, {STOC} 2016, Cambridge, MA, USA, June 18-21, 2016},
  pages 754--763, 2016.

\bibitem[FP09]{fagnola2009irreducible}
Franco Fagnola and Rely Pellicer.
\newblock Irreducible and periodic positive maps.
\newblock {\em Commun. Stoch. Anal}, 3(3):407--418, 2009.

\bibitem[FR04]{FR04}
M.~Fortin and C.~Reutenauer.
\newblock Commutative/noncommutative rank of linear matrices and subspaces of
  matrices of low rank.
\newblock {\em S{\'e}minaire Lotharingien de Combinatoire}, 52:B52f, 2004.

\bibitem[Gel17]{Gelbukh}
I.~Gelbukh.
\newblock Isotropy index for the connected sum and the direct product of
  manifolds.
\newblock {\em Publicationes Mathematicae Debrecen}, 90(3-4):287--310, 2017.

\bibitem[GFY18]{guan2018decomposition}
Ji~Guan, Yuan Feng, and Mingsheng Ying.
\newblock Decomposition of quantum markov chains and its applications.
\newblock {\em Journal of Computer and System Sciences}, 95:55--68, 2018.

\bibitem[GG07]{GG07}
Daniel Goldstein and Robert~M. Guralnick.
\newblock Alternating forms and self-adjoint operators.
\newblock {\em Journal of Algebra}, 308(1):330--349, 2007.

\bibitem[GGdOW16]{GGOW16}
Ankit Garg, Leonid Gurvits, Rafael~Mendes de~Oliveira, and Avi Wigderson.
\newblock A deterministic polynomial time algorithm for non-commutative
  rational identity testing.
\newblock In {\em {IEEE} 57th Annual Symposium on Foundations of Computer
  Science, {FOCS} 2016, 9-11 October 2016, Hyatt Regency, New Brunswick, New
  Jersey, {USA}}, pages 109--117, 2016.

\bibitem[GGdOW17]{GGOW17}
Ankit Garg, Leonid Gurvits, Rafael~Mendes de~Oliveira, and Avi Wigderson.
\newblock Algorithmic and optimization aspects of brascamp-lieb inequalities,
  via operator scaling.
\newblock In {\em Proceedings of the 49th Annual {ACM} {SIGACT} Symposium on
  Theory of Computing, {STOC} 2017, Montreal, QC, Canada, June 19-23, 2017},
  pages 397--409, 2017.

\bibitem[GL84]{GL84}
R.~Gow and T.~J. Laffey.
\newblock Pairs of alternating forms and products of two skew-symmetric
  matrices.
\newblock {\em Linear algebra and its applications}, 63:119--132, 1984.

\bibitem[Gur04]{Gurvits}
Leonid Gurvits.
\newblock Classical complexity and quantum entanglement.
\newblock {\em J. Comput. Syst. Sci.}, 69(3):448--484, 2004.

\bibitem[Hal35]{Hal35}
P.~Hall.
\newblock On representatives of subsets.
\newblock {\em Journal of the London Mathematical Society}, 1(1):26--30, 1935.

\bibitem[HW15]{HW15}
Pavel Hrube\v{s} and Avi Wigderson.
\newblock Non-commutative arithmetic circuits with division.
\newblock {\em Theory of Computing}, 11:357--393, 2015.

\bibitem[IKQS15]{IKQS15}
G{\'{a}}bor Ivanyos, Marek Karpinski, Youming Qiao, and Miklos Santha.
\newblock Generalized {W}ong sequences and their applications to {E}dmonds'
  problems.
\newblock {\em J. Comput. Syst. Sci.}, 81(7):1373--1386, 2015.

\bibitem[IL99]{IL99}
Bo~Ilic and J.~M. Landsberg.
\newblock On symmetric degeneracy loci, spaces of symmetric matrices of
  constant rank and dual varieties.
\newblock {\em Mathematische Annalen}, 314(1):159--174, 1999.

\bibitem[IQ19]{IQ18}
G\'abor Ivanyos and Youming Qiao.
\newblock Algorithms based on *-algebras, and their applications to isomorphism
  of polynomials with one secret, group isomorphism, and polynomial identity
  testing.
\newblock {\em SIAM Journal on Computing}, 48(3):926--963, 2019.

\bibitem[IQS17]{IQS16}
G{\'{a}}bor Ivanyos, Youming Qiao, and K.~V. Subrahmanyam.
\newblock Non-commutative {Edmonds'} problem and matrix semi-invariants.
\newblock {\em Computational Complexity}, 26(3):717--763, 2017.

\bibitem[IQS18]{IQS17}
G{\'{a}}bor Ivanyos, Youming Qiao, and K.~V. Subrahmanyam.
\newblock Constructive non-commutative rank computation is in deterministic
  polynomial time.
\newblock {\em Computational Complexity}, 27(4):561--593, 2018.

\bibitem[IR99]{IR99}
G{\'a}bor Ivanyos and Lajos R{\'o}nyai.
\newblock Computations in associative and {Lie} algebras.
\newblock In {\em Some tapas of computer algebra}, pages 91--120. Springer,
  1999.

\bibitem[JPY88]{JPY88}
David~S. Johnson, Christos~H. Papadimitriou, and Mihalis Yannakakis.
\newblock On generating all maximal independent sets.
\newblock {\em Inf. Process. Lett.}, 27(3):119--123, 1988.

\bibitem[JT95]{JT95}
Tommy~R. Jensen and Bjarne Toft.
\newblock {\em Graph coloring problems}, volume~39 of {\em Wiley-Interscience
  series in discrete mathematics and optimization}.
\newblock John Wiley \& Sons, 1995.

\bibitem[KI04]{KI04}
Valentine Kabanets and Russell Impagliazzo.
\newblock Derandomizing polynomial identity tests means proving circuit lower
  bounds.
\newblock {\em Computational Complexity}, 13(1-2):1--46, 2004.

\bibitem[KL97]{KL97}
Emanuel Knill and Raymond Laflamme.
\newblock Theory of quantum error-correcting codes.
\newblock {\em Physical Review A}, 55(2):900, 1997.

\bibitem[Kni06]{knill2006protected}
Emanuel Knill.
\newblock Protected realizations of quantum information.
\newblock {\em Physical Review A}, 74(4):042301, 2006.

\bibitem[Koi96]{Koi96}
Pascal Koiran.
\newblock Hilbert's nullstellensatz is in the polynomial hierarchy.
\newblock {\em J. Complexity}, 12(4):273--286, 1996.

\bibitem[KUW86]{KUW86}
Richard~M. Karp, Eli Upfal, and Avi Wigderson.
\newblock Constructing a perfect matching is in random {NC}.
\newblock {\em Combinatorica}, 6(1):35--48, 1986.

\bibitem[Lan02]{Lang}
Serge Lang.
\newblock {\em Algebra}.
\newblock Number 211 in Graduate Texts in Mathematics. Springer-Verlag, New
  York, third enlarged edition, 2002.

\bibitem[Law76]{Law76}
Eugene~L. Lawler.
\newblock A note on the complexity of the chromatic number problem.
\newblock {\em Inf. Proc. Lett.}, 5:66--67, 1976.

\bibitem[Lid14]{lidar2012review}
Daniel~A Lidar.
\newblock Review of decoherence free subspaces, noiseless subsystems, and
  dynamical decoupling.
\newblock {\em Adv. Chem. Phys.}, 154:295--354, 2014.

\bibitem[LLK80]{LLK80}
Eugene~L. Lawler, Jan~Karel Lenstra, and A.~H. G.~Rinnooy Kan.
\newblock Generating all maximal independent sets: Np-hardness and
  polynomial-time algorithms.
\newblock {\em {SIAM} J. Comput.}, 9(3):558--565, 1980.

\bibitem[Lov79]{Lov79}
L{\'{a}}szl{\'{o}} Lov{\'{a}}sz.
\newblock On determinants, matchings, and random algorithms.
\newblock In {\em {FCT}}, pages 565--574, 1979.

\bibitem[Lov86]{lovasz1986algorithmic}
L\'aszl\'o Lov\'asz.
\newblock {\em An Algorithmic Theory of Numbers, Graphs, and Convexity},
  volume~50 of {\em CBMS-NSF Regional Conference Series in Applied
  Mathematics}.
\newblock SIAM, 1986.

\bibitem[Lov89]{Lov89}
L{\'a}szl{\'o} Lov{\'a}sz.
\newblock Singular spaces of matrices and their application in combinatorics.
\newblock {\em Boletim da Sociedade Brasileira de
  Matem{\'a}tica-Bulletin/Brazilian Mathematical Society}, 20(1):87--99, 1989.

\bibitem[LP17]{levin2017markov}
David~A. Levin and Yuval Peres.
\newblock {\em Markov chains and mixing times}, volume 107.
\newblock American Mathematical Soc., 2017.

\bibitem[LQ17]{LQ17}
Yinan Li and Youming Qiao.
\newblock Linear algebraic analogues of the graph isomorphism problem and the
  {Erd{\H{o}}s-R{\'{e}}nyi} model.
\newblock In {\em 58th {IEEE} Annual Symposium on Foundations of Computer
  Science, {FOCS} 2017, Berkeley, CA, USA, October 15-17, 2017}, pages
  463--474, 2017.
\newblock arXiv:1708.04501, v2.

\bibitem[LSW00]{LSW00}
Nathan Linial, Alex Samorodnitsky, and Avi Wigderson.
\newblock A deterministic strongly polynomial algorithm for matrix scaling and
  approximate permanents.
\newblock {\em Combinatorica}, 20(4):545--568, 2000.

\bibitem[LW12]{LW12}
Mark~L. Lewis and James~B. Wilson.
\newblock Isomorphism in expanding families of indistinguishable groups.
\newblock {\em Groups - Complexity - Cryptology}, 4(1):73–110, 2012.

\bibitem[LY93]{LY93}
Kee~Yuen Lam and Paul Yiu.
\newblock Linear spaces of real matrices of constant rank.
\newblock {\em Linear algebra and its applications}, 195:69--79, 1993.

\bibitem[Mes90]{Mes90}
Roy Meshulam.
\newblock On k-spaces of real matrices.
\newblock {\em Linear and multilinear algebra}, 26(1-2):39--41, 1990.

\bibitem[Mil78]{Mil78}
Gary~L. Miller.
\newblock On the {$n \log n$} isomorphism technique (a preliminary report).
\newblock In {\em STOC}, pages 51--58, New York, NY, USA, 1978. ACM.

\bibitem[MM65]{MM65}
John~W. Moon and Leo Moser.
\newblock On cliques in graphs.
\newblock {\em Israel Journal of Mathematics}, 3(1):23--28, 1965.

\bibitem[Mul12]{Mul12}
Ketan Mulmuley.
\newblock Geometric complexity theory {V:} equivalence between blackbox
  derandomization of polynomial identity testing and derandomization of
  noether's normalization lemma.
\newblock In {\em 53rd Annual {IEEE} Symposium on Foundations of Computer
  Science, {FOCS} 2012, New Brunswick, NJ, USA, October 20-23, 2012}, pages
  629--638, 2012.

\bibitem[Mul17]{Mul17}
Ketan Mulmuley.
\newblock Geometric complexity theory {V}: Efficient algorithms for {Noether}
  normalization.
\newblock {\em Journal of the American Mathematical Society}, 30(1):225--309,
  2017.

\bibitem[MVV87]{MVV87}
Ketan Mulmuley, Umesh~V. Vazirani, and Vijay~V. Vazirani.
\newblock Matching is as easy as matrix inversion.
\newblock {\em Combinatorica}, 7(1):105--113, 1987.

\bibitem[NC02]{nielsen2002quantum}
Michael~A Nielsen and Isaac Chuang.
\newblock Quantum computation and quantum information, 2002.

\bibitem[Ol'78]{Ols78}
A.~Yu Ol'shanskii.
\newblock The number of generators and orders of abelian subgroups of finite
  p-groups.
\newblock {\em Mathematical notes of the Academy of Sciences of the USSR},
  23(3):183--185, 1978.

\bibitem[Qia19]{Qia18}
Youming Qiao.
\newblock Matrix spaces as a linear algebraic analogue of graphs.
\newblock Under preparation, 2019.

\bibitem[Ren92]{Ren92}
James Renegar.
\newblock On the computational complexity and geometry of the first-order
  theory of the reals, part {I:} introduction. preliminaries. the geometry of
  semi-algebraic sets. the decision problem for the existential theory of the
  reals.
\newblock {\em J. Symb. Comput.}, 13(3):255--300, 1992.

\bibitem[R{\'{o}}n87]{Ron87}
Lajos R{\'{o}}nyai.
\newblock Simple algebras are difficult.
\newblock In {\em Proceedings of the 19th Annual {ACM} Symposium on Theory of
  Computing, 1987, New York, New York, {USA}}, pages 398--408, 1987.

\bibitem[R{\'o}n90]{Ron90}
Lajos R{\'o}nyai.
\newblock Computing the structure of finite algebras.
\newblock {\em Journal of Symbolic Computation}, 9(3):355--373, 1990.

\bibitem[R{\'{o}}n94]{Ron94}
Lajos R{\'{o}}nyai.
\newblock A deterministic method for computing splitting elements in simple
  algebras over {Q}.
\newblock {\em J. Algorithms}, 16(1):24--32, 1994.

\bibitem[Ros11]{Ros11}
Tobias Rossmann.
\newblock {\em Algorithms for Nilpotent Linear Groups}.
\newblock PhD thesis, National University of Ireland, Galway, 2011.

\bibitem[Sch76]{Sch76}
Rudolf Scharlau.
\newblock Pairs of alternating forms.
\newblock {\em Mathematische Zeitschrift}, 147(1):13--19, 1976.

\bibitem[Sch80]{Sch80}
Jacob~T. Schwartz.
\newblock Fast probabilistic algorithms for verification of polynomial
  identities.
\newblock {\em J. {ACM}}, 27(4):701--717, 1980.

\bibitem[She11]{She11}
John Sheekey.
\newblock {\em On rank problems for subspaces of matrices over finite fields}.
\newblock PhD thesis, University College Dublin, 2011.

\bibitem[Sin64]{Sin64}
Richard Sinkhorn.
\newblock A relationship between arbitrary positive matrices and doubly
  stochastic matrices.
\newblock {\em The annals of mathematical statistics}, 35(2):876--879, 1964.

\bibitem[ST17]{ST17}
Ola Svensson and Jakub Tarnawski.
\newblock The matching problem in general graphs is in quasi-nc.
\newblock In Chris Umans, editor, {\em 58th {IEEE} Annual Symposium on
  Foundations of Computer Science, {FOCS} 2017, Berkeley, CA, USA, October
  15-17, 2017}, pages 696--707. {IEEE} Computer Society, 2017.

\bibitem[Syl86]{Syl86}
John Sylvester.
\newblock On the dimension of spaces of linear transformations satisfying rank
  conditions.
\newblock {\em Linear Algebra and its Applications}, 78:1--10, 1986.

\bibitem[TIAS77]{TIAS77}
Shuji Tsukiyama, Mikio Ide, Hiromu Ariyoshi, and Isao Shirakawa.
\newblock A new algorithm for generating all the maximal independent sets.
\newblock {\em SIAM Journal on Computing}, 6(3):505--517, 1977.

\bibitem[Tur41]{Tur41}
Paul Tur{\'a}n.
\newblock Egy gr\'afelm\'eleti sz\'els{\H{o}}\'ert\'ekfeladatr\'ol (on an
  extremal problem in graph theory).
\newblock {\em Mat. Fiz. Lapok}, 48:436--452, 1941.

\bibitem[Tut47]{Tut47}
William~T. Tutte.
\newblock The factorization of linear graphs.
\newblock {\em Journal of the London Mathematical Society}, 1(2):107--111,
  1947.

\bibitem[TV08]{ticozzi2008quantum}
Francesco Ticozzi and Lorenza Viola.
\newblock Quantum markovian subsystems: invariance, attractivity, and control.
\newblock {\em IEEE Transactions on Automatic Control}, 53(9):2048--2063, 2008.

\bibitem[Val79]{Val79}
Leslie~G. Valiant.
\newblock Completeness classes in algebra.
\newblock In {\em Proceedings of the 11h Annual {ACM} Symposium on Theory of
  Computing, April 30 - May 2, 1979, Atlanta, Georgia, {USA}}, pages 249--261,
  1979.

\bibitem[Vat11]{Vat11}
Vincent Vatter.
\newblock Maximal independent sets and separating covers.
\newblock {\em The American Mathematical Monthly}, 118(5):418--423, 2011.

\bibitem[Vdo04]{Vdo04}
EP~Vdovin.
\newblock The number of subgroups with trivial unipotent radicals in finite
  groups of lie type.
\newblock {\em Journal of Group Theory}, 7(1):99--112, 2004.

\bibitem[vdW05]{Woe05}
Christiaan van~de Woestijne.
\newblock Deterministic equation solving over finite fields.
\newblock In {\em Proceedings of the 2005 international symposium on Symbolic
  and algebraic computation}, pages 348--353. ACM, 2005.

\bibitem[Wes87]{Wes87}
R.~Westwick.
\newblock Spaces of matrices of fixed rank.
\newblock {\em Linear and Multilinear Algebra}, 20(2):171--174, 1987.

\bibitem[Wil09a]{Wil09}
James~B. Wilson.
\newblock Decomposing $p$-groups via {J}ordan algebras.
\newblock {\em Journal of Algebra}, 322(8):2642--2679, 2009.

\bibitem[Wil09b]{Wil09b}
James~B. Wilson.
\newblock Finding central decompositions of p-groups.
\newblock {\em Journal of Group Theory}, 12(6):813--830, 2009.

\bibitem[Wil13]{Wil13Gram}
James~B. Wilson.
\newblock Optimal algorithms of {Gram--Schmidt} type.
\newblock {\em Linear Algebra and its Applications}, 438(12):4573--4583, 2013.

\bibitem[Wol12]{wolf2012quantum}
Michael~M. Wolf.
\newblock Quantum channels \& operations: Guided tour, May 2012.
\newblock Lecture notes available at
  \url{https://www-m5.ma.tum.de/foswiki/pub/M5/Allgemeines/MichaelWolf/QChannelLecture.pdf}.

\bibitem[Woo89]{Woo89}
Jay~A Wood.
\newblock Spinor groups and algebraic coding theory.
\newblock {\em Journal of Combinatorial Theory, Series A}, 51(2):277--313,
  1989.

\bibitem[Woo11]{Wood11}
David Wood.
\newblock On the number of maximal independent sets in a graph.
\newblock {\em Discrete Mathematics and Theoretical Computer Science},
  13(3):17, 2011.

\bibitem[Zip79]{Zip79}
Richard Zippel.
\newblock Probabilistic algorithms for sparse polynomials.
\newblock In Edward~W. Ng, editor, {\em Symbolic and Algebraic Computation,
  {EUROSAM} '79, An International Symposiumon Symbolic and Algebraic
  Computation, Marseille, France, June 1979, Proceedings}, volume~72 of {\em
  Lecture Notes in Computer Science}, pages 216--226. Springer, 1979.

\end{thebibliography}

\end{document}